\numberwithin{equation}{section}
\newcommand{\bb}[1]{\mathbb{#1}} 
\renewcommand{\c}[1]{\mathcal{#1}}
\newcommand{\f}[1]{\mathfrak{#1}}
\renewcommand{\bf}[1]{\mathbf{#1}}
\newcommand{\PP}{\bb{P}}
\newcommand{\EE}{\bb{E}}
\newcommand{\e}[1]{\mathrm e^{#1}}
\renewcommand{\a}{\alpha}
\renewcommand{\k}{\kappa} 
\newcommand{\Om}{\Omega}
\newcommand{\om}{\omega}
\newcommand{\eps}{\varepsilon}
\newcommand{\oo}{\infty}
\newcommand{\bra}[1]{\langle#1|}
\newcommand{\ket}[1]{|#1\rangle}
\newcommand{\lam}{\lambda}
\newcommand{\sm}{\setminus}
\newcommand{\es}{\varnothing}
\newcommand{\se}{\subseteq}
\newcommand{\ul}{\underline}
\newcommand{\ol}{\overline}
\newcommand{\floor}[1]{\lfloor #1\rfloor}
\newcommand{\RR}{\mathbb{R}}
\newcommand{\ZZ}{\mathbb{Z}}
\newcommand{\Tr}{\mathrm{Tr}}
\newcommand{\one}{\hbox{\rm 1\kern-.27em I}}
\newcommand{\Ind}[1]{\one_{\{#1\}}}
\newcommand{\be}{\begin{equation}}
	\newcommand{\ee}{\end{equation}}
\newcommand{\bes}{\begin{equation*}}
	\newcommand{\ees}{\end{equation*}}
\renewcommand{\r}[1]{\mathrm{#1}}
\newcommand{\dd}{\r d}
\newcommand{\out}{\r{out}}
\newcommand{\ex}{\r{ex}}
\newcommand{\ins}{\r{in}}
\newcommand{\vol}{\r{vol}}
\newcommand{\per}{\r{perim}}
\newcommand{\p}{\r{per}}
\newcommand{\w}{\r{w}}
\newcommand{\cross}{\mathchoice
	{\vcenter{\hbox{\includegraphics[width=.8em]{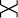}}}}
	{\vcenter{\hbox{\includegraphics[width=.8em]{cross.pdf}}}}
	{\vcenter{\hbox{\includegraphics[width=.6em]{cross.pdf}}}} 
	{\vcenter{\hbox{\includegraphics[width=.5em]{cross.pdf}}}} 
}
\newcommand{\dbar}{\mathchoice
	{\vcenter{\hbox{\includegraphics[width=.8em]{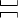}}}}
	{\vcenter{\hbox{\includegraphics[width=.8em]{dbar.pdf}}}}
	{\vcenter{\hbox{\includegraphics[width=.6em]{dbar.pdf}}}} 
	{\vcenter{\hbox{\includegraphics[width=.5em]{dbar.pdf}}}} 
}
\newtheoremstyle{slthm}
{}
{\baselineskip}
{\slshape}
{\parindent}
{\scshape}
{.}
{ }
{}
\theoremstyle{slthm}
\newtheorem{definition}{Definition}[section]
\newtheorem{theorem}[definition]{Theorem}
\newtheorem{proposition}[definition]{Proposition}
\newtheorem{lemma}[definition]{Lemma}
\newtheorem{corollary}[definition]{Corollary}
\newtheorem{remark}[definition]{Remark}
\title[Dimerization]
{Dimerization in $\c O(n)$-invariant quantum spin chains} 
\author{J. E. Bj\"ornberg and K. Ryan}
\date{\today}
\begin{document}
	
	\maketitle

\begin{abstract}
We establish dimerization in $\c O(n)$-invariant quantum spin chains
with big enough $n$, in a large 
part of the phase diagram where this
result is expected.  This includes identifying two distinct 
ground states which are translations of one unit of eachother, and
which  both have exponentially decaying correlations.
Our method relies on a probabilistic
representation of the quantum system in terms of random loops, and an
adaptation of a method developed for loop $\c O(n)$
models on the hexagonal lattice by Duminil-Copin, Peled, Samotij and
Spinka. 
\end{abstract}

	\section{Introduction}
	
	The most general $O(n)$-invariant quantum spin system with pair-interactions has Hamiltonian
	\be\label{eq:ham-general}
	H_\Lambda=-\sum_{xy\in\c E(\Lambda)}
	\big[
	u T_{xy}+v Q_{xy}
	\big],
	\qquad\text{acting on } (\bb C^n)^{\otimes \Lambda},
	\ee
	where $u,v\in\RR$,
	$\Lambda$ is a finite graph,
	$\c E(\Lambda)$ is its set of nearest-neighbour pairs,
	and the interaction involves the operators
	\be
	T\ket{a,b}=\ket{b,a},
	\qquad
	Q=\frac1{n}\sum_{a,b=1}^n \ket{b,b}\bra{a,a},
	\qquad\text{on } (\bb C^n)^{\otimes 2}.
	\ee
	The behaviour of this model varies widely depending on the
        graph $\Lambda$, the parameters $u,v$, and the value of $n$
        (related to the spin $S$ by $2S+1=n$). The model reduces to
        the spin-$\tfrac{1}{2}$ Heisenberg XXZ model when $n=2$, and
        to the spin-1 bilinear-biquadratic Heisenberg model when
        $n=3$. For $u,v\ge0$, the model has a well-known probabilistic
        representation as a loop model (related to the interchange
        process). This paper concerns the model in 1 dimension with
        $u\ge0, v>0$ and $n$ large, where we prove a breaking of
        translation invariance, known as dimerization, in the ground
        state. 
	In a separate paper \cite{BR-exp}, we prove that in $\ZZ^d$ for all
        $d\ge1$, there is  exponential decay of correlations at finite
        low temperature $v\ge u$ and $n$ large, using different
        techniques.

\subsection{Main theorem}

	To state our result on dimerization, we need some definitions. 
For positive temperature $T=\frac1\beta$, the Gibbs state 
$\langle \cdot \rangle_{\Lambda, \beta}$
in volume $\Lambda$ is the linear map
$L( (\bb C^n)^{\otimes \Lambda})\to\bb C$
given by
\be\label{eq gibbs state}
\langle A \rangle_{\Lambda, \beta}
=
\frac1{Z_{\Lambda,n,\beta}}
\Tr\big(A\e{-\beta H_\Lambda}\big),
\qquad\text{where }
Z_{\Lambda,n,\beta}=\Tr\big(\e{-\beta H_\Lambda}\big).
\ee
The ground state is given by replacing the Gibbs factor
$\e{-\beta H_\Lambda}$ by the projection onto the subspace of the
lowest eigenvalue of $H_\Lambda$, equivalently it is the limit
$\beta\to\oo$ of the Gibbs state. 
Infinite-volume Gibbs- or ground-states can be obtained as
limits of these as $\Lambda\Uparrow\bb Z^d$, or characterized using
the KMS-condition.
	
For a vector
$\Psi\in(\bb C^n)^{\otimes\Lambda}$, write
\be\label{eq seeded state}
\langle A \rangle_{\Lambda, \beta}^{\Psi}
=
\frac{
  \langle \Psi | \e{-\frac\beta2 H_{\Lambda}} A
  \e{-\frac\beta2 H_{\Lambda}} | \Psi \rangle
}
{
  \langle \Psi | e^{-\beta H_{\Lambda}} | \Psi \rangle
},
\ee
which we refer to as a \emph{seeded} state.
For $\Lambda=\Lambda_L:=\{-L+1,\dots,L\}\subset\ZZ$, 
\phantomsection\label{not LL}
let
\be\label{eq:Psi}
\Psi_L
=
\sum_{i=1}^L 
\frac{1}{\sqrt{n}}
\sum_{a=1}^n
| a,a \rangle_{-L+2i-1, -L+2i}.
\ee
Note that $Q$ is the orthogonal projector on $\Psi_1$.
By \emph{local operator} we mean 
a linear operator $A$ on  $(\bb C^n)^{\otimes\Lambda(A)}$
for a finite set  $\Lambda(A)$; 
\phantomsection\label{not LA}
we regard $A$ as an operator on 
$(\bb C^n)^{\otimes\Lambda}$ for
any $\Lambda\supseteq \Lambda(A)$ by identifying it with
$A\otimes\one_{\Lambda\sm\Lambda(A)}$.
The smallest choice of set $\Lambda(A)$ is called the operator's
\emph{support}.  
For $t\in\RR$ and a local operator $A$ (with support in $\Lambda$)
we write $A(t)=e^{tH_{\Lambda}}Ae^{-tH_{\Lambda}}$, and for a state
$\langle\cdot\rangle$ we write $\langle A;B \rangle = \langle AB \rangle
- \langle A \rangle\langle B \rangle$. 

We consider the model \eqref{eq:ham-general} with $v=1-u$ and
$u\in[0,1]$, so the Hamiltonian is  
\be\label{eq:ham}
H_\Lambda=-\sum_{xy\in\c E(\Lambda)}
\big[
u T_{xy}+(1-u) Q_{xy}
\big].
\ee
\begin{theorem}[Dimerization for $d=1$]\label{thm:dimersation}
  Let $d=1$ and fix $u\in[0,1)$.  
  There is $n_0(u)$ such that for $n>n_0$, the following holds:
  \begin{enumerate}[1.,leftmargin=*]
  \item  There  are  two distinct infinite-volume ground-states
    $\langle\cdot\rangle_1$, $\langle\cdot\rangle_2$ for the 
    model \eqref{eq:ham}, such that for $\alpha\in\{1,2\}$,
    \be
    \langle\cdot\rangle_\alpha=
    \lim_{\substack{L\to\infty \\ L\in 2\bb Z+\alpha}} 
    \lim_{\beta\to\infty}
    \langle\cdot\rangle_{\Lambda_L}
    =
    \lim_{\substack{L\to\infty \\ L\in 2\bb Z+\alpha}} 
    \lim_{\beta\to\infty}
    \langle\cdot\rangle_{\Lambda_L}^{\Psi_L}.
    \ee
    Moreover, for both
    $\langle\cdot\rangle_{\Lambda_L}$ and
    $\langle\cdot\rangle_{\Lambda_L}^{\Psi_L}$ the limits $L,\beta\to\oo$
    can also be taken simultaneously, and for 
    $\langle\cdot\rangle_{\Lambda_L}^{\Psi_L}$
    the limits can be taken in any order.
    The states $\langle\cdot\rangle_\alpha$
    are $2\ZZ$-invariant and are translations by one unit of
    each other. 
  \item Correlations decay exponentially in both limiting states: there
    exists $C>0$ such that for any local operators $A,B$ with supports
    $\Lambda(A)$, $\Lambda(B)$ respectively, we have for $\alpha\in\{1,2\}$ and
    for all $t\in\RR$,  
    \be
    |\langle A;B(t)\rangle_\alpha|
    \le
    \|A\| \|B\|  \e{- C(d(\Lambda(A), \Lambda(B)) + |t|)}.
    \ee
  \end{enumerate} 
\end{theorem}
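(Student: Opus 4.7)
The plan is to work throughout in the Poisson loop representation of the Gibbs state, which rewrites $\Tr(A\,e^{-\beta H_\Lambda})$ as an integral over configurations $\omega$ of ``crosses'' (from $T_{xy}$) and ``double bars'' (from $Q_{xy}$) placed by independent Poisson processes of intensities $u$ and $1-u$ on the space--time cylinder $\Lambda\times[0,\beta]$, with weight $n^{L(\omega)}$ where $L(\omega)$ is the number of loops traced out by the marks. In this language the seeded state $\langle\cdot\rangle_{\Lambda_L,\beta}^{\Psi_L}$ opens the cylinder and imposes ``double-bar'' connections between each pair $(-L+2i-1,-L+2i)$ at both time-boundaries, while the trace state puts periodic boundary in time; in both cases the parity $\alpha$ encodes which of the two possible dimerizations of $\Lambda_L$ is aligned with the boundary.

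The heart of the proof is a continuous-time adaptation of the Duminil-Copin--Peled--Samotij--Spinka (DCPSS) argument for loop $\c O(n)$ on the hexagonal lattice. The DCPSS strategy pairs a configuration with its image under a one-step spatial shift and bounds contours separating the two candidate dimerizations by a Peierls factor $n^{-c|\gamma|}$, obtained by flipping the configuration inside $\gamma$ and repairing along $\gamma$ at a cost of order $1/n$ per locally modified mark. I would first recast the quantum loop model on a ``brick-wall'' diagram where double bars and crosses become two kinds of tiles whose perfect matchings correspond to $\alpha=1,2$; then I would mimic the DCPSS shift-and-repair, defining contours at a mesoscopic temporal scale so as to accommodate the continuous Poisson density. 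The \emph{main obstacle} lies here: the repair step has no perfect analogue of DCPSS's discrete tile bijection, and one must additionally control arbitrarily long mark-free time intervals, which have no counterpart in the hexagonal setting, and boundary effects at the temporal sides when $\beta$ is finite.

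Given the Peierls estimate, items (1) and (2) follow by relatively standard arguments. The contour bounds imply that the seeded states $\langle\cdot\rangle_{\Lambda_L,\beta}^{\Psi_L}$ form a Cauchy net in $L,\beta$ along each parity class, and the two limits $\alpha\in\{1,2\}$ are distinct because, for instance, $\langle Q_{2i-1,2i}\rangle_\alpha$ takes different values depending on $\alpha$. The uniform Peierls bound together with standard compactness arguments allows the limits to be swapped for the seeded state, while the free-boundary trace state is matched to the seeded one by inserting a resolution of the identity at times $0,\beta$ and observing that for large $n$ the sum concentrates on the dimerization imposed by the parity of $L$. Finally, exponential decay of $\langle A; B(t)\rangle_\alpha$ reduces, via the loop representation of truncated two-point functions, to an exponential bound on the probability that some loop connects $\Lambda(A)$ to the space--time translate of $\Lambda(B)$ by $t$; such a loop forces a long contour, and the same Peierls estimate gives the desired bound $\e{-C(d(\Lambda(A),\Lambda(B))+|t|)}$.
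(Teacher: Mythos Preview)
Your high-level plan is essentially the one the paper follows: pass to the Poisson loop representation, adapt the DCPSS shift-and-repair machinery to the continuous-time setting, and deduce convergence and exponential decay from the resulting contour estimate. The identification of the ``main obstacle'' is also accurate. However, the proposal has a genuine gap precisely where you locate that obstacle, and a couple of smaller inaccuracies downstream.

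\textbf{The Peierls factor is not $n^{-c|\gamma|}$ in any uniform sense.} You write that the repair produces a cost of order $1/n$ per locally modified mark, but in the continuous model a long contour $\gamma$ can enclose a region with arbitrarily \emph{few} marks (your ``mark-free time intervals''). In that situation the repair creates very few new loops, so the gain $n^{\text{(new loops)}-\text{(old loops)}}$ is only of order $n^{c\cdot\#\text{marks}}$, not $n^{c|\gamma|}$, and this is far too weak to beat the entropy of contours. Acknowledging the difficulty is not the same as resolving it. The paper's resolution is a dichotomy: call a link \emph{exposed} if it is not sandwiched between two tall trivial loops, and split according to whether the number of exposed links in the outside region is $\geq\delta n v$ or $<\delta n v$ (with $v$ the volume). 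In the dense case the repair gains at least $n^{\delta n v/4}$, which wins. In the sparse case one observes that after repair the outside configuration is, conditionally, a Poisson process of double-bars of rate $(1-u)n$ on primal columns of total length $\geq v/2$; having fewer than $\delta n v$ points in such a process costs $e^{-c n v}$ by large deviations. This second case is the genuinely new ingredient relative to DCPSS, and without it the argument collapses.

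\textbf{Matching the trace state to the seeded state.} The paper does not use a resolution of the identity at times $0,\beta$. Instead it shows directly that both the periodic-in-time loop measure (with $L\in 2\mathbb Z+\alpha$) and the open loop measure converge to the same infinite-volume Gibbs measure, via a total-variation coupling: in both systems the boundary is, with probability $1-e^{-Cd}$, separated from a central box by a circuit of small primal loops, and conditionally on the outermost such circuit the law inside is the same primal-boundary loop measure. Your resolution-of-identity idea would have to contend with the fact that the sum over boundary vectors is not obviously dominated by $\Psi_L$.

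\textbf{Exponential decay.} The reduction is not simply ``a loop connects $\Lambda(A)\times\{0\}$ to $\Lambda(B)\times\{t\}$''. What the paper does is show that with probability $1-e^{-CR}$ a circuit of small primal loops separates the two space-time supports; on that event the pairing induced on $(\Lambda(A)\cup\Lambda(B))\times\{0,t\}$ factorises, and Mecke's formula plus the total-variation lemma give conditional independence of the two factors. This is close in spirit to what you write, but the factorisation via the Gibbs property and Mecke is the actual mechanism, not merely a long-loop bound.
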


Here 
$\|A\|=\sup_{\|\psi\|=1}|\bra\psi A\ket \psi|$ is the operator norm. 

\begin{remark}
  The decay of correlation  implies that the two
  states $\langle\cdot\rangle_\alpha$, $\alpha\in\{1,2\}$,
  are extremal.
\end{remark}

\subsection{Background}
	
The 1-dimensional ground state behaviour of the model
\eqref{eq:ham-general} is diverse.
The expected ground state for the model with $n\geq 3$
is depicted in Figure \ref{fig:1dphasediagram}.  
We now summarize the heuristics for that phase diagram,
for further details see \cite{BMNU} or (for
$n=3$) \cite{fath-solyom-spin-1}.
\begin{figure}[th]
  \begin{center}
    \includegraphics[scale=1]{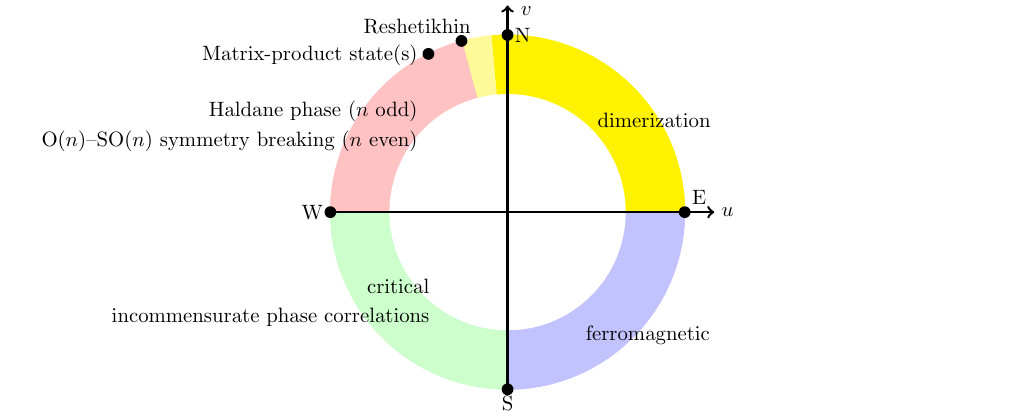}%
  \end{center}
  \caption{Expected ground state phase diagram for the spin chain
    with Hamiltonian \eqref{eq:ham-general} for $n\geq3$.  
Dimerization is expected in the yellow region, and has been
established in the darker yellow region, for certain values of $n$.
In this paper we prove dimerization in the range from N to E
(E not included), for large enough $n$.  It was
previously established at N (for $n\geq3$) and in  a
neighbourhood around  N (for large $n$).}
  \label{fig:1dphasediagram}
\end{figure}

The operator
$T_{x,y}$ can be thought of as ferromagnetic, and $Q_{x,y}$
anti-ferromagnetic. This leads to the most straightforward part of the
phase diagram: the southeast quadrant $u\geq0,v\leq0$ (blue in Figure
\ref{fig:1dphasediagram}, including the point $E$) is ferromagnetic;
there are many ground states, which  minimize the
Hamiltonian \eqref{eq:ham-general} term by term.  
	
In the southwest quadrant $u<0,v<0$ (green in Figure
\ref{fig:1dphasediagram}), the model is ``critical'', in the
sense that there should be a unique, gapless ground state,
with polynomially decaying correlations. Correlations are in
fact expected to decay with incommensurate phases, that is,
correlations between the origin and $x$ should decay as
$|x|^{-r}\cos(\om|x|)$, where $r,\om$ are functions of
$u,v$. See \cite{fath-solyom-spin-1} and
\cite{itoi-kato-spin-1-massless}. 
	
	Between the point $W$ and the Reshetikhin point (red in Figure \ref{fig:1dphasediagram}), one expects behaviour dependent on the parity of $n$. For $n$ odd the model is in the Haldane phase: there should be a unique ground state, with a gap and exponentially decaying correlations. For $n$ even, there should be two extremal gapped ground states, which are translations of each other by 1, but which are \textit{not} dimerized states (as thought in some physics literature). Rather, the $O(n)$ symmetry is broken down to $SO(n)$, and the two states are related by a site-transformation of determinant $-1$. This behaviour for $n$ even was recently discovered in the PhD thesis of Ragone \cite{ragone-thesis}, and had not been observed in a model before. He studies the point $v=-2u$, which is a frustration-free point whose ground states are given by matrix product states. For $n=3$, this is the AKLT model \cite{AKLT1, AKLT2}, and for larger $n$ named the $SO(n)$ AKLT model. Ragone studies this point for all $n\ge3$, and proves the behaviour described above. The $SO(n)$ AKLT point should have the same qualitative behaviour as the whole phase between the point $W$ and the Reshetikhin point. General stability results on gapped chains should extend the existence of the gap rigorously to a neighbourhood of the point $v=-2u$, see \cite{nachtergaele-sims-young}.
	
	The point $v=-\tfrac{2n}{n-2}u$ was solved by Reshetikhin \cite{reshetikhin} and shown to have no gap. The special case when $n=3$ is the Takhtajan–Babujian model.

Finally, between the point $E$ and the Reshetikhin point (yellow in
Figure \ref{fig:1dphasediagram}) dimerization is expected: two
distinct extremal ground states, which are $2\ZZ$-invariant and shifts
of each other by 1, accompanied by a spectral gap and exponentially
decaying correlations. One can think of each particle on the lattice
$\ZZ$ binding tighter with one of its neighbours than the other. There
are two configurations giving such a behaviour for all particles:
either the particles on even sites are all bound tighter to the
particle to their right, or their left. One way to distinguish
dimerization from the $O(n)-SO(n)$ symmetry breaking of the $SO(n)$
AKLT model for $n$ even is to show that the expectation of some
two-site observable is different in the two ground states (this will
be the operator $Q_{x,y}$ defined above); interestingly the two ground
states of the $SO(n)$ AKLT model for $n$ even cannot be distinguished
by any two-site observable \cite{ragone-thesis}. 

The spin chains \eqref{eq:ham-general} can be represented using loop
models, where loops travel along components of $\ZZ\times\RR$, joined
by links between nearest neighbours.
The inverse temperature $\beta$ corresponds to the
height (in the $\RR$ direction) of the finite-volume loop
model, so studying the infinite volume ground state of the
spin model amounts to studying the 2D infinite volume limits
of the loop model.
The range $u,v\geq0$, which we study in this paper, is special in that 
the loop model is probabilistic, while for parameters outside that
range the loop model comes with a signed measure.
Dimerization in the loop model setting is the existence of two
distinct infinite volume Gibbs measures, which are
translations of each other by 1. These are easy to visualise:
the model prefers many loops, so prefers short loops. The
shortest possible loops are those which touch only two links
between the same two nearest neighbours.
The two Gibbs
measures each display a unique infinite cluster of short loops
all lying either on odd edges of $\ZZ$ or all on even
edges. 

Let us review recent precedents for our main result Theorem
\ref{thm:dimersation}.  The case 
$u=0$, $v>0$ (and any $n$) is special because its loop
        representation is the loops of planar FK percolation with
        $q=n^2$, once a scaling limit in one spatial direction is
        taken. This means that the tools available to FK percolation
        are essentially also available to the loop model, such as the
        FKG inequality. Infinite volume limits of the model under
        ``even'' and ``odd'' boundary conditions (corresponding to
        wired and free in FK percolation) are easily shown to
        exist. Aizenman and Nachtergaele \cite{an} proved a dichotomy:
        either these two infinite volume measures differ and
        dimerization occurs, and correlations in the quantum model
        decay exponentially fast, or the measures coincide, and one
        has slow decaying correlations: $\sum_{x\in\ZZ}|x||\langle
        \bf{S}_0\cdot\bf{S}_x\rangle|=\infty$. For $u=0$ and all
        $n\ge3$, Aizenman, 
	Duminil-Copin and Warzel \cite{ADCW} proved that the first alternative of the dichotomy holds, using an adaptation of Ray and Spinka's proof of a discontinuous phase transition in FK percolation for $q>4$ \cite{ray-spinka}. Nachtergaele and Ueltschi
	\cite{nacht-uelt} had proved this a few years earlier for $n\geq17$.
	
	Away from the point $u=0$, the model is much harder to analyse. The loops in the loop model cross each other, which means there is no link to FK percolation and no FKG inequality, and so no automatic convergence in infinite volume of the ``odd'' and ``even'' measures. Relating dimerization to decay of correlations in the dichotomy of \cite{an} (see Theorem 6.1) also uses FKG.
	Bj\"ornberg, M\"uhlbacher, Nachtergaele and
	Ueltschi \cite{BMNU} showed that dimerization occurs for $|u|$ small enough (including
	negative $u$) and $n$ large enough, via a cluster-expansion.  
	
	We do not attempt to obtain the optimal value for the threshold 
	$n_0(u)$ as it is clear that our methods give only a very weak upper
	bound.  It has been suggested that the optimal value is $n_0(u)=2$ for
	all $u\in[0,1)$.  Our bound, however, diverges as $u\to1$.
	
	The proof of Theorem \ref{thm:dimersation} is inspired by the proof of a similar result for the loop $O(n)$ model on the hexagonal lattice by Duminil-Copin, Peled, Samotij and Spinka \cite{DCPSS}. Adapting the proof to the setting of the loop model defined from an underlying Poisson process requires significant modifications.\\
	
	Note that the model \eqref{eq:ham-general} with $n=2$ (spin
        $\frac12$) behaves
        differently in its ground state to $n\ge3$. The model is
        equivalent to the Heisenberg XXZ model. Without loss of
        generality, setting $v=1-u$ and $\Delta=2u-1$, 
the Hamiltonian is equivalent to:  
	\be\label{eq:ham-xxz}
	H^{\textsc{xxz}}_\Lambda=-\sum_{xy\in\c E(\Lambda)}
	\big[
	S^{(1)}_xS^{(1)}_y + S^{(2)}_xS^{(2)}_y + \Delta S^{(3)}_xS^{(3)}_y
	\big],
	\ee
where $S^{(1)},S^{(2)},S^{(3)}$ are the usual spin operators.
Write $\Delta=2u-1$. There are the special cases $\Delta=1$ (the
Heisenberg ferromagnet), $\Delta=-1$ (the Heisenberg antiferromagnet),
and $\Delta=0$ (the XY model).  
	
	For $\Delta=2u-1\in[-1,1)$, the model is ``critical'', in the
        sense that there is a expected to be a unique, gapless ground
        state with polynomially decaying correlations. The model is
        widely studied using exact solutions methods; see for example
        the textbooks \cite{KBI-book, sutherland-book,
          takahashi-book}. In finite volume, the ground state is
        unique and Lieb, Schultz and Mattis \cite{lieb-schultz-mattis}
        showed that the spectral gap is at most $\mathrm{const.}/L$,
        with $L$ the length of the system. To the authors' knowledge,
        the only rigorous proof of a unique ground state in infinite
        volume is for the XY model ($\Delta=0$) by Araki and Matsui
        \cite{araki-matsui-XY}. However, for the antiferromagnet
        ($\Delta=-1$), the result that the loop model has a unique
        infinite volume Gibbs measure is a consequence the same result
        for quantum FK percolation with $q=4$ by Duminil-Copin, Li and
        Manolescu \cite{DC-li-manolescu}. Affleck and Lieb
        \cite{affleck-lieb-XXZ} proved that a unique ground state
        implies there is no spectral gap; their result extends to all
        half-odd-integer spins and is some evidence for the Haldane
        conjecture. Polynomial decay of correlations also implies no
        spectral gap, see for example Problem 6.1.a in
        \cite{tasaki-book}. 
	
	For $\Delta<-1$ the $S^{(3)}$ term dominates and one expects behaviour like the antiferromagnetic Ising model (which is the point $v=-u>0$, or $\Delta=-\infty$). For $\Delta<-1$ and $|\Delta|$ sufficiently large, Matsui \cite{matsui-XXZ-antiferro} proved there are exactly two extremal ground states, and for all $\Delta<-1$, Aizenman, Duminil-Copin and Warzel \cite{ADCW} proved the existence of two distinct ground states (which should be the only two extremal ones) exhibiting Néel order. 
	
	For $\Delta\ge1$ one has ferromagnetic behaviour. For
        $\Delta>1$, the $S^{(3)}$ term dominates once again, and one
        has the behaviour of the ferromagnetic Ising model (which is
        the point $v=-u<0$, or $\Delta=+\infty$). Here there are two
        translation-invariant ground states (all spins up and all
        spins down), and an infinite number of
        non-translation-invariant ground states (all spins left of
        $x\in\ZZ$ up (resp.\ down) and all spins right of $x$ down
        (resp.\ up), known as kink (or anti-kink) states. This was
        proved to be a complete list of all the extremal ground states
        by Matsui \cite{matsui-XXZ-ferro}, a result extended to all
        spins by Koma and Nachtergaele
        \cite{koma-nacht-XXZ-ferro}. The kink and antikink states were
        discovered by Alcaraz, Salinas, 
	and Wreszinski \cite{ASW}, and Gottstein and Werner \cite{gottstein-werner}. For $\Delta=1$ (the Heisenberg ferromagnet), the Ising behaviour disappears and for all spins, all ground states are translation-invariant \cite{koma-nacht-XXZ-ferro} (in fact for spin$-\tfrac{1}{2}$ they are exactly all of the permutation-invariant states). \\

\subsection{Mirror model}
	
Our results have analogues in
a random mirror model, which is a discrete
version of the probabilistic loop-model which is our main tool
(Section  \ref{section:prob-rep}).
We summarize the results for the mirror model
here, but do not give proofs as they can be obtained through
straightforward modifications of the arguments in \cite{DCPSS}.

Consider a finite subset $\Lambda$ of $\bb Z^2$, which we think of as
rotated $45^\circ$.  On each site $x\in\Lambda$, we place either a
vertical mirror, a horizontal mirror, or no mirror.   A \emph{mirror
  configuration} is thus an element 
$\xi\in\{\r v,\r h,\varnothing\}^{\Lambda}$.  
Mirrors are reflective on both sides, so rays of light travelling
along the edges of the lattice assemble into loops and paths
(possibly depending on a boundary condition).
We let $\ell(\xi)$ be the number of loops (or paths) which intersect 
$\Lambda$.  
See Figure \ref{fig mirrors}.

\begin{figure}[th]
  \begin{center}
    \includegraphics[scale=1]{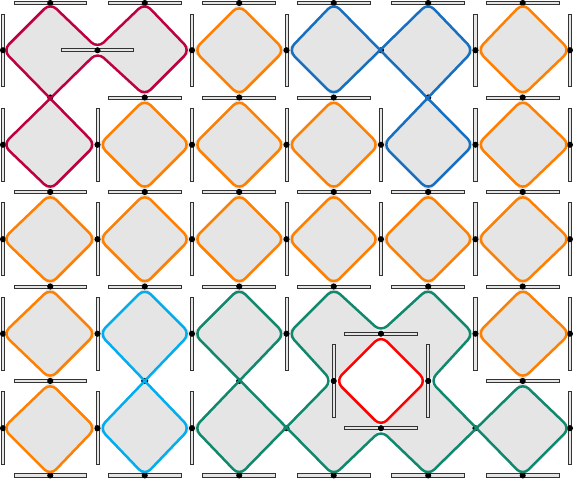}%
  \end{center}
  \caption{Mirror configuration in $\bb Z^2$ with loops.
    The  boundary condition favours loops surrounding black faces.
  }
  \label{fig mirrors}
\end{figure}
              
The parameters of the model are numbers 
$p_{\r v},p_{\r h},p_{\es}\in[0,1]$ satisfying 
$p_{\r v}+p_{\r h}+p_{\es}=1$, as well as $n>0$.  Here $n$ plays the
same role as the spin-parameter in the quantum system, but is not
restricted to be an integer.  A mirror configuration is chosen at
random, with probability 
\be
\PP_{\Lambda,n}^{\text{mir}}(\xi)\propto 
p_{\r v}^{\#\{x\in\Lambda:\xi_x=\r v\}}
p_{\r h}^{\#\{x\in\Lambda:\xi_x=\r h\}}
p_{\es}^{\#\{x\in\Lambda:\xi_x=\es\}}
n^{\ell(\xi)}.
\ee
We are interested in limits of these measures as 
$\Lambda\uparrow\bb Z^2$, in
particular ones which are not translation-invariant.  Fix a colouring
of the faces of $\bb Z^2$, black and white 
in a chessboard pattern.  By deterministically placing mirrors
in a circuit surrounding the origin, which intersect either only
white or only black faces, one obtains measures in
connected sets $\Lambda$ with a preference for loops around either
black or white faces.  The larger $n$ is, the stronger this
preference.   By this mechanism, one obtains non-translation-invariant
(but periodic) Gibbs states.

\begin{theorem}\label{thm mirror}
  In the mirror model, with $p_{\r v},p_{\r h}>0$ and $p_{\es}\in[0,1)$,
  for $n$ large enough there are two non-translation invariant, periodic
  Gibbs measures $\PP^{\bullet}_n$ and $\PP^{\circ}_n$ which are
  translations of one unit (diagonally) of eachother.  Under
  $\PP_n^\bullet$, the set of black faces which are surrounded by a
  single loop forms an infinite component whose complement has only
  finite connected components (and vice versa for $\PP_n^\circ$ and
  white faces).  Correlations decay exponentially in both.  
\end{theorem}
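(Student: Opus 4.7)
The plan is to follow the Peierls-type strategy of \cite{DCPSS}, adapted to the mirror setting. First, I would identify two natural ground-state configurations $\eta^\bullet, \eta^\circ \in \{\rv, \rh, \es\}^{\ZZ^2}$ that maximize the density of short loops, with $\eta^\bullet$ placing a short loop around each black face of a periodic packing and $\eta^\circ$ doing the same for white faces; these configurations are $2\ZZ^2$-periodic and related by a unit diagonal translation. I would then define the finite-volume measures $\PP^\bullet_{\Lambda,n}$ and $\PP^\circ_{\Lambda,n}$ by surrounding the origin with the deterministic mirror circuit mentioned in the paper that enforces agreement with $\eta^\bullet$ (respectively $\eta^\circ$) along its trace.

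For a configuration $\xi$, define a contour $\gamma$ to be a connected component of the set of vertices where $\xi$ disagrees with $\eta^\bullet$ (or, equivalently, where the local face-configuration departs from a short-loop diamond). The central estimate is a Peierls bound: there exist $C, c > 0$ depending on $p_\rv, p_\rh, p_\es$ but not on $n$ such that under $\PP^\bullet_{\Lambda,n}$ any fixed contour shape $\gamma$ passing through a given face has probability at most $(C/n)^{c|\gamma|}$. Standard consequences of this bound give, in order: existence of the infinite-volume limit $\PP^\bullet_n = \lim_{\Lambda \uparrow \ZZ^2} \PP^\bullet_{\Lambda,n}$ and its analogue $\PP^\circ_n$; an infinite connected component of single-loop-surrounded black faces under $\PP^\bullet_n$ whose complement is a union of finite contours; distinctness of $\PP^\bullet_n$ from $\PP^\circ_n$ via the local observable ``is this black face surrounded by a length-$4$ loop?''; and exponential decay of correlations, since coupling two distant local observables forces a contour extending between them.

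The hard part is the Peierls bound itself. Following \cite{DCPSS}, the strategy is to construct a (possibly randomized) \emph{repair map} $\Phi$ sending a configuration with contour $\gamma$ to one that agrees with $\eta^\bullet$ on a thickening of $\gamma$ and with the original elsewhere, and to prove two things: (i) the ratio of single-site mirror weights between $\xi$ and $\Phi(\xi)$ is bounded by $K^{|\gamma|}$ for some $K$ depending on $p_\rv, p_\rh, p_\es$, and (ii) the expected gain $\EE[\ell(\Phi(\xi)) - \ell(\xi)]$ is at least $c|\gamma|$, contributing the decisive factor $n^{c|\gamma|}$. Item (ii) is the delicate step, because rearranging mirrors inside $\gamma$ can merge or split loops extending far outside $\gamma$, so the gain in $\ell$ is neither local nor deterministic; the resolution in \cite{DCPSS} is to randomize $\Phi$ over a family of local swaps and to bound the expected gain via a pairing argument that yields a linear-in-$|\gamma|$ gain on average. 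Combined with the standard entropy bound $K^{|\gamma|}$ for the number of contour shapes of size $|\gamma|$ through a fixed face, this closes the Peierls bound for $n$ large and the rest of the theorem follows.
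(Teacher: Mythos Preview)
The paper does not give a proof of this theorem; it explicitly says the result ``can be obtained through straightforward modifications of the arguments in \cite{DCPSS}'' and then carries out the analogous (harder) argument for the continuous model in Section~3. So your overall plan---adapt \cite{DCPSS}---is exactly what the paper intends.

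However, your description of the \cite{DCPSS} mechanism is inaccurate in ways that would cause trouble if you tried to execute it. First, the relevant ``contour'' is \emph{not} the set of disagreement with a fixed ground-state configuration $\eta^\bullet$: there is no single such configuration (many mirror placements yield only short black-face loops), and more importantly the object that the argument controls is the \emph{outside} $\c O$, defined as the complement of maximal clusters of trivial (single-face) loops. This is a loop-theoretic definition, not a configuration-theoretic one, and it is what makes the repair map work cleanly: loops cannot cross cluster boundaries, so modifying the outside and shifting white clusters does not touch any loop that stays inside a cluster. Second, the repair map in \cite{DCPSS} (and in Section~3 of this paper) is \emph{deterministic}, not randomized, and the loop gain is a deterministic lower bound $\ell(\bar\xi)-\ell(\xi)\ge c\,|\xi^{\mathrm{ex}}|$ (here $\xi^{\mathrm{ex}}$ are the ``exposed'' mirrors in $\c O$), not an expectation obtained by averaging over local swaps. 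The non-locality worry you raise---that flipping mirrors inside $\gamma$ could merge or split distant loops---is resolved precisely by the cluster structure, not by randomization: after repair, every loop in $\c O$ becomes trivial, and loops inside clusters are untouched. The entropy is then controlled by bounding the number of preimages $|R^{-1}(\bar\xi,\bar\eta)|\le K^{|\bar\xi^{\mathrm{out}}|}$ together with a count of possible shapes for (a block discretization of) $\c O$.

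In short: right strategy, wrong implementation. If you rework the proposal using the cluster/outside decomposition and the deterministic shift-and-flip repair map (as in Section~3.1 of the paper, which is the continuous analogue), the argument goes through; the discrete mirror case is in fact easier because the dense/sparse split of Lemmas~3.2--3.3 is unnecessary.
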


Note that for $p_\es=0$, loops are non-crossing
and we obtain the loop-representation of the
critical FK-model with $q=n^2$, which is analysed in detail in
\cite{DC-li-manolescu,DCST-fk-percolation}.  
There the theorem above  holds for $n>2$ (i.e.\ $q>4$). 

Now suppose that we rescale the lattice $\bb Z^2$ (still rotated
$45^\circ$) by $\eps$ in the vertical direction, and at the same time
rescale the parameters:
$p_{\es}=u\eps$,
$p_{\r h}=(1-u)\eps$, and 
$p_{\r v}=1-\eps$ with $u\in[0,1)$.
See Figure \ref{fig mirrors scaled}.
In the limit $\eps\to0$ one obtains a continuous loop-model based on
Poisson processes.  This is precisely the probabilistic model 
described in Section  \ref{section:prob-rep}, with $u$ the same as in
\eqref{eq:ham}, where we prove results analogous to 
Theorem \ref{thm mirror}.
(Our approach is to work directly in the continuous model.
Another feasible approach would be to work in the discretized model
and obtain results which are uniform in $\eps$, but there is no clear
advantage to this method.)

\begin{figure}[ht]
  \begin{center}
    \includegraphics[scale=1]{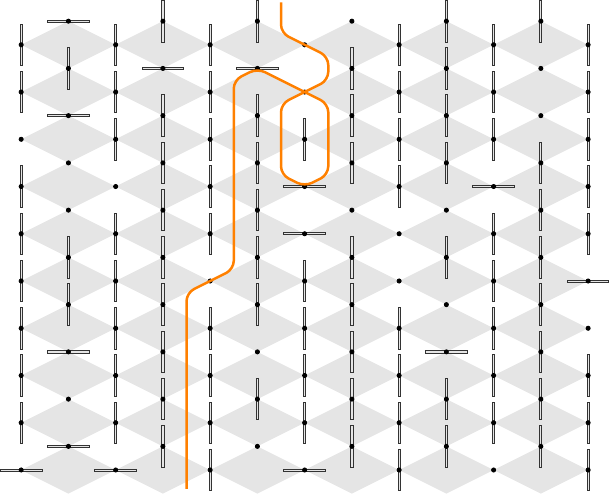}%
  \end{center}
  \caption{Rescaled mirror configuration with vertical distances scaled
    by $\eps$.  In the limit $\eps\to0$ we
    obtain the continuous loop-model \eqref{eq measure periodic}
    which is a probabilistic
    representation of the quantum spin system \eqref{eq:ham};
horizontal mirrors become double bars $\dbar$ and missing mirrors
become crosses $\cross$.  (Only part of one loop is drawn in this
picture.)
  }
  \label{fig mirrors scaled}
\end{figure}

\subsection{Organization of the paper}
In Section \ref{section:prob-rep} we define the probabilistic
representation of the model, and state our main probabilistic results, Theorems 
\ref{thm hole perimeter} and
\ref{qrepair:thm:convergence}. 
We then  show how these theorems imply Theorem
\ref{thm:dimersation}. 
In Section \ref{sec 1d}, we prove the loop model results,  Theorems
\ref{thm hole perimeter} and \ref{qrepair:thm:convergence}.

\subsection*{Acknowledgements}

We are delighted to thank
Ron Peled, Daniel Ueltschi, Bruno Nachtergaele and  Wojciech Samotij
for very useful discussions about this project.
The research of JEB was supported by 
Vetenskapsr{\char229}det, 
grant 2023-05002, 
and by \emph{Ruth och Nils Erik Stenbäcks stiftelse}.  He gratefully
acknowledges hospitality at TU Wien and at Aalto University.
KR was supported by the FWF Standalone grants "Spins, Loops and
Fields" P 36298 and "Order-disorder phase transition in 2D lattice
models" P 34713,  the FWF SFB Discrete Random Structures F 1002, and
the Academy of Finland Centre of Excellence Programme grant number
346315 “Finnish centre of excellence in Randomness and STructures”
(FiRST).  He gratefully
acknowledges hospitality at Chalmers / University of Gothenburg.

	\section{Probabilistic framework}\label{section:prob-rep}

	Our proofs rely on a well-known probabilistic representation 
	\cite{toth,an,ueltschi} where the quantum
	system is expressed in terms of a process of random loops.
	In Section \ref{sec 1d} 
	we work exclusively in this probabilistic framework.  The purpose of
	the present section is to describe the random loop model, to state our
	main results for the loop model, and provide `translations' of these
	results back to the quantum system.

	\subsection{Loop model}

	Recall that the quantum model is defined on a finite graph
	$\Lambda=(\c V(\Lambda)=\c E(\Lambda))$.  
	Let $\beta>0$ and $u\in[0,1)$.  Let 
	$\PP_1$ denote the law of the superposition 
	of two independent Poisson
	point processes on $\c E(\Lambda)\times[0,\beta]$, 
	the first of  intensity $u$ and 
	whose points we denote by $\cross$ (called a
	\emph{cross}), and the second of  intensity $1-u$ and 
	whose
	points we denote by $\dbar$ (called a \emph{double bar}). 
	We write $\om$ for a configuration of this
	process, and the set of such configurations is
	$\Om=\Om_{\Lambda,\beta}=(\cup_{k\ge0}\c W_k)^2$, where $\c W_k$ is
	the set of subsets of $\c E(\Lambda)\times[0,\beta]$ of cardinality
	$k$.   A point of  $\omega$ is called a \emph{link}. 
	When $\Lambda\se\bb Z^d$ we let $\PP_1$ be (the restriction of)
	a Poisson process in $\c E(\bb Z^d)\times\RR$.
	
	Each configuration $\om$ gives a set of loops, best understood by
	looking at Figure \ref{fig loops} and defined formally as follows. 
	First, we identify the endpoints $0$ and $\beta$ of $[0,\beta]$ so
	that it forms a circle;  in particular an interval
	$(a,b]\subset[0,\beta]$ with $a>b$ is defined as
	$(a,\beta]\cup(0,b]$.
	Let $\om\in\Om$ and consider the set $\c I(\om)$ of maximal
	intervals $\{x\}\times(a,b]$, $x\in \c V(\Lambda)$ and
	$(a,b]\se[0,\beta]$, which are not adjacent to a link; that is,
	intervals $\{x\}\times(a,b]$ such that there is a link of $\om$ at
	$(\{x,x'\},a)$ and at $(\{x,x''\},b)$ for some vertices 
	$x', x''\sim x$, 
	and no link at $(\{x,x'''\},t)$ for all $t\in(a,b)$ and all
	$x'''\sim x$.  
	
	Incident to a link of $\om$ at, say, $(\{x,x'\},t)$, there are four
	intervals in $\c I$: 
	$\{x\}\times(t,b_x]$, $\{x\}\times(a_x,t]$, 
	$\{x'\}\times(t,b_{x'}]$, and $\{x'\}\times(a_{x'},t]$.  We say that:
	\begin{itemize}[leftmargin=*]
		\item if the link of $\om$ at $(\{x,x'\},t)$ is a cross $\cross$, then
		$\{x\}\times(t,b_x]$ and $\{x'\}\times(a_{x'},t]$ are connected, 
		and $\{x\}\times(a_x,t]$ and $\{x'\}\times(t,b_{x'}]$ are connected;
		\item if the link of $\om$ at $(\{x,x'\},t)$ is a double-bar $\dbar$, then
		$\{x\}\times(t,b_x]$ and $\{x'\}\times(t,b_{x'}]$ are connected, 
		and $\{x\}\times(a_x,t]$ and $\{x'\}\times(a_{x'},t]$ are connected.
	\end{itemize}
	The loops of $\om$ are then the connected components of $\c I$ under
	this connectivity relation.  
	
	Write $\ell(\om)$ for the number of loops of $\om$.
	The loop measure
	we study is the Poisson point process $\PP_1$ re-weighed by
	$n^{\ell(\om)}$, 
	\phantomsection\label{not w}%
	and we denote it by $\PP^{\p}_{\Lambda,\beta,n,u}$: 
	\be\label{eq measure periodic}
	\PP^{\r{per}}_{\Lambda,\beta,n,u}[A]
	=\frac{1}{Z}\int \text{d} \PP_1(\om) \; 
	n^{\ell(\om)} \mathbbm{1}_A(\om),
	\ee
	where 
	$Z=Z_{\Lambda,\beta,n,u}=\int \text{d} \PP_1(\om)  n^{\ell(\om)}$. 
	To abbreviate we sometimes write simply $\PP_n$ for 
	$\PP^{\p}_{\Lambda,\beta,n,u}$;
	note that when $n=1$ we recover $\PP_1$.
	The superscript $^\p$ indicates that loops are counted with periodic
	boundary condition in the `vertical' direction.
	Note that the measure $\PP_n$ is well-defined also for
	non-integer $n$, although connection to the quantum system
	\eqref{eq:ham} requires $n$ to be an integer at least 2.

	\begin{figure}[th]
		\begin{center}
			\includegraphics[scale=.85]{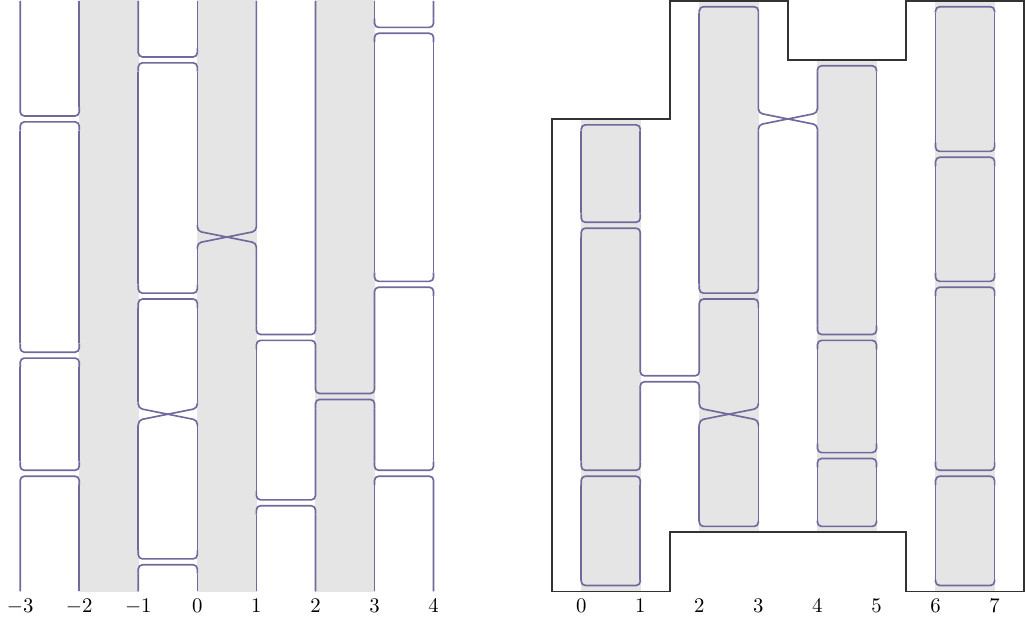}%
		\end{center}
		\caption{Two pictures of 
			configurations $\omega$ in dimension $d=1$.
			On the left, in 
			$\Lambda_L\times[0,\beta]$ (where
			$L=4$) with periodic boundary condition,
			and on the right in a primal domain $\c D_\Gamma$ with primal
			boundary condition.  
			The links ($\cross$ and
			$\dbar$) create loops, which in the left picture wrap around in the
			vertical direction, while in the right picture they are reflected on the
			boundary of the domain.  Primal columns are in both pictures drawn
			grey, dual columns white.  In the left picture, since $L=4$ is even,
			the configurations with the most loops (given a number of links) 
			have small loops stacked in dual columns, as they are 
			in the leftmost column of
			that picture.  In the right picture, small \emph{primal} loops are instead
			favoured, as in the rightmost column of that picture.  
		}
		\label{fig loops}
	\end{figure}

	To state our theorems about the loop model, we need
	additional  boundary conditions. When $\Lambda\subset\ZZ$, we 
	think of edges $e\in\c E(\Lambda)$ as elements of $(\bb Z+\tfrac12)$, where
	$e=x+\tfrac12$ connects the vertices $x$ and $x+1$.  
	Then we call $e$ \emph{primal} if its right endpoint $x+1$ is odd,
	respectively \emph{dual} if its left endpoint $x$ is odd.
	If $e$ is primal (dual) we call $\{e\}\times[0,\beta]$ a 
	\emph{primal (dual) column}, and in our pictures we always represent
	primal columns as grey, dual columns as white.
	We define a \emph{(rectangular) circuit} $\Gamma$%
	\phantomsection\label{not Gamma}
	as a simple closed curve in $\bb R^2$ made up of 
	horizontal intervals of the form $[(e-1,t),(e+1,t)]$ 
	and vertical intervals of the form $\{e\}\times[s,t]$
	for $e\in (\mathbb Z+\tfrac12)$ and $s<t$.
	A circuit $\Gamma$
	is called \emph{primal} if all its vertical intervals are in  \emph{dual}
	(white) columns  (this seemingly
	counterintuitive terminology is chosen so that primal domains favour
	primal loops, see Figure \ref{fig loops} and the discussion below).
	Conversely, $\Gamma$
	is called \emph{dual} if all its vertical intervals are in primal
	(grey) columns.

	Given a rectangular circuit $\Gamma$, the region $\c{D}_\Gamma$%
	\phantomsection\label{not domain}
	of  $\bb R^2$ enclosed by $\Gamma$ is called a \emph{domain},
	and by convention we take $\c{D}_\Gamma$ to be an open subset of 
	$\bb R^2$ (i.e.\ the boundary $\Gamma$ is not part of the domain).
	Moreover, it is called a \emph{primal domain} if $\Gamma$ is a primal
	circuit, and a  \emph{dual domain} if $\Gamma$ is a dual
	circuit.  Important examples of primal and dual domains are given by 
	\be\label{eq square domain}
	\c D_{L,\beta}=(-L+\tfrac{1}{2},L+\tfrac{1}{2})\times
	(-\tfrac{\beta}{2},\tfrac{\beta}{2}),
	\ee
	which we note is a primal
	domain for $L$ odd, and dual for $L$ even. 
	On a primal domain $\c D_\Gamma$, define 
	\emph{primal  boundary  conditions}:  when counting the number of
	loops $\ell(\omega)$,
	each point $(x,t)\in\Gamma$ satisfying $x\in 2\bb Z$ 
	is identified with $(x+1,t)\in\Gamma$ (note that these identifications
	occur on the `top and bottom' of $\c D_\Gamma$).
	To picture this, one can imagine there being a double
	bar fixed at the point $(e,t)$, where $e=x+\tfrac{1}{2}$, for each
	primal edge $e$ such that $(e,t)\in\Gamma$, see
	Figure \ref{fig loops}.
	Dual
	boundary conditions are defined similarly on a dual domain.  
	
	Let $\PP_{\Gamma,n,u}^1$ denote the measure defined as in 
	\eqref{eq measure periodic}
	but with links restricted to the primal domain $\c D_\Gamma$
	and with loops counted according to the primal boundary condition. 
	Similarly, write 
	$\PP_{\Gamma,n,u}^2$ for the corresponding measure defined in
	a  dual domain with dual boundary condition.  To abbreviate,
	when there is no risk for confusion,  we simply write $\PP_n$.
	
	In the discussion above, we have considered only simply-connected
	domains $\c D_\Gamma$, but the extension to non-simply-connected
	domains is straightforward.
	
	Central to our analysis is what we call  
	\emph{trivial loops}, defined as a loops in
	the configuration $\omega$ which visit only two double bars $\dbar$ 
	in the same column.   If this column
	is primal (grey) we call the trivial 
	loop \emph{primal}, alternatively
	\emph{dual} if the column is dual (white).  
	Two trivial loops
	are called \emph{adjacent} if they either span the same edge and share
	a double-bar, or there are $x\in\bb Z$ and  $t\in\bb R$ such that one
	loop contains $(x,t)$ and the other $(x+1,t)$.
	Note that if two trivial loops are adjacent, then they are either both
	primal or both dual.
	Given $h>0$, a loop is called $h$-\emph{small} 
	(or just \emph{small})
	if it is trivial and has vertical height $<h$.  A trivial loop
	which is not small, i.e.\ has height $\geq h$, is called 
	\emph{tall}.
	A loop which is not small is called \emph{long}.

	\subsection{Main results for the loop model}
	
	We now state our main probabilistic results.
	If $\c D_\Gamma\se \bb R^2$ is a primal domain then
	informally, we think of any configuration $\omega$ consisting of only
	small, primal loops as a `ground-state configuration'
	for $\PP^1_{\Gamma,n,u}$. This is because such
	configurations maximize the number of loops, and hence
	the weight factor $n^{\ell(\omega)}$, subject to the primal
	boundary condition (for a given  number of links).
	The same logic applies to $\PP^{\r{per}}_{\Lambda_L,\beta,n,u}$
	with $L\in2\bb Z+1$, where the odd parity of $L$ favours small primal
	loops. 
	Correspondingly, the `ground-state configurations' for
	$\PP^2_{\Gamma,n,u}$ (with $\Gamma$ a dual circuit)
	and for $\PP^{\r{per}}_{\Lambda_L,\beta,n,u}$
	with $L\in2\bb Z$,
	consist of only small \emph{dual} loops.  
	Our first main result
	for the loop model  gives a probabilistic bound for the size of
	perturbations of such `ground-states'. 
	
	We state the result in the primal case.  
	Given $\k\geq0$, let 
	$\c P(\om)=\c P_\k(\omega)$
	be the connected component of 
	$\frac1{\k n}$-small primal loops
	adjacent to the boundary $\Gamma$ (in the case of 
	$\PP^1_{\Gamma,n,u}$) or to the sides $\{-L+1,L\}\times[0,\beta]$
	(in the case of $\PP^{\r{per}}_{\Lambda_L,\beta,n,u}$
	with $L\in2\bb Z+1$).
	If $\k=0$, any trivial loop is regarded as small.
	Fix a point $x_0\in\c D$, where $\c D=\c D_\Gamma$
	or $\Lambda_L\times[0,\beta]$, and
	let $\c C(x_0)=\c C_\k(x_0,\omega)$ be the 
	connected component of 
	$\c D\sm\c P(\omega)$ which contains $x_0$.
	When $\c D=\c D_\Gamma$ the component
	$\c C(x_0)$ necessarily
	equals $\c D_\gamma$ for some random primal circuit
	$\gamma=\gamma(x_0,\omega)$.
	We define the perimeter $\per(\c C(x_0))$ as the length of
	$\gamma$ (thought of as a subset of $\RR^2$). 
	See Figure \ref{fig component} for an illustration.
	In the case when $\c D=\Lambda_L\times[0,\beta]$,
	it is possible for ${\c C(x_0)}$ to consist of two disjoint
	primal curves $\gamma_1,\gamma_2$ which wrap around the torus, in
	which case the perimeter is defined as the sum of their lengths.

	\begin{figure}[ht]
		\centering
		\includegraphics{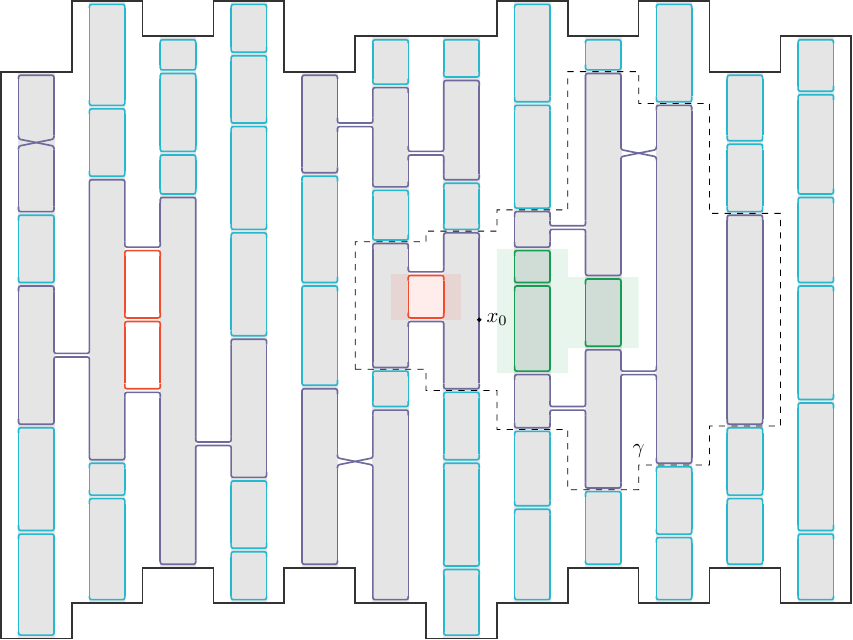}
		\caption{Illustration of the connected component $\c P(\om)$ of 
			small loops (drawn turquoise) adjacent to the boundary $\Gamma$ of a
			primal domain $\c D_\Gamma$, 
			as well as $\c C(x_0)=\c D_\gamma$
			with its boundary $\gamma$ drawn dashed.
			Long loops are drawn off-blue.
			The rightmost loop in $\c C(x_0)$ is a tall, trivial loop.
			Shaded green and orange are the  primal and dual clusters in
			$\c C(x_0)$ (as defined in Section \ref{ssec clusters}).
		}\label{fig component}
	\end{figure}

	\begin{theorem}[Perturbations of the ground state in the 
		loop model for $d=1$]\label{thm hole perimeter}
		For any $u\in[0,1)$ there is a constant 
		$\k_0=\k_0(u)>0$ such that for all 
		$\k\in[0,\k_0]$ there is 
		$n_0=n_0(u,\k)<\oo$
		such that the following holds.  For any $n>n_0$, 
		there is a constant $C=C(u,n,\k)>0$ such that for
		all $v>1$,
		\be
		\PP_{\Gamma,n,u}^1[\per(\c C_\k(x_0))>v] \le  \e{-Cv}
		\qquad\text{and}\qquad
		\PP^{\r{per}}_{\Lambda_L,\beta,n,u}[\per(\c C_\k(x_0))>v] 
		\le  \e{-Cv},
		\ee
		uniformly for all primal circuits $\Gamma$
		and all $L\in2\bb Z+1$ and $\beta>0$, 
		where $x_0\in\c D$ is any point in the corresponding domain
		$\c D=\c D_\Gamma$ or $\c D=\Lambda_L\times[0,\beta]$.
	\end{theorem}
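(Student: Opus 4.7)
The plan is to establish Theorem \ref{thm hole perimeter} by a Peierls-type contour argument, adapted from \cite{DCPSS} to the continuous Poisson setting. The set $\c P_\k(\omega)$ of $\frac{1}{\k n}$-small primal trivial loops adjacent to the boundary is stable under the adjacency relation, so its complement containing $x_0$ is enclosed by a primal contour $\gamma=\gamma(x_0,\omega)$ whose vertical segments lie in dual columns and whose horizontal segments lie at $t$-ordinates of double-bars of $\omega$ lying on $\gamma$. Since the number of combinatorially distinct primal circuits of perimeter $v$ enclosing $x_0$ grows only like $K^v$ for a constant $K=K(u)$, the claimed tail bound reduces via a union bound to the single-contour estimate
\be
\PP_n^1[\gamma(x_0,\omega)=\gamma_0] \le \e{-C'|\gamma_0|}
\ee
uniformly over admissible $\gamma_0$, with $C'$ eventually large enough to dominate $\log K$.

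Next I would construct a repair map $\Phi=\Phi_{\gamma_0}$, defined on the event $\{\gamma=\gamma_0\}$, which leaves $\omega$ unchanged outside $\c D_{\gamma_0}$ and replaces the interior by a standardized ground-state configuration: all original links in $\c D_{\gamma_0}$ are removed and replaced by a deterministic dense stack of double-bars producing only $\frac{1}{\k n}$-small primal trivial loops, compatible with the primal boundary condition induced by $\gamma_0$. By the very definition of $\c P_\k$, the loops of $\omega$ immediately adjacent to $\gamma_0$ from the inside are not small primal loops; the replacement therefore strictly increases the loop count, satisfying $\ell(\Phi(\omega))-\ell(\omega)\ge c(u,\k)\,v$ for some $c>0$ and $v=|\gamma_0|$. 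The Radon-Nikodym cost of $\Phi$ against $\PP_1$ (due to the change of Poisson-link intensities inside $\c D_{\gamma_0}$) is bounded by $\e{C''(u)v}$, and a reconstruction argument encoding the discarded interior data as a bounded-complexity list yields the multiplicity bound $|\Phi^{-1}(\omega')|\le \e{C'''(u)v}$. Combining these, $\PP_n^1[\gamma=\gamma_0]\le \e{(C''+C''')v}\,n^{-c(u,\k)v}$, which beats the contour entropy once $n$ is taken sufficiently large. The same argument runs verbatim in the periodic case after allowing $\gamma$ to consist of a pair of non-contractible circuits.

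The main obstacle is the construction and analysis of the repair map in the continuous Poisson setting. Unlike the discrete hexagonal case of \cite{DCPSS}, where repair is a finite combinatorial swap, here one must replace a continuous configuration of links in $\c D_{\gamma_0}$ while simultaneously (i) guaranteeing a deterministic lower bound on the density of created small primal loops, (ii) handling the tall trivial loops inside $\c C_\k(x_0)$, whose very presence is what excludes the region from $\c P_\k$ and which therefore must be re-encoded on the boundary data to keep $\Phi$ near-injective, and (iii) matching the boundary condition along $\gamma_0$ without spawning spurious long loops that would decrease $\ell$. The delicate tuning of the threshold $\frac{1}{\k n}$ reflects the requirement that a typical primal column already contains enough $\frac{1}{\k n}$-small loops for the gain $c(u,\k)v$ to outweigh the entropy $K^v\e{(C''+C''')v}$; this balance breaks down as $u\to1$ because the intensity $1-u$ of double-bars degenerates, which is the source of the threshold $\k_0=\k_0(u)$ and the divergence $n_0(u)\to\oo$.
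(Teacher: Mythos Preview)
Your proposal has two genuine gaps that prevent the argument from closing.

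\textbf{The repair map cannot be a deterministic replacement.} You propose to erase all links inside $\c D_{\gamma_0}$ and insert a fixed ground-state stack. But then $\Phi^{-1}(\omega')$ contains \emph{every} possible Poisson configuration in $\c D_{\gamma_0}$; this set has cardinality governed by the \emph{volume} of $\c D_{\gamma_0}$, which can be arbitrarily large for fixed perimeter $v$. No bounded-complexity list of length $O(v)$ can encode a Poisson configuration in a region of volume $\gg v$, so your multiplicity bound $|\Phi^{-1}(\omega')|\le \e{C'''v}$ is false. Likewise the Radon--Nikodym cost of removing all links scales with the volume, not the perimeter. The paper's repair map instead \emph{shifts and re-types} links in the outside $\c O$ (dual-column links move one step left, crosses become double-bars, dual clusters shift), so that $|R^{-1}(\bar\omega,\bar\eta)|\le 4^{|\bar\omega^{\out}|}$ and the Poisson density is unchanged up to a factor $(\hat u)^{\#\cross}$. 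The gain in loops is then $\tfrac14|\omega^{\ex}|$, not $c\cdot\per(\gamma_0)$.

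\textbf{The contour entropy and the dense/sparse dichotomy.} In the continuous setting the horizontal segments of $\gamma$ sit at arbitrary real heights, so there is no finite count $K^v$ of circuits; the paper discretizes into a \emph{block-outside} $\r b_h\c O$ and bounds the number of connected block-shapes by $16^m$ (or $17^m$ when surrounding $x_0$), handling the continuous link positions via Mecke's formula. More fundamentally, the loop gain $\tfrac14|\omega^{\ex}|$ can be tiny if $\c O$ is sparsely populated, so the paper splits into two cases: (i) many exposed links, where the repair gain $n^{\frac14|\omega^{\ex}|}$ beats the entropy for large $n$; and (ii) few exposed links, where after repair the configuration in $\c O$ is a Poisson process of intensity $(1-u)n$ on primal columns, which by a large-deviation estimate is exponentially unlikely to have so few links. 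The perimeter bound then follows because $\per(\c C)$ decomposes as vertical displacement (bounded by $\vol(\c O)$) plus horizontal displacement (bounded by $|\omega^{\out}|$, with covered links controlled via $\k$). Your single-regime argument with gain $c(u,\k)v$ in the perimeter misses this dichotomy and cannot be made to work as stated.
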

	
	The corresponding result is true in the dual cases,
	i.e.\ for $\PP_{\Gamma,n,u}^2$ with $\Gamma$ a dual circuit and
	for $\PP^{\r{per}}_{\Lambda_L,\beta,n,u}$ with $L\in2\bb Z$,
	where $\c P(\om)$ should be replaced by the connected component of
	small \emph{dual} loops adjacent to the boundary or to the sides.
	\\
	
	We next consider convergence of the loop-measures (for $d=1$)
	as the lattice and/or $\beta$ tend to infinity. 
	For this we regard  $\PP_1$ as a Poisson process on the space 
	$(\ZZ+\frac12)\times\RR\times\{\cross,\dbar\}$ 
	(which we restrict to bounded domains to recover previous definitions). 
	For precise statements we need notation for point processes, 
	for which we follow \cite{daley-verejones-1,daley-verejones-2}.  Write 
	$\c M^\#$ for the set of boundedly finite measures on
	$(\ZZ+\frac12)\times\RR\times\{\cross,\dbar\}$
	(measures assigning finite mass to each compact
	set) and $\c N^\#\se\c M^\#$ for the counting-measures.
	Link-configurations $\omega$ and elements of 
	$\c N^\#$ are identified in the natural way, in particular this gives
	a definition of infinite-volume link configurations $\omega$.
	The set $\c M^\#$ is given the 
	$w^\#$-topology (see \cite[Section A2.6]{daley-verejones-1})
	and we write $\c B(\c M^\#)$ for the Borel
	$\sigma$-algebra generated by this topology.
	
	For a sequence of domains 
	$\c D_k\se\bb Z\times\bb R$, $k\geq1$,
	we write $\c D_k\nearrow \bb Z\times\bb R$
	if $\c D_k\se \c D_{k+1}$ for all $k\geq1$ and 
	$\bigcup_{k\geq1} \c D_k=\bb Z\times\bb R$.
	We also consider doubly-indexed sequences 
	$\c D_{L,\beta}$ in which case we require 
	$\c D_{L,\beta}\se \c D_{L',\beta'}$ whenever 
	$L'\geq L$ and $\beta'\geq\beta$ as well as 
	$\bigcup_{L,\beta} \c D_{L,\beta}=\bb Z\times\bb R$.
	We henceforth replace $[0,\beta]$ with $[-\beta/2,\beta/2]$
	so that 
	$\Lambda_L\times[-\beta/2,\beta/2]\nearrow\ZZ\times\RR$
	as $L,\beta\to\oo$.
	For a link-configuration $\om$  and $\alpha=1,2$, write
	\phantomsection\label{not Ealpha}%
	$\c E^\alpha=\c E^\alpha_\kappa$ 
	for the complement of the union of all unbounded
	connected
	components of small primal (if $\alpha=1$, resp.\ dual if $\alpha=2$)
	loops.  
	
	\begin{theorem}[Dimerization in the loop model for $d=1$]
		\label{qrepair:thm:convergence}
		Let $u\in[0,1)$ and $\alpha\in\{1,2\}$.  
		There is a constant  $\k_0=\k_0(u)>0$ such that for all 
		$\k\in[0,\k_0]$ there is 
		$n_0=n_0(u,\k)<\oo$
		such that the following holds.  
		Let $\c D^\alpha$ be elements of a sequence or doubly-indexed
		sequence  of primal (for $\alpha=1$)
		respectively dual (for $\alpha=2$) domains
		with $\c D^\alpha\nearrow \ZZ\times\RR$, and let
		$\Lambda_L=\{-L+1,\dots,L\}\subset\ZZ$. 
		There are two distinct infinite-volume Gibbs  measures
		$\PP^\alpha_{n,u}$, $\alpha=1,2$, such that
		as $\c D^\alpha\nearrow \ZZ\times\RR$ respectively as
		$\beta\to\infty$ and $L\to\oo$
		with $L\in2\mathbb Z+\alpha$,
		either simultaneously or in the order $\beta$ followed by $L$,
		\be
		\PP^\alpha_{\c D^\alpha,n,u} 
		\to
		\PP^\alpha_{n,u},
		\quad\text{respectively}\quad
		\PP^\p_{\Lambda_L,\beta,n,u} 
		\to
		\PP^\alpha_{n,u},
		\qquad
		\text{ weakly in } (\c M^\#,\c B(\c M^\#)).
		\ee
		The measures
		$\PP^\alpha_{n,u}$, $\alpha=1,2$, are
		supported on configurations with no
		infinite cluster of $\c E^\alpha$, they
		are $2\ZZ\times\RR$-invariant and ergodic,
		and they satisfy $\tau_{(1,0)}\PP^\alpha_{n,u}=\PP^{3-\alpha}_{n,u}$, where
		$\tau_{(1,0)}$ is the shift by $(1,0)$. 
	\end{theorem}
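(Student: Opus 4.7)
The plan is to use Theorem \ref{thm hole perimeter} to establish a Cauchy criterion for the sequence of finite-volume measures on local events. The central structural fact needed is a \emph{spatial Markov property} for the loop model: in any primal domain $\c D_\Gamma$ and for any point $x_0$, conditional on the random primal circuit $\gamma = \partial \c C_\kappa(x_0)$, the distribution of $\omega$ restricted to $\c D_\gamma$ under $\PP^1_{\c D_\Gamma,n,u}$ coincides with $\PP^1_{\c D_\gamma,n,u}$, independently of the outer domain. This holds because the surrounding collar of small primal loops in $\c P_\kappa(\omega)$ realises exactly the primal pairing of strands along $\gamma$ used to define the primal boundary condition on $\c D_\gamma$. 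The analogous statements hold in the dual case and for the periodic measures $\PP^{\r{per}}_{\Lambda_L,\beta,n,u}$, where for $L \in 2\ZZ+1$ the lateral sides $\{-L+1,L\}\times[-\beta/2,\beta/2]$ are adjacent to the small-primal-loop collar used to define $\c C_\kappa$.

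Given the Markov property, the convergence is immediate. Tightness on $(\c M^\#, \c B(\c M^\#))$ follows by observing that inserting a link changes $\ell(\omega)$ by at most one, so the reweighting $n^{\ell(\omega)}$ inflates the Poisson link intensity per edge per unit height by at most a factor $n$, giving a uniform upper bound on link counts in bounded sets. For any local event $A$ supported in a bounded set $R$, and any intermediate box $B \supset R$, the Markov property gives $\PP^1_{\c D,n,u}[A,\gamma\subset B] = \PP^1_{\c D',n,u}[A,\gamma\subset B]$ whenever $\c D, \c D' \supset B$, while Theorem \ref{thm hole perimeter} bounds $\PP^\alpha_{\c D,n,u}[\gamma \not\subset B] \le C\,e^{-C\,\mathrm{dist}(R,\partial B)}$ uniformly in $\c D$. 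Sending $B \nearrow \ZZ \times \RR$ yields a Cauchy sequence and hence a limit $\PP^1_{n,u}$; the uniformity of the bound makes the limit independent of the exhausting sequence and of the order of the limits $L, \beta \to \infty$. The dual construction, working in dual domains or with $L \in 2\ZZ$, gives $\PP^2_{n,u}$.

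The remaining properties follow routinely. Translation invariance under $2\ZZ \times \RR$ holds because any such translate of $\PP^\alpha_{n,u}$ is itself a limit of translated exhausting sequences, and by uniqueness equals the original; the identity $\tau_{(1,0)}\PP^1_{n,u} = \PP^2_{n,u}$ is a consequence of the unit horizontal shift swapping primal and dual columns, hence primal and dual boundary conditions. Distinctness is detected by the expected density of double bars per unit time on a fixed primal edge, which is bounded below by a positive constant under $\PP^1_{n,u}$ (the small-primal-loop background contributes two double bars per loop) but is strictly smaller under $\PP^2_{n,u}$ for $n$ large, since such a link would disrupt the surrounding small-dual-loop collar. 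Absence of an infinite component of $\c E^\alpha$ follows immediately from $\PP^\alpha_{n,u}[\per(\c C_\kappa(x_0)) < \infty] = 1$ for every $x_0$. Ergodicity follows from strong mixing: for local events $A, B$ in far-apart regions $R_1, R_2$, inserting a small-primal-loop circuit $\gamma$ in a separating annulus makes $A$ and $B$ conditionally independent given $\gamma$, so $|\PP[A\cap B]-\PP[A]\PP[B]|$ is controlled by the probability that no such separating $\gamma$ exists, which vanishes exponentially in $\mathrm{dist}(R_1,R_2)$ by the perimeter estimate.

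The principal obstacle is a rigorous formulation and proof of the spatial Markov property in the continuous Poisson setting. One must disintegrate the measure along the random circuit $\gamma$, carefully handling the Poisson link configurations on $\gamma$ itself and verifying that the loop-count $\ell(\omega)$ factors as required when gluing interior and exterior configurations along the small-loop collar. Once this is established, the rest of the proof is essentially a repackaging of the exponential tail in Theorem \ref{thm hole perimeter}.
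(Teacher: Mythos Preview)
Your overall strategy matches the paper's: tightness via stochastic domination by a rate-$n$ Poisson process, convergence via a spatial Markov property across a circuit of small primal loops, and the remaining properties (invariance, distinctness, absence of infinite $\c E^\alpha$-clusters, mixing hence ergodicity) all read off from Theorem~\ref{thm hole perimeter}. The step that does not go through is the displayed equality $\PP^1_{\c D,n,u}[A,\gamma\subset B] = \PP^1_{\c D',n,u}[A,\gamma\subset B]$. The circuit $\gamma=\partial\c C_\kappa(x_0)$ is, by definition, the boundary of the hole left after exploring the small-primal-loop cluster $\c P_\kappa(\omega)$ \emph{from} $\partial\c D$ inward; thus the event $\{\gamma\subset B\}$ is not $\c F_B$-measurable but depends on the configuration all the way out to $\partial\c D$, and the law of $\gamma$ genuinely changes with $\c D$. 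Your Markov statement only gives that the \emph{conditional} law of $\omega|_{\c D_\gamma}$ given $\gamma$ is domain-independent (and even this carries the extra conditioning that no small primal loop is adjacent to $\gamma$ from inside, the event $W_\gamma$ in the paper). It does not make the \emph{joint} law of $(\gamma,\omega|_{\c D_\gamma})$ domain-independent, and with no FKG there is no monotonicity to compare the $\gamma$-marginals.

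The paper closes this gap by a coupling. One samples $(\omega_1,\omega_2)$ independently from $\PP^1_{\c D_1}$ and $\PP^1_{\c D_2}$ and reruns the repair/perimeter argument on the doubled space $(\ZZ\times\RR)\times\{1,2\}$ (Proposition~\ref{qrepair:prop:convergence-prop}) to show that, with probability $1-O(e^{-C\,d(\c A,\c B^c)})$, there is a single primal circuit $\Gamma\subset\c B$ surrounding $\c A$ that is bordered by small primal loops of \emph{both} $\omega_1$ and $\omega_2$. On that event the loop count factors across $\Gamma$ for each $\omega_k$ with the \emph{same} primal pairing, and a Mecke disintegration along the links on $\Gamma$ yields $\EE^\otimes[\,X(\omega_1)-X(\omega_2)\,]=0$ on the good event for any $X$ supported in $\c A$; this is Lemma~\ref{lem TV}. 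With Lemma~\ref{lem TV} in hand your Cauchy argument for primal domains is correct as written, and the periodic case with $\beta\to\infty$ first (where Mecke is no longer available) is handled by first proving the $\beta=\infty$ limit is Gibbs (Lemma~\ref{lem Gibbs}), then finding an outermost circuit in $[-L,L]^2$ and invoking Lemma~\ref{lem TV} once more. So the ``principal obstacle'' you flag is real, but it is not the disintegration along $\gamma$ that is hard; it is that the single-configuration circuit $\gamma$ cannot serve to compare two different domains, and one is forced to produce a circuit common to a \emph{pair} of independent samples.
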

	
	Here, being a Gibbs measure means that the conditional distribution
	inside a bounded domain $\c D$, \emph{given} the configuration outside 
	$\c D$, is given by the expression \eqref{eq measure periodic}
	where loops inside $\c D$ are counted with respect to
	the connectivity imposed by the configuration outside $\c D$
	(see Section \ref{ssec conv} for a more detailed definition).
	Note that  Theorem \ref{thm hole perimeter} 
	(Perturbations of the ground state)
	extends to the limiting measures 
	$\PP^\alpha_{n,u}$, $\alpha=1,2$ since the bounds in that Theorem are
	uniform in the domain.
	Also note that Theorems \ref{thm hole perimeter} and
	\ref{qrepair:thm:convergence} hold for non-integer $n$.
	
	\begin{remark}
		For $\PP^\p_{\Lambda_L,\beta,n,u}$ we exclude the case $L\to\oo$
		followed by $\beta\to\oo$ in the above theorem and
		we do not expect to obtain either of the measures $\PP^\alpha_{n,u}$
		in the limit.  Intuitively, this is because if $L\to\oo$ first, then
		the boundary effect obtained by requiring $L$ to be odd is lost.  
		Note that in this case the `defect' 
		$\c C(x_0)$ in Theorem \ref{thm hole perimeter} can have arbitrarily
		large volume and bounded perimeter.
	\end{remark}

	\subsection{Translation of loop model results to the quantum system}
	
	The loop model is a powerful representation for the quantum spin
	system because the expectation of local observables of the quantum
	system can be written as probabilities of events in the loop
	model.
	Specifically, the Gibbs state $\langle\cdot\rangle_{\Lambda}$
	\eqref{eq gibbs state} is
	represented using the periodic loop-measure
	$\PP_{\Lambda,\beta,n,u}^\p$, while the seeded state
	$\langle\cdot\rangle_{\Lambda_L}^{\Psi_L}$
	\eqref{eq seeded state} is represented using
	$\PP_{{\c D}_{L,\beta},n,u}^\a$ where
	$\c D_{L,\beta}=(-L+\tfrac{1}{2},L+\tfrac{1}{2})\times
	(-\tfrac{\beta}{2},\tfrac{\beta}{2})$
	with $L\in2\bb Z+\a$
	as in \eqref{eq square domain}.
	While this is well known, we write and prove the following
	lemma which is slightly more general than results that usually 
	appear in the literature.  
	
	

	Consider a link-configuration $\om$ sampled from 
	$\PP_{\Lambda,\beta,n,u}^\p$ or
	$\PP^\alpha_{\Gamma,n,u}$, and a finite subset 
	$X=\{(x_1,t_1),\dots,(x_k,t_k)\}$ of
	$\Lambda\times[0,\beta]$ or of a domain $D_{\Gamma}$ in 
	$\bb Z\times\bb R$, respectively.
	The loops of $\om$ naturally
	produce a pairing $\ul{\pi}_X(\om)$ of the points
	$X^\pm=\{(x_i,t_i^-),(x_i,t_i^+) : i=1,\dots,k \}$.  For 
	$\ul{i}^-,\ul{i}^+\in[n]^k$ and a fixed pairing $\pi$ of $X^\pm$, we 
	write $\pi\sim(\ul{i}^-,\ul{i}^+)$  if for all pairs
	$(x_r,t_r^{\alpha_r}), (x_s,t_s^{\alpha_s})$ in $\pi$, we have
	$i_r^{\alpha_r}=i_s^{\alpha_s}$, where
	$\alpha_r,\alpha_s\in\{+,-\}$. 
	In this case we say that the pairing $\pi$ and 
	$\ul{i}^-,\ul{i}^+$ are \emph{compatible}. 
	Finally, write $\ell(\ul{\pi}_X)$ for the
	number of loops of $\om$ passing through points of $X$
	(this is a function of $\omega$ and $X$ which only depends on 
	the induced pairing $\ul{\pi}_X(\omega)$).
	
	We write $\ket{\ul{i}}$ for elements of the usual product
	basis  of $(\bb C^n)^{\otimes \Delta}$, and 
	$E_{\ul{i}^-,\ul{i}^+}=|{\ul{i}^-}\rangle\langle {\ul{i}^+}|$ for
	the elementary operator.    Recall that 
	$A(t)=\e{t H_\Lambda}A\e{-tH_\Lambda}$ and, 
	in the case $d=1$, recall that $\Lambda_L=\{-L+1,\dots,L\}$.

	\begin{lemma}\label{lem:eltry-matrices}
		Let $\Lambda$ be any finite graph, $\beta>0$. Let 
		$\Delta=\{x_1,\dots,x_k\}\subset\Lambda$,
		$\ul{i}^-,\ul{i}^+\in[n]^k$. 
		We have that
		\be\label{eq:elmat-1}
		\langle E_{\ul{i}^-,\ul{i}^+}(t) \rangle_{\Lambda,\beta,n,u}
		=
		\EE^\p_{\Lambda,\beta,n,u}[n^{-\ell(\ul{\pi}_{\Delta\times\{t\}}(\om))} 
		\Ind{\ul{\pi}_{\Delta\times\{t\}}(\om) \sim (\ul{i}^-,\ul{i}^+)} ]
		\ee
		and
		\be\label{eq:elmat-2}
		\langle  E_{\ul{i}^-,\ul{i}^+}(t)  \rangle_{\Lambda_L,\beta,n,u}^{\Psi_L}
		=
		\EE^\alpha_{\c D_{L,\beta},n,u}[n^{-\ell(\ul{\pi}_{\Delta\times\{t\}}(\om))} 
		\Ind{\ul{\pi}_{\Delta\times\{t\}}(\om) \sim (\ul{i}^-,\ul{i}^+)} ],
		\ee
		where $\alpha=1$ if $L$ odd, and $\alpha=2$ if $L$ even, 
		and $\Psi_L$ is given by \eqref{eq:Psi}.
	\end{lemma}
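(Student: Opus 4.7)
The plan is to apply the well-known Poisson/Lie--Trotter expansion of the Gibbs operator $e^{-\beta H_\Lambda}$ and read off both identities as ratios of integrals against $\PP_1$. Writing $H_\Lambda = -\sum_{xy\in\c E(\Lambda)} h_{xy}$ with $h_{xy} = u T_{xy} + (1-u) Q_{xy}$, the Dyson series gives
\begin{equation*}
e^{-\beta H_\Lambda} = e^{\beta |\c E(\Lambda)|}\int d\PP_1(\omega)\, L(\omega),
\end{equation*}
where $L(\omega)$ is the time-ordered product (with $t$ increasing from right to left) of the operators $T_{xy}$ attached to crosses $\cross$ and $Q_{xy}$ attached to double bars $\dbar$ of $\omega$; the prefactor $e^{\beta |\c E(\Lambda)|}$ compensates the Poisson normalization and will cancel between numerator and denominator in every ratio we form.

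For \eqref{eq:elmat-1}, I would insert $E_{\underline i^-,\underline i^+}$ at height $t$ inside the trace and sum over the product basis $\{|\underline j\rangle\}$ indexed by $\Lambda$. A cross $T_{xy}$ swaps the two basis labels on its strings; a double bar $Q_{xy} = \tfrac1n \sum_{a,b}|b,b\rangle\langle a,a|$ forces its four incident strings to be labelled in equal pairs and produces a factor $\tfrac1n$. Closing the lines around the circle $[0,\beta]$ by cyclicity of the trace, one sees that every loop of $\omega$ then contributes an independent sum over $n$ labels, so $\Tr(L(\omega)) = n^{\ell(\omega)}\cdot n^{-\#\dbar(\omega)}$. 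Inserting $E_{\underline i^-,\underline i^+}$ at $\Delta\times\{t\}$ pins the labels on the loops through those $k$ marked points, yielding $n^{\ell(\omega)-\ell(\underline \pi_{\Delta\times\{t\}})}\one\{\underline\pi\sim(\underline i^-,\underline i^+)\}\cdot n^{-\#\dbar(\omega)}$ by the same computation. The factors $n^{-\#\dbar(\omega)}$ and $e^{\beta |\c E(\Lambda)|}$ match those appearing in $Z$, so they cancel in the ratio and \eqref{eq:elmat-1} follows.

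For \eqref{eq:elmat-2}, I would apply the same expansion to $e^{-\beta H_{\Lambda_L}/2}$ on both sides inside $\c D_{L,\beta}$, reducing matters to evaluating $\langle \Psi_L|L(\omega)|\Psi_L\rangle$. The key observation is that each tensor factor $\tfrac1{\sqrt n}\sum_a|a,a\rangle_{-L+2i-1,-L+2i}$ of $\Psi_L$ is the unique $+1$-eigenvector of the $Q$-operator on that pair; therefore pinching with $|\Psi_L\rangle\langle \Psi_L|$ at heights $\pm\beta/2$ acts exactly like deterministic double bars placed on the edges $\{-L+2i-1,-L+2i\}$ at the top and bottom of the domain. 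These edges are primal when $L\in 2\bb Z+1$ and dual when $L\in 2\bb Z$, producing precisely the primal (respectively dual) boundary connectivity used to define $\PP^\alpha_{\c D_{L,\beta},n,u}$. The same bookkeeping as for \eqref{eq:elmat-1} then yields \eqref{eq:elmat-2}.

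The main obstacle is the bookkeeping in the seeded case: one must verify that the three competing normalizing constants --- the $\tfrac1{\sqrt n}$ in each summand of $\Psi_L$, the $\tfrac1{n}$ inside each $Q_{xy}$, and the combinatorial weight $n^{\ell_\alpha(\omega)}$ produced by loop counting under the appropriate boundary condition --- combine so that the spurious factors appear identically in numerator and denominator and drop out of the ratio. Once the identification of the $\Psi_L$-pinching with deterministic boundary double bars on the primal (or dual) edges is in place, this verification is direct but requires care.
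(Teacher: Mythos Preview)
Your approach is essentially the paper's: it too expands $e^{-\beta H_\Lambda}$ as a Poisson integral and reads off the trace via loop colorings (phrased there through functions $\sigma:\Lambda\times[0,\beta]\to[n]$ constant on loops), and it dismisses the seeded case as ``minor notational changes'' where you usefully spell out the identification of the $\Psi_L$-pinching with deterministic boundary double bars on primal/dual edges.

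One point deserves more care. The factor $n^{-\#\dbar(\omega)}$ you correctly extract from the $\tfrac1n$ in $Q$ is $\omega$-dependent and therefore does not ``cancel in the ratio'' as a constant; rather it reweights the underlying Poisson law, effectively shifting the double-bar intensity from $(1-u)$ to $(1-u)/n$, which is \emph{not} the $\PP^{\r{per}}_{\Lambda,\beta,n,u}$ defined in the paper. The paper's own proof is terse on exactly this point, citing \cite{ueltschi}, whose convention uses $nQ$ in place of $Q$ in the Hamiltonian; with that normalization the extra factor is absent and both your argument and the paper's go through cleanly. You should flag this normalization mismatch explicitly rather than assert cancellation.
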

	
	\begin{proof}
		The proofs for the two cases are the same up to minor notational
		changes, so we prove only \eqref{eq:elmat-1}
		and write
		$\langle\cdot\rangle$ for $\langle\cdot \rangle_{\Lambda,\beta,n,u}$
		and 
		$\EE_n$ for $\EE^\p_{\Lambda,\beta,n,u}$.
		
		Write $\Sigma$ for the set of (c\`adl\`ag) functions
		$\sigma:\Lambda\times[0,\beta]\to\{1,\dots,n\}$ (where $0=\beta$)
		that have only
		finitely many discontinuities in $t$.  For $\sigma\in\Sigma$,
		$\om\in\Om$, we write $\sigma\sim\om$ ($\sigma$ is compatible with
		$\om$) if $\sigma$ constant on each loop of $\om$. By uniformly
		colouring the loops of $\om$ with a number in $[n]$, one obtains a
		measure on $\Om\times\Sigma$ given by $\PP_1$ times the counting
		measure on compatible configurations $\sigma$.  
		As in \cite[Theorem~3.2]{ueltschi} we obtain
		\be\label{eq:pf-sigma}
		Z=\Tr\;\e{-\beta H_\Lambda}=\int_\Om \text{d}\PP_1(\om)
		\sum_{\sigma\in\Sigma}\Ind{\sigma\sim \om}=
		\int_\Om \text{d}\PP_1(\om) \;  n^{\ell(\omega)}.
		\ee
		Inserting the operator $E_{\ul{i}^-,\ul{i}^+}(t)$ modifies \eqref{eq:pf-sigma}
		by introducing discontinuities in $\sigma$ at $\Delta\times\{t\}$. To
		be precise, for $\ul{i}^-,\ul{i}^+\in[n]^k$, $\sigma\in\Sigma$,
		$\om\in\Om$, we write $\sigma\sim (\om; \ul{i}^-,\ul{i}^+)$ if $\sigma$
		is constant on the loops of $\om$ except possibly at $\Delta\times\{t\}$,
		where $\sigma(x_r,t^{\alpha}) = i_r^{\alpha}$, for all $r=1,\dots,k$
		and $\alpha\in\{+,-\}$.  
		Writing $P(\Delta,t)$ for the set of pairings of the points
		$X^\pm=\{(x_i,t^-),(x_i,t^+) : i=1,\dots,k \}$
		we then have 
		\be
		\begin{split}
			\langle  E_{\ul{i}^-,\ul{i}^+}(t)  \rangle
			&=
			\frac{1}{Z}\int_\Om \text{d}\PP_1(\om) 
			\Ind{\ul{\pi}_{\Delta\times\{t\}}(\om)\sim(\ul{i}^-,\ul{i}^+)}
			\sum_{\sigma\in\Sigma}\Ind{\sigma\sim (\om; \ul{i}^-,\ul{i}^+)}\\
			&=
			\frac{1}{Z}\int_\Om \text{d}\PP_1(\om) 
			\Ind{\ul{\pi}_{\Delta\times\{t\}}(\om)\sim(\ul{i}^-,\ul{i}^+)}
			n^{\ell(\om)} n^{-\ell(\ul{\pi}_{\Delta\times\{t\}}(\om))}\\
			&=
			\EE_n[n^{-\ell(\ul{\pi}_{\Delta\times\{t\}}(\om))}
			\mathbbm{1}\{\ul{\pi}_{\Delta\times\{t\}}(\om) \sim (\ul{i}^-,\ul{i}^+)\} ].
		\end{split}
		\ee
		The second equality holds because if $\sigma\sim (\om; \ul{i}^-,\ul{i}^+)$
		then the loops of $\omega$ that pair points of $X^\pm$ have exactly
		one possible colouring (determined by $\ul i^\pm$),
		whereas the remaining loops have $n$ possible colourings. 
	\end{proof}

	We now prove Theorem \ref{thm:dimersation}. The second part (exponential decay of correlations) was proved for $u=0$ in \cite{an, ADCW}, and there uses the FKG properties of the quantum FK percolation. We have no such properties for $u>0$. Our proof uses the full force of our work on the probabilistic representation in Section \ref{sec 1d}, making key use of
	the following estimate on the total-variation distance between the
	marginals of two measures defined in different large domains
	(proved in Section \ref{ssec conv}).  We state it for the primal case
	$\alpha=1$, although corresponding result also holds for the dual case
	$\alpha=2$.  
	
	\begin{lemma}\label{lem TV}
		Let $u\in[0,1)$.  There is an $n_0(u)$ such that for $n>n_0$ 
		there is a constant $C>0$ such that the following holds.
		Let $\c D_1$ and $\c D_2$ each be either of the form
		$\c D_\Gamma$ for a primal circuit $\Gamma$, or
		of the form
		$\Lambda_L\times[-\beta/2,\beta/2]$ for $L\in2\bb Z+1$
		and write $\PP^1_{\c D_i}$ for the measure 
		$\PP^1_{\Gamma,u,n}$ in the former case, respectively 
		$\PP^\p_{\Lambda_L,\beta,u,n}$ in the latter case.
		Let $\c A$ be a domain and $\c B$ a primal domain
		satisfying
		$\c A\se\c B\se \c D_1\cap\c D_2$.
		For any event $A$ depending only on the link-configuration 
		in $\c A$, we have
		\be
		|\PP^1_{\c D_1}(A)-\PP^1_{\c D_2}(A)|
		\leq \e{-Cd(\c A,\c B^c)}
		\ee
		where $d(\c A,\c B^c)$ denotes the minimal distance between 
		points in $\c A$ and outside $\c B$. 
	\end{lemma}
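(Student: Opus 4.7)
My plan combines three ingredients: a shielding circuit of small primal loops produced by Theorem \ref{thm hole perimeter}, the domain Markov property of the loop measure, and a coupling argument. First, I extract the shielding. Let $\c P(\omega)$ be the small-primal-loop cluster adjacent to $\partial \c D_i$ and, for $x_0 \in \c A$, let $\c C_\k(x_0)$ be the connected component of $x_0$ in $\c D_i \setminus \c P(\omega)$. Since $\c C_\k(x_0)$ is a simply connected primal domain one has $\operatorname{diam}(\c C_\k(x_0)) \leq \tfrac{1}{2}\per(\c C_\k(x_0))$, so the event $\{\c C_\k(x_0) \not\subset \c B\}$ forces $\per(\c C_\k(x_0)) \geq 2 d(\c A, \c B^c)$, which by Theorem \ref{thm hole perimeter} has probability at most $e^{-2 C d(\c A, \c B^c)}$. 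A union bound over a finite grid of points $x_0 \in \c A$ absorbs the combinatorial factor into the exponent, giving, with probability $\geq 1 - e^{-C' d(\c A, \c B^c)}$ under either $\PP^1_{\c D_i}$, that $\c P$ contains a primal circuit $\gamma \subset \c B \setminus \c A$ surrounding $\c A$ through which all loops are small primal.

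Second, on this good event $G$, let $\gamma^*$ be the outermost such circuit. Then $\{\gamma^* = \gamma\}$ is measurable with respect to the configuration outside $\gamma$, so $\gamma^*$ is a stopping set. The small primal loops just outside $\gamma^*$ impose the primal pairing on its boundary points; since the underlying Poisson process is independent across disjoint regions and the reweighting $n^{\ell(\omega)}$ factorizes across $\gamma^*$ once this pairing is fixed, the domain Markov property of the loop measure yields that, conditional on $\gamma^* = \gamma$, the restriction of $\omega$ to $\c D_\gamma$ is distributed as $\PP^1_\gamma$, independent of the ambient $\c D_i$.

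Finally, I would deduce the total-variation bound by coupling $\omega_1 \sim \PP^1_{\c D_1}$ and $\omega_2 \sim \PP^1_{\c D_2}$ via a common Poisson link process, aiming to force $\omega_1|_{\c A} = \omega_2|_{\c A}$ with probability at least $1 - e^{- C d(\c A, \c B^c)}$. Since the random circuits $\gamma^*_1$ and $\gamma^*_2$ from the first step need not coincide, I plan to iterate the shielding construction inside $\c D_{\gamma^*_1} \cap \c D_{\gamma^*_2}$ at successively finer annular scales so as to extract a common innermost shielding circuit around $\c A$, beyond which the two conditional interior measures coincide under the coupling. The main obstacle is making this joint shielding rigorous in the absence of FKG: the errors from repeated applications of Theorem \ref{thm hole perimeter} must telescope to a bound exponentially small in $d(\c A, \c B^c)$, which I expect to achieve by combining the stopping-set formalism with a careful layered use of the uniform perimeter bound, in the spirit of the analogous convergence arguments in \cite{DCPSS}.
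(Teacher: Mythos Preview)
Your first two steps are sound in spirit and close to what the paper does: a shielding circuit of small primal loops, together with the observation that the conditional law inside such a circuit is the primal-boundary loop measure. The gap is in your third step, and you have correctly flagged it as the main obstacle. Iterating the shielding inside $\c D_{\gamma^*_1}\cap\c D_{\gamma^*_2}$ does not obviously converge: after one round you have reduced the problem of comparing $\PP^1_{\c D_1}$ and $\PP^1_{\c D_2}$ in $\c A$ to comparing $\PP^1_{\gamma^*_1}$ and $\PP^1_{\gamma^*_2}$ in $\c A$, which is the same problem with random (and not obviously shrinking by a fixed factor) domains. Without FKG there is no monotonicity to force the circuits inward, and the error you pay at each layer is governed only by the distance from $\c A$ to the current boundary, which you do not control a priori.

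The paper avoids this altogether by working directly in the \emph{product} space. It proves a two-layer analogue of the perimeter bound (Proposition~\ref{qrepair:prop:convergence-prop}): on $\ZZ\times\RR\times\{1,2\}$, with $\ul\om=(\om_1,\om_2)$ sampled from $\PP^1_{\c D_1}\otimes\PP^1_{\c D_2}$, the connected component $\c K_{\c A}$ of $\c A\cup\c E_1\cup\c E_2$ is unlikely to reach $\c B^c$, by running the repair map simultaneously in both layers. Then, instead of two separate circuits, one defines a \emph{single} primal circuit $\Gamma$ by exploring inward from $\partial\c B$ through $\c E_1\cup\c E_2$; its horizontal segments lie on double-bars of one configuration and inside small primal loops of the other. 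A Mecke-formula computation then shows that, conditional on the joint outside, \emph{both} $\om_1$ and $\om_2$ have the same primal-boundary law in $\c D_\Gamma$, so $\EE^\otimes[(X(\om_1)-X(\om_2))\one_U]=0$ exactly. The idea you are missing is this joint circuit built from both configurations at once; it removes the need for any iterative matching.
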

	
	We also need \emph{Mecke's formula},
	which provides a method for conditioning on the exact locations of
	points in a Poisson process (see e.g.\
	\cite[Theorem 4.4]{last-penrose}).
	Recall that $\PP_1$ is the law of a Poisson process on 
	$(\ZZ+\frac12)\times\RR\times\{\cross,\dbar\}$.
	We write $\mu$ for its intensity measure, which is a product of $u$
	times Lebesgue measure on $(\ZZ+\frac12)\times\RR$
	with $1-u$ times the same.
	
	\begin{lemma}[Mecke's formula]
		\label{lem mecke}
		For 
		$f:((\ZZ+\frac12)\times\RR\times\{\cross,\dbar\})^2\to\bb R$
		any bounded measurable function,
		\be
		\EE_1\Big[
		\sum_{\eta\se\omega, |\eta|=m} f(\eta,\omega)
		\Big]
		=\int \dd \mu^{\odot m}(\eta)\; \EE_1[f(\eta,\omega\cup\eta)]
		\ee
		where 
		\be
		\dd \mu^{\odot m}(\{x_1,\dotsc,x_m\})
		=\frac1{m!} \sum_{\pi\in S_m}
		\dd\mu^{\otimes m}(x_{\pi(1)},\dotsc,x_{\pi(m)})
		\ee
		is the symmetrized $m$-fold product measure.
	\end{lemma}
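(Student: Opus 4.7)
The plan is to prove Mecke's formula by induction on $m$, using as the base case the classical Campbell--Mecke identity for Poisson point processes. Since the result is standard, in practice I would simply invoke \cite[Theorem 4.4]{last-penrose}; the sketch below merely indicates why it holds in our setting, where $\mu$ is a product of (multiples of) Lebesgue measure on $(\ZZ+\tfrac12)\times\RR$ with counting measure on $\{\cross,\dbar\}$.

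For the base case $m=1$, a monotone class argument reduces the identity to test functions of the form $f(\{x\},\om)=g(x)\one_B(\om)$, where $g$ is bounded measurable and $B$ depends only on the restriction of $\om$ to a bounded measurable set $A$. Using that $\om(A)\sim\mathrm{Poisson}(\mu(A))$ and that, conditionally on $\om(A)=k$, the atoms of $\om|_A$ are i.i.d.\ samples from $\mu(\cdot)/\mu(A)$, both sides expand into explicit series in $k$. After symmetrizing over the $k$ atoms of $\om|_A$, the desired equality reduces to the elementary identity $k/k!=1/(k-1)!$, with the factor $k$ on the left-hand side arising from the sum $\sum_{x\in\om\cap A}$.

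For the inductive step from $m-1$ to $m$, I would apply the $m=1$ case to the test function
\[
F(x,\om)\;=\;\sum_{\eta'\subseteq\om\setminus\{x\},\,|\eta'|=m-1}f(\eta'\cup\{x\},\om),
\]
thereby rewriting the left-hand side of the $m$-fold identity as a sum over a distinguished atom $x\in\eta$ together with an $(m-1)$-subset $\eta'=\eta\setminus\{x\}$. The $m=1$ case replaces $x$ by an integration against $d\mu$ and enlarges $\om$ to $\om\cup\{x\}$; applying the induction hypothesis to the remaining inner sum over $\eta'$ then replaces $\eta'$ by an integration against $d\mu^{\odot(m-1)}$. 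The combinatorial factor $m$ coming from the number of ways to distinguish a point in $\eta$ combines with the $1/m!$ appearing in $d\mu^{\odot m}$ to produce the $1/(m-1)!$ implicit in the induction step.

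No genuine obstacle arises. The only care required is the bookkeeping between ordered and unordered tuples, and the verification of measurability of the maps $(\eta,\om)\mapsto f(\eta,\om\cup\eta)$, both of which are routine in point-process theory.
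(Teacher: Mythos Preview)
Your proposal is correct and matches the paper's treatment: the paper does not prove this lemma but simply cites \cite[Theorem 4.4]{last-penrose}, exactly as you suggest doing in practice. Your additional inductive sketch is a fine supplement but goes beyond what the paper provides.
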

	
	\begin{proof}[Proof of Theorem \ref{thm:dimersation}]
		
		Let $A$, $B$ be local operators with respective supports $\Lambda(A)$,
		$\Lambda(B)$.  We write  $A$ as a
		linear combination of elementary matrices:
		\be
		A=\sum_{\ul{i}^-,\ul{i}^+\in [n]^{\Lambda(A)}}
		A_{\ul{i}^-,\ul{i}^+}
		E_{\ul{i}^-,\ul{i}^+}\otimes\one_{\Lambda\sm\Lambda(A)}.
		\ee
		Thus, by Lemma \ref{lem:eltry-matrices},
		\be\label{eq state prob}
		\langle A\rangle 
		=
		\sum_{\ul{i}^-,\ul{i}^+} A_{\ul{i}^-,\ul{i}^+}
		\langle E_{\ul{i}^-,\ul{i}^+} \rangle
		=
		\sum_{\ul{i}^-,\ul{i}^+} A_{\ul{i}^-,\ul{i}^+}
		\EE_n[n^{-\ell(\ul{\pi}_{\Lambda(A)\times\{0\}}(\om))}
		\mathbbm{1}\{\ul{\pi}_{\Lambda(A)\times\{0\}}(\om) \sim (\ul{i}^-,\ul{i}^+)\} ].
		\ee
		Here $\langle\cdot\rangle$ denotes either 
		$\langle\cdot\rangle_{\Lambda_L,\beta,n,u}$ or 
		$\langle\cdot\rangle_{\Lambda_L,\beta,n,u}^{\Psi_L}$, and
		$\EE_n$ denotes $\EE_{\Lambda_L,\beta,n,u}^\p$ in the former case,
		respectively $\EE_{\c D_{L,\beta},n,u}^\alpha$ in the latter case,
		where we recall the domain $\c D_{L,\beta}$ 
		from \eqref{eq square domain}.

		By Theorem \ref{qrepair:thm:convergence} the right-hand-side
		of \eqref{eq state prob}
		converges as $L,\beta\to\infty$ with $L\in2\ZZ+\alpha$,
		where the limits can be taken in any order or together in the case of 
		$\EE_{\c D_{L,\beta},n,u}^\alpha$, respectively $\beta$ and $L$ together or first
		$\beta\to\oo$ followed by $L\to\oo$ in the case of 
		$\EE_{\Lambda_L,\beta,n,u}^\p$.
		This gives the convergence of the finite volume states,
		and that the limiting states are different along the limits $L$ even
		and $L$ odd.
		
		It remains to prove the second part, exponential decay of truncated
		correlations.
		We work with $\alpha=1$ (the proof for $\alpha=2$ is the same),
		$\beta>t$, and $L$ large enough that
		$\Lambda_L\supseteq\Lambda(A)\cup\Lambda(B)$.
		To emphasize the dependence on the domain, in this part we write
		$\c D$ for $\Lambda_L\times(-\beta/2,\beta/2)$ and
		$\PP_{\c D}$ for the loop-measure in this domain
		(which is either $\EE_{\c D_{L,\beta},n,u}^\alpha$ or
		$\EE_{\Lambda_L,\beta,n,u}^\p$ but the proof works the same for both
		cases). 
		Observe that:
		\be\label{looptoquantum:eq:proof-working0}
		\begin{split}
			\langle A;B(t)\rangle
			&=
			\sum_{\ul{i}^-,\ul{i}^+,\ul{j}^-,\ul{j}^+} 
			A_{\ul{i}^-,\ul{i}^+} B_{\ul{j}^-,\ul{j}^+}
			\langle E_{\ul{i}^-,\ul{i}^+} ;
			E_{\ul{j}^-,\ul{j}^+}(t) \rangle,
		\end{split}
		\ee
		where the indices $\ul i^\pm$ and $\ul j^\pm$
		belong to $[n]^{\Lambda(A)}$ and $[n]^{\Lambda(B)}$ respectively.  
		We define
		\be
		\begin{split}
			X_A(\omega)   &=
			n^{-\ell(\ul{\pi}_{\Lambda(A)\times\{0\}}(\om))} 
			\mathbbm{1}\{\ul{\pi}_{\Lambda(A)\times\{0\}}(\om) \sim (\ul{i}^-;\ul{i}^+)\}, \\
			X_B(\omega)  &=
			n^{-\ell(\ul{\pi}_{\Lambda(B)\times\{t\}}(\om))} 
			\mathbbm{1}\{\ul{\pi}_{\Lambda(B)\times\{t\}}(\om) \sim (\ul{j}^-;\ul{j}^+)\},\\
			X_{AB}(\omega)
			&=
			n^{-\ell(\ul{\pi}_{(\Lambda(A)\times\{0\})\cup(\Lambda(B)\times\{t\})}(\om))} 
			\mathbbm{1}\{\ul{\pi}_{(\Lambda(A)\times\{0\})
				\cup(\Lambda(B)\times\{t\})}(\om)
			\sim (\ul{i}^-,\ul{j}^-;\ul{i}^+,\ul{j}^+)\}.
		\end{split}
		\ee
		By (a slight extension of) Lemma \ref{lem:eltry-matrices} we have
		\be
		\begin{split}
			\langle E_{\ul{i}^-,\ul{i}^+} ;
			E_{\ul{j}^-,\ul{j}^+}(t) \rangle
			&=
			\langle E_{\ul{i}^-,\ul{i}^+} 
			E_{\ul{j}^-,\ul{j}^+}(t) \rangle
			-
			\langle E_{\ul{i}^-,\ul{i}^+} \rangle
			\langle E_{\ul{j}^-,\ul{j}^+}(t) \rangle\\
			&=\EE_{\c D}[X_{AB} ]- \EE_{\c D}[X_A ]\EE_{\c D}[X_B ].
		\end{split}
		\ee
		Informally, Theorem \ref{thm hole perimeter}
		shows that when $\Lambda(A)\times\{0\}$ and $\Lambda(B)\times\{t\}$ are
		far apart, then with high probability the set $\Lambda(A)\times\{0\}$ is
		surrounded by a circuit of small primal loops which separates it from
		$\Lambda(B)\times\{t\}$.  Since no loops can cross this circuit,
		it follows firstly  that $X_{AB}$ factorizes as $X_AX_B$, and secondly
		the two factors $X_A$ and $X_B$ are conditionally independent given
		the circuit.  We now make this rigorous, relying on 
		Lemma \ref{lem TV}.
		See Figure \ref{fig circuit} for an illustration.
		
		Write $R$ for the minimal distance between
		$\Lambda(A)\times\{0\}$ and
		$\Lambda(B)\times\{t\}$, with respect to the $\oo$-distance 
		in $\bb R^2$.  Let $\c D_A\se \ZZ\times\RR$
		be the set of points within distance $R/2$ of
		$\Lambda(A)\times\{0\}$.  Also let
		$\c D_B\se \ZZ\times\RR$ be a primal domain
		containing $\Lambda(B)\times\{t\}$ whose boundary is at
		distance $R/4$ from $\Lambda(B)\times\{t\}$
		(up to an additive constant).  Thus $d(\c D_B,\c D_A)\geq R/4$.
		Let $U$ be the event that there is a closed circuit 
		of small primal loops contained 
		in $\c D_A$ which surrounds $\Lambda(A)\times\{0\}$.
		Let $V$ be the event that all loops containing points of
		$\Lambda(B)\times\{t\}$ are contained in $\c D_B$.
		We have $\PP_{\c D}(U^c\cup V^c)\leq\e{-CR/4}$ for some $C>0$, by
		Theorem \ref{thm hole perimeter} (where we fix $\k\in(0,\k_0(u)]$ such that
		the maximal height $\frac1{\k n}$ of a tall loop is $\ll R$; this is
		only really relevant if one of $\c D_A,\c D_B$ is `above' the other).
		On $U$ we have $X_{AB}=X_AX_B$ so
		\be \label{looptoquantum:eq:proof-working3}
		\EE_{\c D}[X_{AB} ]=
		\EE_{\c D}[X_{AB} \one_{U^c\cup V^c}]+
		\EE_{\c D}[X_{A}Y_B\one_U ]
		\ee
		where $Y_B=X_B\one_V$.  Note that $Y_B$ depends only on the
		configuration of links in
		${\c D_B}$, and that
		$|\EE_{\c D}[X_{AB} \one_{U^c\cup V^c}]|\leq
		\PP_{\c D}(U^c\cup V^c)\leq\e{-CR/4}$
		since $X_{AB}$ is bounded above by 1.
		We will now show that $\EE_{\c D}[X_{A}Y_B\one_U ]$
		is closely approximated by
		$\EE_{\c D}[X_{A} ]\EE_{\c D}[X_B]$.
		
		\begin{figure}[bht]
			\centering
			\includegraphics{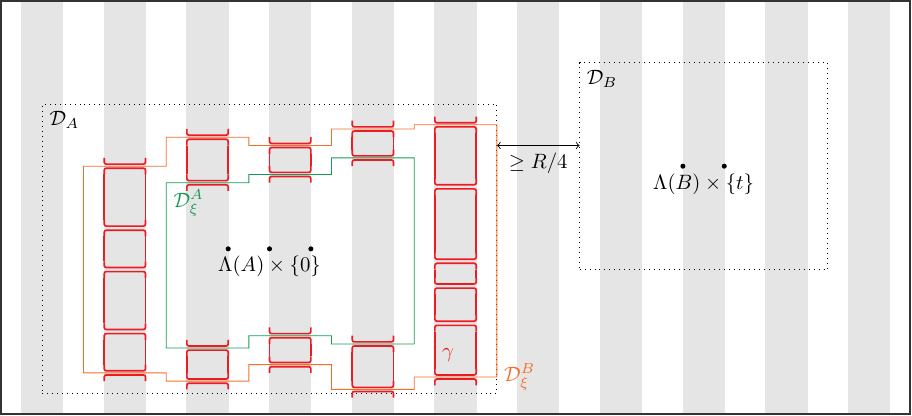}
			\caption{Illustration of the event $U$.  In red, $\gamma$ is the
				innermost circuit of small loops surrounding
				$\Lambda(A)\times\{0\}$, and the links (all $\dbar$) included in
				$\gamma$ form the configuration $\xi$.  The inside and outside of
				$\gamma$ define two primal domains 
				$\c D^A_\xi$ and $\c D^B_\xi$.  Since $\c D_B$ is at distance at least
				$R/4$ from $(\c D_\xi^B)^c$,  Lemma \ref{lem TV} implies that the
				measures defined in $\c D_\xi^B$ and in the whole domain $\c D$
				have very close marginal distributions in $\c D_B$.
			}\label{fig circuit}
		\end{figure}
		
		On $U$, let $\gamma$ be the innermost 
		circuit of small loops surrounding 
		$\Lambda(A)\times\{0\}$
		(i.e.\ closest to $\Lambda(A)\times\{0\}$).  Let
		$\Xi\subset\om$ be the links on $\gamma$.  Note that $\Xi(\om)$
		consists of only double bars.
		Using Mecke's formula (Lemma \ref{lem mecke})
		\be\begin{split}\label{looptoquantum:eq:proof-working1}
			\EE^1_{\c D}[X_{A}Y_B\one_U ]&=
			\frac1{Z_{\c D}}\sum_{r\geq0}
			\EE_1\Big[
			\sum_{\substack{\xi\se\omega\cap\c D_A \\ |\xi|=r}}
			X_A(\omega) Y_B(\omega) \one\{\Xi(\omega)=\xi\}
			n^{\ell(\omega)}\Big]\\
			&=\frac1{Z_{\c D}}\sum_{r\geq0}
			\int\dd\mu^{\odot r}(\xi)
			\EE_1[X_A(\omega\cup\xi)Y_B(\omega\cup\xi)
			n^{\ell(\omega\cup\xi)}
			\one\{\Xi(\omega\cup\xi)=\xi\}].
		\end{split}\ee
		On the event $\Xi(\omega\cup\xi)=\xi$, we may split
		$\c D=\gamma\cup\c D^A_\xi\cup \c D^B_\xi$, where 
		$\c D^A_\xi$ is the domain enclosed by $\gamma$ and
		$\c D^B_\xi=\c D\sm (\c D^A_\xi\cup\gamma)$.  
		Both $\c D^A_\xi$and $\c D^B_\xi$ are primal domains and
		they are separated by $\xi$ ($\c D^B_\xi$ need not be  simply
		connected but this makes no difference).  
		We may write $\omega=\omega_A\cup\omega_B$,
		where $\omega_A,\omega_B$ are the links in 
		$\c D^A_\xi, \c D^B_\xi$, respectively.  
		Since no loops can traverse
		between the two domains, we have 
		$\ell(\omega\cup\xi)=\ell(\omega_A)+\ell(\omega_B)+\ell(\gamma)$,
		where $\ell(\gamma)=|\xi|/2$ is the number of loops constituting
		$\gamma$ (each loop on $\gamma$ is bounded by two double-bars of
		$\xi$). 
		The configuration $\om_A$ is constrained to belong to the event
		$W_\xi$ that no small loops are adjacent to the boundary 
		$\partial\c D^A_\xi$, but  $\om_B$ has no such constraint. 
		Thus
		\be\begin{split}\label{looptoquantum:eq:proof-working2}
			&\EE_1[X_A(\omega\cup\xi)Y_B(\omega\cup\xi)
			n^{\ell(\omega\cup\xi)}
			\one\{\Xi(\omega\cup\xi)=\xi\}]\\
			&=n^{r/2}
			\EE_1[X_A(\omega_A)
			n^{\ell(\omega_A)}\one_{W_\xi}(\om_A)]
			\EE_1[Y_B(\omega_B)
			n^{\ell(\omega_B)}]\\
			&=n^{r/2}
			\EE_1[X_A(\omega_A)
			n^{\ell(\omega_A)}\one_{W_\xi}(\om_A)]
			Z_{\c D_\xi^B}
			\EE^1_{\c D_\xi^B}[Y_B].
		\end{split}\ee
		Consider the last factor, $\EE^1_{\c D_\xi^B}[Y_B]$.  
		We apply Lemma \ref{lem TV}
		with $\c D_1=\c D$, $\c D_2=\c B=\c D_\xi^B$
		and $\c A=\c D_B$ and use the fact that 
		$\c B^c=(\c D_\xi^B)^c$ is at distance at least $R/4$
		from $\c A=\c D_B$, to conclude that
		\be
		|\EE^1_{\c D_\xi^B}[Y_B]-
		\EE_{\c D}[Y_B]|\leq  \e{-CR/4}
		\ee
		for some $C>0$.
		Putting this back into \eqref{looptoquantum:eq:proof-working2},
		and reversing the steps in \eqref{looptoquantum:eq:proof-working1}, 
		we conclude that
		\be
		|\EE_{\c D}[X_{A}Y_B\one_U ]-
		\EE_{\c D}[X_{A}\one_U]\EE_{\c D}[Y_B]|
		\leq \e{-CR/4}.
		\ee
		Since also 
		$|\EE_{\c D}[Y_B]-\EE_{\c D}[X_B]|\leq \PP_{\c D}(V^c)
		\leq \e{-CR/4}$
		and
		$|\EE_{\c D}[X_A\one_U]-\EE_{\c D}[X_A]|\leq \PP_{\c D}(U^c)
		\leq \e{-CR/2}$, and using \eqref{looptoquantum:eq:proof-working3},
		it follows that
		$|\EE_{\c D}[X_{AB}]-\EE_{\c D}[X_{A}]\EE_{\c D}[X_{B}]|$
		decays exponentially in $R$, as required.
	\end{proof}

	\section{Dimerization and exponential decay
		in the loop model}\label{sec 1d}

	\subsection{Clusters and repair}
	\label{ssec clusters}
	
We now focus on the loop model. As noted above, our proof is inspired
by that of \cite{DCPSS} for the loop $O(n)$ model on the hexagonal
lattice. The main goal is to prove a ``repair map lemma'' (our
Proposition \ref{prop:repair}). Heuristically, this shows that a large
region of a primal domain without primal, short loops is exponentially
unlikely in the size of the region (and uniformly in the size of the
domain). The proof is based on mapping the configuration in such a
region to one with primal, short loops everywhere in that region, this
increasing the probability (or energy) of the configuration. This map
is called the repair map. One has to then check how many pre-images
the map has (the entropy), and show that the energy gain outweighs the
entropy loss.  
	
The proofs of Theorems \ref{thm hole perimeter} and
\ref{qrepair:thm:convergence} are essentially strengthened versions of
the repair map Proposition \ref{prop:repair}. Unlike in \cite{DCPSS}
we cannot use Proposition \ref{prop:repair} as an input in their
proofs, for technical reasons noted below; we are forced to run
stronger versions of its proof.\\

	We
	introduce some terminology which is illustrated in Figure
	\ref{fig clusters}. 
	Recall that a primal domain $\c D_\Gamma$ is bounded by a rectangular
	ciruit $\Gamma$ whose vertical segments are in dual (white) columns.

\begin{figure}[th]
  \begin{center}
    \includegraphics[scale=1]{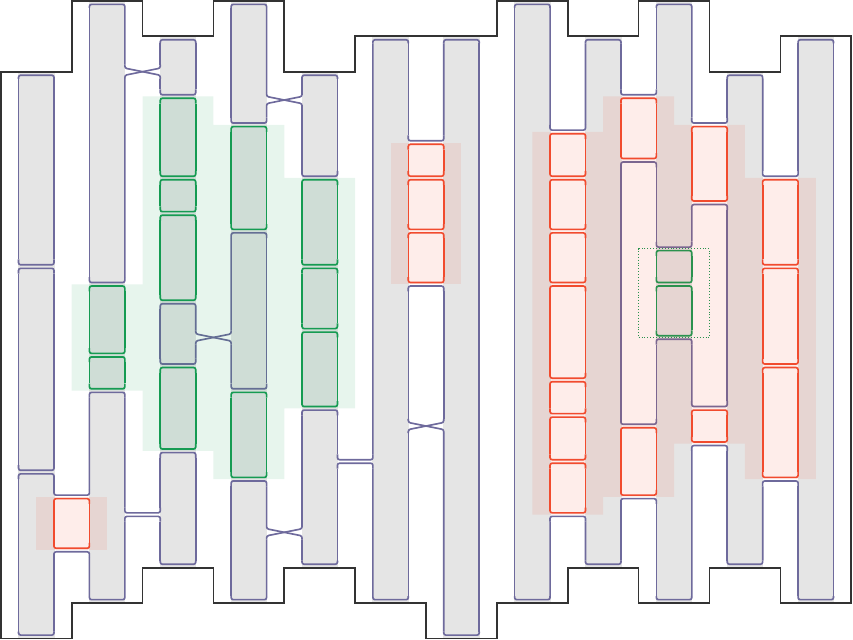}%
  \end{center}
  \caption{
    A configuration $\omega$ in a primal domain
    $\c{D}_\Gamma$.
    Long loops are drawn off-blue, small primal loops green, and small
    dual loops orange.
    Primal clusters are shaded green while dual
    clusters are shaded orange.  The small dual cluster in the lower left
    coincides with the support $\ol{\f{o}}$ of a single small dual loop $\f{o}$.
    The large dual cluster on the right contains a primal
    garden (dotted outline).
    The outside $\c O(\om)$ is the non-shaded region.
    The links strictly within this region form $\om^\out_\ast$, while
    $\om^\out$ additionally consists of double-bars at the boundary of this
    region. 
    In the upper part of the leftmost primal column there are two tall
      loops separated by a covered link.
    Note that loops do not cross cluster boundaries,
    and that all loops in $\c O(\omega)$ are long.
  }
  \label{fig clusters}
\end{figure}

\begin{figure}[ht]
  \begin{center}
    \includegraphics[scale=1]{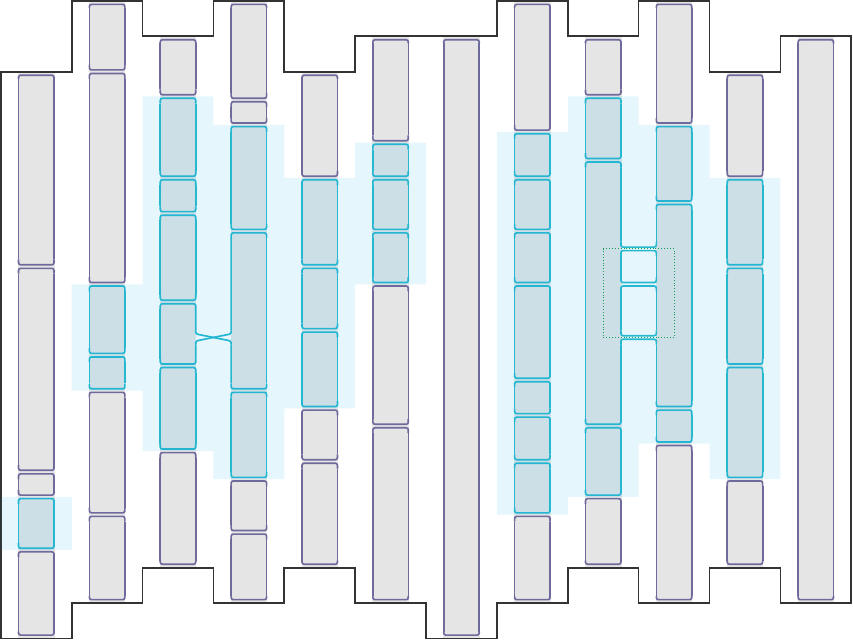}%
  \end{center}
  \caption{
    The repaired version $\bar\omega$ of the configuration
    $\omega$ in Figure \ref{fig clusters} with $\f C(\bar\omega,\bar\eta)$
    shaded turquoise.   The links on the boundary of 
    $\f C(\bar\omega,\bar\eta)$ constitute $\bar\eta$.
    Links belonging to 
    $\bar\omega^{\out}_*$ are drawn off-blue.
    Note that all loops in 
    $\c O(\bar\omega,\bar\eta)$ are trivial, primal loops.
    Tall primal loops in $\c O(\om)$ remain unchanged (e.g.\ the top of
    the leftmost column), while some additional tall loops, and thus
    covered links, may be created, as in the rightmost column and in the
    middle. 
  }
  \label{fig repair}
\end{figure}

	Consider a configuration $\omega$ in $\c{D}_\Gamma$, where $\Gamma$ is a
	primal or dual circuit.  
	Recall that we define a trivial loop as a loop in
	the configuration $\omega$ which visits only two double bars $\dbar$ 
	spanning a single edge $e$, and that 
	a trivial loop is called primal (respectively dual) if it is in a
	primal (respectively dual) column. 
	We define a small loop as a trivial loop with 
	vertical height $<\frac{1}{\k n}$,
	where $\k\geq0$ is a constant which remains to be specified. 
	A loop which is not small is called long
	and a long trivial loop is called tall.
	
	We define a \emph{garden} as the area enclosed by
	a circuit of  small loops;
	the garden is called primal (respectively, dual) if these small
	loops are primal (respectively, dual).
	More formally,
	if $\f{o}$ is a trivial loop traversing the vertical
	intervals $\{x\}\times[r,s]$ and $\{x+1\}\times[r,s]$ (and traversing
	double-bars $\dbar$ at $(x+\tfrac12,r)$ and $(x+\tfrac12,s)$),
	let us define its \emph{support}
	$\ol{\f o}:=[x-\frac12,x+\frac32]\times[r,s]\se\bb R^2$.
	Then two trivial loops $\f o_1\neq\f o_2$ are 
	adjacent if $\ol{\f o}_1\cap\ol{\f o}_2\neq\es$.
	If $\f o_1,\dotsc,\f o_k$ are small loops forming a
	closed circuit under this adjacency-relation,
	the corresponding  garden $\f g$ is the union of
	$\ol{\f o}_1\cup\dotsb\cup\ol{\f o}_k$ with the finite component of
	$\bb R^2\sm (\ol{\f o}_1\cup\dotsb\cup\ol{\f o}_k)$.
	We define a \emph{cluster}  as a garden
	which is maximal in the sense that is not contained in any other
	garden;  clusters are similarly primal or dual.
	
	Given $\omega$, write $\f C^1=\f C^1_\k(\omega)$ for the union of its primal
	clusters, $\f C^2=\f C^2_\k(\omega)$ for the union of its dual
	clusters and $\f C(\om)=\f C^1(\om)\cup\f C^2(\om)$.
	\phantomsection\label{not clusters}%
	We define the \emph{outside}
	$\c O(\om)=\c O_\k(\omega)=
	\c{D}_\Gamma\sm\f C(\omega)$.
	\phantomsection\label{not outside}%
	Note that $\f C(\om)$ is defined as a closed subset of $\bb R^2$ and
	$\c O(\om)$  as an open subset.
	By definition, any loop in $\c O(\omega)$ is long.
	We define the \emph{volume} $\vol(\c O)$
	\phantomsection\label{not vol outside}%
	as the total
	(vertical) length of the intervals of
	$\c O\cap(\bb Z\times \RR)$.  Thus,
	$\vol(\c O)$ coincides with the total length of all the (long) loops
	in $\c O$.
	
	We define  $\omega^{\out}_\ast=\omega\cap \c O(\omega)$,
	the restriction of $\omega$ to  $\c O$.
	\phantomsection\label{not omega-out}%
	Since we defined $\c O(\om)$ as an open set, 
	$\om^\out_\ast$ consists only of
	links `strictly' within $\c O$.
	We further define $\om^\out$ by adding to $\om^\out_\ast$
	all links on the boundary of the clusters $\f C$, 
	as well as a double-bar at each point
	where $\Gamma$ crosses a primal column if $\Gamma$ is primal
	(respectively, where $\Gamma$ crosses a dual column if $\Gamma$ is
	dual). 
	A link of $\omega^{\out}$ is called \emph{covered}
	if it is a double-bar $\dbar$, of which one half is adjacent to a tall
	loop and the other is adjacent to either another tall loop or to the
	boundary of $\c O$ (the latter case occurs if the link 
	lies on $\Gamma$ or on the boundary of $\f C$).
	A link of $\omega^{\out}$ which is not covered
	is called \emph{exposed}.
	We write $\omega^{\ex}$ for the set of exposed links
	and $\om^\ex_\ast=\om^\ex\cap\om^\out_\ast$.  
	Note that if $\k=0$ then all trivial loops are considered
	small, and $\om^\ex=\omega^{\out}$.

	Henceforth we focus on the case when $\Gamma$ is a primal 
	circuit and as before we write $\PP_{\Gamma,n,u}^1$ for  the distribution
	of $\omega$ in $\c{D}_\Gamma$ with primal boundary condition.
	Intuitively, for large $n$ we expect $\c{D}_\Gamma$ to be dominated by
	primal clusters if $\Gamma$ is primal, respectively dual clusters if
	$\Gamma$ is dual.  The alternative is that the outside $\c O$ forms a
	barrier between the boundary $\Gamma$ and the centre of the domain.  
	The main objective of this section is to prove the following:
	
	\begin{proposition}[Repair map]\label{prop:repair}
For any $u\in[0,1)$ there 
is a $\k_0(u)>0$ such that for
$\k\in[0,\k_0]$ there are constants
$C=C(u,\k)>0$ and $n_0=n_0(u,\k)<\oo$, such that
the following holds.  For all $n>n_0$, all $v>1$,
any primal circuit $\Gamma$, and any $x_0\in \c{D}_\Gamma$,
\be
\PP_{\Gamma,n,u}^1[\c O_\k\ni x_0, \vol(\c O_\k)>v] \le \e{-Cnv}.
\ee
\end{proposition}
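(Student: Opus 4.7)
The plan is a Peierls-type argument via a \emph{repair map} $\omega\mapsto\bar\omega$ that replaces a configuration whose outside $\c O_\k(\omega)$ contains $x_0$ by one in which all of $\c O_\k$ is filled by small primal loops. Given $\omega$, I would construct $\bar\omega$ by keeping unchanged the links inside the clusters $\f C(\omega)$ as well as the covered links supporting tall loops, while replacing the exposed links $\om^\ex$ inside $\c O(\omega)$ by a deterministic array of double-bars $\dbar$ stacked at vertical spacing slightly less than $1/(\k n)$ in every primal column of $\c O$. The resulting $\bar\omega$ has every primal column of $\c O$ tiled by small primal loops, absorbing $\c O$ into an enlarged primal cluster.

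Next I would compare weights under $\PP^1_{\Gamma,n,u}$. The main gain comes from the loop count: each unit of primal column in $\c O$ contributes roughly $\k n$ new small loops to $\bar\omega$, so $\ell(\bar\omega)-\ell(\omega)\geq c\,\k n\,\vol(\c O)$ for some $c=c(u)>0$, after subtracting the $O(\vol(\c O))$ loops of $\om^\ex$ that are destroyed. Combining with the Poisson density for the inserted double-bars (via Mecke's formula, Lemma \ref{lem mecke}) and for the removed exposed links, the resulting Radon--Nikodym factor is at least of order $(c'(u)\,n)^{c\,\k n\,\vol(\c O)}$ up to entropic corrections of order $\exp(C\vol(\c O))$ coming from the densities of inserted and removed links.

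Third I would bound the entropy, i.e.\ the number of pre-images $\omega$ of a given $\bar\omega$. The pre-image data splits as (i) the shape of the outside region $\c O$, and (ii) the exposed link configuration $\om^\ex$ inside $\c O$. For (i), a standard Peierls contour sum gives at most $\exp(C_1 v)$ connected shapes with $x_0\in\c O$ and $\vol(\c O)=v$, since the boundary of $\c O$ has total length bounded by a constant times $\vol(\c O)$. For (ii), a further Mecke-based estimate bounds the entropy of link placements by $\exp(C_2 v)$. Combining all factors yields a bound of the form $\exp((C_1+C_2)v)\cdot n^{-c\k n v}$, which for $n>n_0(u,\k)$ is at most $\e{-Cnv}$, as required.

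The main obstacle is the delicate bookkeeping at the boundary between $\c O$ and the clusters $\f C$, combined with the handling of tall loops and covered links. One must verify that the loop-count inequality $\ell(\bar\omega)-\ell(\omega)\geq c\k n\,\vol(\c O)$ is not spoiled by long loops braiding through many tall loops near the cluster boundary, and that the repair map is injective up to the data (i)--(ii) above. A related subtlety is choosing $\k_0(u)>0$ small enough that the primal column-filling gain genuinely outpaces the contribution of the tall loops that remain; this is where the threshold $\k_0(u)$ and the blow-up of $n_0(u)$ as $u\to1$ enter the argument.
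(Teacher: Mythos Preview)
Your outline has the right Peierls flavour but contains a real gap in the entropy step, and the repair map you propose differs from the paper's in a way that creates further difficulties.

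The claim that ``a standard Peierls contour sum gives at most $\exp(C_1 v)$ connected shapes with $x_0\in\c O$ and $\vol(\c O)=v$'' is not justified, and the stated reason---that the boundary of $\c O$ has length bounded by a constant times $\vol(\c O)$---is false. The boundary of $\c O$ consists of links of $\omega$, and nothing prevents a region of volume $v$ from being bounded by arbitrarily many closely-spaced links (think of a thin vertical strip of height $v$ with links at spacing $\ll 1/n$). The paper confronts this directly: it discretizes $\c O$ into blocks of height $h/n$, and then shows \emph{separately}, via stochastic domination by a rate-$n$ Poisson process (Lemma~\ref{lem:stoch dom}) together with binomial/Poisson large deviations, that the events ``too many blocks'' ($B_1$) and ``too many links in the block-outside'' ($B_2$) are themselves exponentially unlikely in $nv$. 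Only on the complement of $B_1\cup B_2$ does the repair comparison go through with a controllable entropy. Your sketch omits this entire layer.

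There is a second structural issue. Your repair map \emph{inserts} a deterministic array of $\sim\k n\,\vol(\c O)$ new double-bars, so $\bar\omega$ has a different link count from $\omega$ and is not obtained by a measure-preserving relabeling under $\PP_1$; making the Radon--Nikodym comparison rigorous against $Z_\Gamma(n)$ then requires substantial extra work you do not indicate. The paper's repair map instead only \emph{shifts and re-types existing links}, keeping $|\bar\omega|=|\omega|$; the gain is $\ell(\bar\omega)-\ell(\omega)\geq\tfrac14|\omega^{\ex}|$, proportional to the number of exposed links rather than to $\vol(\c O)$. This forces a genuine dichotomy that your single argument cannot cover: when $|\omega^{\ex}|\geq\delta nv$ the loop gain $n^{\frac14|\omega^{\ex}|}$ beats the entropy (Lemma~\ref{lem repair dense}); when $|\omega^{\ex}|<\delta nv$ the gain is too small and the paper argues instead that, after repair, the links in $\c O$ form essentially a rate-$(1-u)n$ Poisson process on the primal columns, for which having fewer than $\delta nv$ points is a large deviation (Lemma~\ref{lem repair sparse}). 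This second regime is where the constant $C$ in $\e{-Cnv}$ is pinned down and cannot be made arbitrarily large, and it is absent from your proposal.
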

	

	For the proof of Proposition \ref{prop:repair}
	we compare configurations $\omega$ with a large outisde
	$\c O(\omega)$ to `repaired' configurations where most of
	$\c O$ is instead taken up by small primal loops.
	To carry out the argument,
	we split into the cases when $|\omega^{\ex}|$ is `large' or
	`small', respectively.
	Intuitively, if $|\omega^{\ex}|$ is large
	then we get a large gain in likelihood (due to an
	increased number of loops) after repair.  If $|\omega^{\ex}|$
	is small,  however, the increase in number of loops is too 
	small to be useful;
	instead, we show that $|\omega^{\ex}|$ is unlikely to be
	small, essentially because it should behave like a Poisson process of
	rate $n$.

	In what follows, for $\delta>0$ we write
	\be
	A_v^{\geq\delta nv} =
	\{\om:\c O_\k(\om)\ni x_0, 
	\vol(\c O_\k(\om))\leq v,
	|\om^\ex|\ge\delta nv\}
	\ee
	and
	\be
	A_v^{<\delta nv} =
	\{\om:\c O_\k(\om)\ni x_0, \vol(\c O_\k(\om)) \in(v-1,v],
	|\om^\ex|<\delta nv\}.
	\ee
	
	\begin{lemma}[Many exposed links]\label{lem repair dense}
		Let $u\in[0,1)$, $\delta>0$ and $\kappa\geq0$.
		For any $C>0$, there is $n_0$ depending only on $u$,
		$C$ and $\delta$ such
		that for $n>n_0$
		\be\label{eq:repairlem-0} 
		\PP^1_{\Gamma,n,u}[A_v^{\geq\delta nv}]\leq \e{-C nv}.
		\ee
	\end{lemma}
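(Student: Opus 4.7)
My plan is a deterministic repair-map argument in the style of \cite{DCPSS}. For $\omega\in A_v^{\ge\delta nv}$ I construct $\bar\omega=T(\omega)$ by keeping every link outside $\c O(\omega)$, keeping every tall trivial loop and covered link inside $\c O(\omega)$, deleting the exposed links $\om^{\ex}_\ast$, and inserting a canonical periodic stack of small primal loops (pairs of $\dbar$ spaced vertically by $\tfrac{1}{2\kappa n}$ in each primal column of the now-empty exposed region).  I then bound $\PP^1_{\Gamma,n,u}[A_v^{\ge\delta nv}]$ by comparing the loop-measure weights of $\omega$ and $\bar\omega$, and enumerating preimages of $T$ via Mecke's formula (Lemma \ref{lem mecke}) applied to the exposed sub-configuration.

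The argument rests on three ingredients.  First, a Poisson-tail estimate: conditional on the shape of $\c O$ and on the non-exposed structure of $\omega$, the exposed links form a Poisson process on a region of volume $\le v$, and Mecke combined with Stirling bounds the density contribution of configurations with at least $\delta nv$ exposed links by $(e/(\delta n))^{\delta nv}\le n^{-\delta nv}e^{O(nv)}$.  Second, the repair tames the loop-weight factor $n^{\ell(\omega)}$, which is \emph{a priori} as large as $n^{|\om^{\ex}|}$: the canonical insertion creates $\Theta(\kappa nV_p)$ new small primal loops (with $V_p\le v$ the primal-column extent of the exposed region) while destroying at most $|\om^{\ex}|/2$ long loops in $\c O$ (each uses at least two links), giving a bound $n^{\ell(\omega)}\le n^{\ell(\bar\omega)+|\om^{\ex}|/2-c_1\kappa nv}$ for some $c_1>0$; summing the reconstituted $\bar\omega$ then recovers the partition function $Z$ up to the factors above.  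Third, a Peierls-type perimeter bound gives an entropy factor $e^{O(v)}$ for the admissible shapes of $\c O(\omega)\ni x_0$ with $\vol(\c O)\le v$.  Combining yields
\[
\PP^1_{\Gamma,n,u}[A_v^{\ge\delta nv}]\le e^{O(v)+O(nv)}\, n^{-\delta nv}\le\e{-Cnv}
\]
once $n_0$ is chosen so that $\delta\log n_0$ absorbs the constants plus $C$.

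The step I expect to be the main obstacle is the second: organizing the Mecke expansion together with the repair so that all combinatorial and density costs remain of order $e^{O(nv)}$ and do not overshadow the gain $n^{-\delta nv}$.  The delicate point is the interplay between $|\om^{\ex}|$ and $\kappa nv$: when these are comparable, the loop gain from the repair is small and one relies on the Poisson tail directly; when $|\om^{\ex}|\gg\kappa nv$, the loop gain from $T$ may even be negative and one must revert to the crude bound $n^{\ell(\omega)}\le n^{|\om^{\ex}|}$, valid since each link belongs to at most one loop, together with the stronger Poisson tail available for larger $k$.  Handling both regimes uniformly, while keeping the deterministic insertion pattern compatible with the definition of "small" versus "tall" loops, is the core technical content.
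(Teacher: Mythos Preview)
Your plan has a structural gap and conflates the mechanisms of the dense and sparse lemmas.

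The structural gap is the canonical insertion.  If you insert a \emph{deterministic} periodic stack of $\dbar$'s, the repaired configuration $\bar\omega$ lies in a $\PP_1$-null set, so ``summing the reconstituted $\bar\omega$'' does not recover $Z$; there is nothing to integrate against.  The paper's repair map $R$ avoids this entirely: it neither deletes nor inserts links, but only shifts dual-column links in $\c O$ one step left and retypes $\cross\mapsto\dbar$.  This keeps the link count fixed, so $d\PP_1(\omega)\le(4\hat u)^{|\bar\omega^{\out}|}\,d\PP_1(\bar\omega)$ with a bounded Radon--Nikodym factor per link (Lemma~\ref{lem repair}), and the map lands in a set of full measure.

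The conceptual issue is where the factor $n^{-\Theta(\delta nv)}$ comes from.  You try to extract it from a rate-$1$ Poisson tail for $\omega^{\ex}_\ast$ under $\PP_1$, but you must then carry the weight $n^{\ell(\omega)}$, and your own inequality $\ell(\omega)\le\ell(\bar\omega)+|\omega^{\ex}|/2-c_1\kappa nv$ hands back $n^{|\omega^{\ex}|/2}$, which cancels the tail when $|\omega^{\ex}|$ is large; the fallback ``$n^{\ell(\omega)}\le n^{|\omega^{\ex}|}$'' is false, since loops inside clusters contribute to $\ell(\omega)$ and have nothing to do with $\omega^{\ex}$.  In the paper the gain comes directly and unconditionally from the loop-count inequality \eqref{eq loop inc}: after the link-preserving repair, $\ell(\bar\omega)-\ell(\omega)\ge\tfrac14|\omega^{\ex}|\ge\tfrac14\delta nv$, yielding $n^{-\delta nv/4}$ with no Poisson tail and no regime split.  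The Poisson-tail idea you sketch is precisely the engine of the \emph{sparse} case, Lemma~\ref{lem repair sparse}, where one argues that \emph{few} links in the (repaired) exposed region is unlikely under a rate-$(1-u)n$ process.  Finally, there is no $e^{O(v)}$ Peierls bound for the shapes of $\c O$ in this continuum model; the paper discretises $\c O$ into blocks of height $\eps/n$ and handles the entropy via two auxiliary bad events $B_1,B_2$ (too many blocks, too many links in the block-outside), obtaining an $e^{O(nv)}$ entropy cost that is beaten by $e^{-(\delta/4)(\log n)nv}$ for large $n$.
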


	\begin{lemma}[Few exposed links]\label{lem repair sparse}
		Let $u\in[0,1)$.
		For $\delta>0$ small enough, there is a constant 
		$C=C(u,\delta)>0$ and $n_0=n_0(u,\delta)$  such that for all
		$n>n_0$ and all $\k\in[0, \delta]$ we have
		\be
		\PP^1_{\Gamma,n,u}[A_v^{<\delta nv}]\leq \e{-C nv}.
		\ee
	\end{lemma}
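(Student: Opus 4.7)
The plan is to show that, under $\PP^1_{\Gamma,n,u}$, the intensity of links at any point in $\c O(\om)$ is of order $n$ (rather than the ambient rate $1$ under $\PP_1$) because of the reweighting by $n^{\ell(\om)}$: inserting a link in $\c O$ typically increases the loop count by one, and this boosts the effective intensity by a factor $n$. Consequently the total number of exposed links in $\c O(\om)$ should be Poisson-dominated from below with mean of order $nv$, and the deficit event $\{|\om^\ex|<\delta nv\}$ will be exponentially unlikely by a standard Chernoff estimate.

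First, I would establish the deterministic bound $|\om\cap \c O(\om)|\le 3\delta nv$ on the event $A_v^{<\delta nv}$. Each tall loop in $\c O(\om)$ has vertical extent at least $1/(\k n)$ and accounts for at most two covered links (one at each of its endpoints), so since the total vertical length of loops in $\c O(\om)$ equals $\vol(\c O(\om))\le v$, the number of tall loops is at most $\k nv$ and the number of covered links is at most $2\k nv$. Combined with $|\om^\ex|<\delta nv$ and $\k\le\delta$, this gives the claimed bound.

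Next, I would apply Mecke's formula (Lemma \ref{lem mecke}) to compute the $\PP_n$-intensity at a point $x$ as
\[
\lambda_n(x) \;=\; \mu'(x)\cdot \EE_n\bigl[n^{\Delta(\om,x)}\bigr]
 \;=\; \mu'(x)\bigl(n\,\PP_n[\Delta=+1] + n^{-1}\,\PP_n[\Delta=-1]\bigr),
\]
where $\Delta(\om,x)\in\{-1,+1\}$ is the change in loop count produced by the insertion of a link at $x$. The geometric content is that for $x\in\c O(\om)$, the two long-loop segments incident to $x$ belong to the same loop with uniformly positive conditional probability, so $\PP_n[\Delta=+1\mid x\in\c O(\om)]\ge c=c(u)>0$; this yields $\lambda_n(x)\ge c\,n\,\mu'(x)$ and hence an effective Poisson intensity of order $n$ along $\c O(\om)$. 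A standard Chernoff argument applied to the resulting Poisson-like lower bound then gives $\PP^1_{\Gamma,n,u}[|\om^\ex|<\delta nv]\le \e{-Cnv}$ for any $\delta<c/2$, which combined with Step 1 proves the lemma provided $\delta\le\delta_0(u)$.

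The \emph{main obstacle} is the uniform lower bound $\PP_n[\Delta(\om,x)=+1\mid x\in\c O(\om)]\ge c$. This is topological in nature: one must show, uniformly over configurations $\om$ and over $x\in\c O(\om)$, that a positive fraction of potential insertion positions are loop-splitting rather than loop-merging, despite the non-local character of the weight $n^{\ell(\om)}$. A simpler workaround is to work with pairs of closely spaced double-bars rather than single links — a pair forming a new small trivial loop gives $\Delta=+1$ deterministically, sidestepping the topological issue — but the extra factor of $1/n$ from the positional measure of pairs leaves only an $\e{-Cv}$ decay, which is insufficient on its own. Extracting the full rate $\e{-Cnv}$ therefore requires the single-link Mecke estimate, combined with a careful exploitation of the "long-loops only" structure of $\c O(\om)$ guaranteed by its very definition; this is the technical heart of the argument and is where the restrictions $\k\le\delta$ and $\delta\le\delta_0(u)$ will be used.
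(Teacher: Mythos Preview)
Your proposal has a genuine gap at exactly the point you flag as the ``main obstacle,'' and the sketch does not resolve it.  The claim that inserting a single link at a point of $\c O(\omega)$ splits the loop with probability bounded below uniformly in $\omega$ and $x$ is not established, and there is no obvious reason it should hold: $\c O(\omega)$ is by construction the region where the loop structure is \emph{not} nice (long loops of arbitrary topology), and nothing in the definition prevents configurations in which, say, the two strands incident to $x$ lie on different long loops so that any insertion merges rather than splits.  Moreover, even granting an inequality of the form $\EE_n[n^{\Delta(\omega,x)}\one\{x\in\c O(\omega)\}]\ge cn\,\mu'(x)$, this is only a first-moment (intensity) lower bound.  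A Chernoff-type lower tail for $|\omega^\ex|$ needs a pointwise lower bound on the Papangelou intensity $\lambda^\ast(x,\omega)=\mu'(x)n^{\Delta(\omega,x)}$, i.e.\ stochastic domination from below by a Poisson process of rate $cn$; since $\Delta$ can equal $-1$, this is false as stated.  Finally, the set $\c O(\omega)$ depends on the full configuration $\omega$, so inserting a link at $x$ can change whether $x\in\c O$; making your Mecke computation rigorous would require disentangling this dependence, which you do not address.

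The paper avoids all of these issues by \emph{first applying the repair map} and only then appealing to a Poisson comparison.  After repair, the links in $\c O(\bar\omega,\bar\eta)$ are by construction all double-bars on primal columns, i.e.\ the repaired outside configuration lies in the event $T^1$.  Lemma~\ref{lem cond poisson} then gives an exact identity: conditional on $T^1$, the configuration is a Poisson process of intensity $(1-u)n$ on the primal columns.  Since $\vol(\c O^1(\bar\omega,\bar\eta))\ge\tfrac12(v-1)$ by \eqref{eq odd area}, the number of such links stochastically dominates a Poisson variable of mean $\tfrac12(1-u)n(v-1)$, and the large-deviations bound \eqref{eq ld poisson} gives the factor $\e{-Cnv}$ directly.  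The entropy cost of the repair is small precisely because on $A_v^{<\delta nv}$ one has $|\bar\omega^\out|\le 2\delta nv$ (your Step~1, which is correct), so the preimage count $(4\hat u)^{|\bar\omega^\out|}$ and the count of possible $\bar\eta$ are both $\e{O(\delta nv)}$ and are absorbed for small $\delta$.  In short, the repair map converts the hard topological question you raise into a deterministic structural fact, at a controllable entropy cost.
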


	Note that the $C$ in the exponent 
	in Lemma \ref{lem repair sparse}  is
	fixed, while the one in Lemma \ref{lem repair dense} can be taken
	arbitrarily large (by taking $n$ large). 
	
	Given these two lemmas, we prove Proposition \ref{prop:repair}
	as follows.

	\begin{proof}[Proof of Proposition \ref{prop:repair}]
		Write
		$A_v=\{\om:\c O_\k(\om)\ni x_0, \vol(\c O_\k(\om))\in(v-1,v]\}$.
		Combining Lemmas \ref{lem repair dense}
		and \ref{lem repair sparse}, with $\delta>0$ chosen small enough and
		$\k\in[0,\delta]$,   summing over $v$ we get
		\be
		\PP_{\Gamma,n,u}^1[\c O_\k\ni x_0, \vol(\c O_\k)\geq v] \le 
		\sum_{w=v}^\infty \PP_{\Gamma,n,u}^1[A_w]
		\le \e{-Cnv},
		\ee
		for some constant $C=C(u)>0$.
	\end{proof}

	\subsubsection{The repair map and basic tools}

Let us describe the proof strategy for
Proposition \ref{prop:repair} (briefly described at the start of
Section \ref{ssec clusters}).
We will compare a configuration $\omega$ 
belonging to the event 
$\{\c O\ni x_0, \vol(\c O)\geq v\}$
with a \emph{repaired} configuration $R(\omega)$.  
If the repaired configuration
is sufficiently more likely than the original configuration, and there
is not too much loss of information in the repair map $R$, then it
will follow that the event we started with was unlikely.
The gain in likelihood will be obtained by defining $R$ so that
$R(\omega)$ typically has significantly more loops than $\omega$,
thereby boosting the weight factor $n^{\#\r{loops}}$.  However, it is
essential to also control the number of possible preimages of a given
repaired configuration $R(\omega)$ and to show that it does
not offset the gain in likelihood.  This part of the argument
is significantly harder in our situation than in
the discrete setting of \cite{DCPSS},
essentially because the continuous nature of our model 
allows for the loss of information to be on an arbitrarily
larger scale than the gain in likelihood.  
We will deal with this by identifying `bad' configurations and
bounding their probability.
	
	We define the repair map $R$ as follows,  see
	Figure \ref{fig repair} for an
	illustration.  From the configuration 
	$\om$,  form a new configuration $\bar\omega$ by
	shifting the dual clusters $\f C^2(\om)$ (and all links in them)
	left one step and,  in $\c O(\om)$,
	shifting  any links on dual  columns left one step, as well as 
	changing any crosses $\cross$ to double  bars $\dbar$. 
	(The primal clusters $\f C^1(\om)$ are kept fixed,
	and inside the clusters no links are changed). 
	We write $\eta\se\om$ for the links which lie on the
	boundaries of 
	the clusters in the original configuration $\omega$, and we let
	$\bar\eta\se\bar\omega$ denote the image of $\eta$ under the operations
	described above.  Then we define 
	$R(\omega)=(\bar\omega,\bar\eta)$.
	We will refer to $\bar\om$ as the \emph{repaired configuration}.
	
	
	For $(\bar\omega,\bar\eta) =R(\omega)$, let 
	$\bar\omega^{\out}$ and $\bar\omega_\ast^{\out}$
	denote the images of
	the respective sets $\omega^{\out}$ and $\omega_\ast^{\out}$.
	Let 
	$\f C(\bar\omega,\bar\eta)$ be the union of the
	regions bounded by $\bar\eta$ (images of the clusters), and let 
	$\c O(\bar\omega,\bar\eta)=\c{D}_\Gamma\sm\f C(\bar\omega,\bar\eta)$.  
	Note that  $\bar\omega^{\out}$, $\bar\omega^{\out}_\ast$, 
	$\f C(\bar\omega,\bar\eta)$ 
	and $\c O(\bar\omega,\bar\eta)$ can be uniquely
	reconstructed from the pair $(\bar\omega,\bar\eta)$.
	Similarly to $\om^\out$, we call a  link of $\bar\omega^{\out}$  
	\emph{covered} if it is a double-bar of which one half is adjacent to
	a tall loop and the other is adjacent to  another tall loop or to the
	boundary of $\c O(\bar\om,\bar\eta)$, and 
	a link of $\bar\omega^{\out}$ which is not covered
	is called \emph{exposed}.
	We write  $\bar\omega^{\ex}$ for the exposed links of $\bar\om$.
	
	We now make 
	the following  observations about the mapping $R$.
	\begin{enumerate}[leftmargin=*,label=\arabic*.]
		\item  If $(\bar\omega,\bar\eta) =R(\omega)$ for some $\omega$,
		then
		in order to reconstruct $\om$ from the pair $(\bar\omega,\bar\eta)$
		it suffices to know for each link in $\bar\omega^{\r{out}}$ 
		whether it was shifted or not and whether
		it was changed from a cross to double-bar or not.  Thus
		\be\label{eq preim}
		|R^{-1}(\bar\omega,\bar\eta)|\leq 4^{|\bar\omega^{\r{out}}|}.
		\ee
		\item Under the repair map all the non-trivial
		loops in $\c O(\om)$ become trivial loops
		(while the number of loops inside
		clusters and the number of tall primal loops does not change).
		Thus, the total number of loops increases;
		in fact,
		\be\label{eq loop inc}
		\ell(\bar\omega)-\ell(\omega)\geq 
		\tfrac14 |\omega^{\ex}|\geq \tfrac14 |\bar\omega^{\ex}|.
		\ee
		To see this, recall that $\om^\ex_\ast$ are the exposed links strictly
		in $\c O(\om)$, thus $\om^\ex\sm\om^\ex_\ast$
		are the exposed links on the boundary of $\c O(\om)$.
		Before repair every non-trivial loop in $\c O(\om)$ traverses
		exposed links at least 4 times each, so the number of such loops is at
		most $\frac14(2|\om^\ex_\ast|+|\om^\ex\sm\om^\ex_\ast|)$.
		After repair, all loops in $\c O(\bar\om,\bar\eta)$ are trivial and
		traverse links exactly 2 times each, thus there are exactly
		$\frac12(2|\om^\ex_\ast|+|\om^\ex\sm\om^\ex_\ast|)$ of them.
		This gives the first inequality in \eqref{eq loop inc}.
		To  see the second inequality, note that any covered  link in
		$\om^\out$ is mapped to a covered link of $\bar\om^\out$, while some
		exposed links of $\om^\out$ can be mapped to covered links of
		$\bar\om^\out$ (for example the two links on the boundary $\Gamma$ in
		the rightmost column of Figures \ref{fig clusters} and \ref{fig repair}).
		\item Let $\vol(\c O(\bar\omega,\bar\eta))$
		denote the total length of the columns in 
		$\c O(\bar\omega,\bar\eta)$,
		and $\vol(\c O^1(\bar\omega,\bar\eta))$
		the total length of  primal (grey) columns.  Then
		\be\label{eq odd area}
		\vol(\c O^1(\bar\omega,\bar\eta))\geq
		\tfrac12\vol(\c O(\bar\omega,\bar\eta))
		\geq\tfrac12 \vol(\c O(\omega)).
		\ee
		Indeed, the first inequality holds since 
		if a point in a dual (white) column belongs to 
		$\c O(\bar\omega,\bar\eta)$ then the point in the primal column to its
		left also belongs to $\c O(\bar\omega,\bar\eta)$, while 
		the second inequality holds since no area is taken away when the dual
		clusters are shifted.  
	\end{enumerate}
	
	Recall that $\PP_1$ is the law of a Poisson process of intensities
	$u,1-u$ and  write 
	$Z_\Gamma(n)=\EE_1[n^{\ell(\omega)}]$ 
	for the partition function in the primal domain $\c D_\Gamma$.
	The following is a key lemma which allows to compare the probability
	of an event $A$ with its repaired version
	$R(A)=\{R(\om):\om\in A\}$.
	
	\begin{lemma}\label{lem repair}
		Let $\hat u=\tfrac{u}{1-u}\vee1$.
		For any event $A$ depending on the links in $\c D_\Gamma$, 
		\be\label{eq repair}
		\PP^1_{\Gamma,n,u}(A)\leq \frac1{Z_\Gamma(n)}
		\int \r d \PP_1(\bar\omega)\; n^{\ell(\bar\omega)}
		\sum_{\substack{\bar\eta\se\bar\omega: \\
				(\bar\omega,\bar\eta)\in R(A) }}
		(4\hat u)^{|\bar\omega^{\out}|}
		n^{-\frac14|\bar\omega^{\ex}|}.
		\ee
	\end{lemma}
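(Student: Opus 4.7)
The plan is to open the left-hand side via the definition $Z_\Gamma(n)\PP^1_{\Gamma,n,u}(A)=\EE_1[n^{\ell(\omega)}\one_A(\omega)]$ and then reparametrize the integral over $\omega$ by its image $(\bar\omega,\bar\eta)=R(\omega)$ under the repair map. Since $R$ preserves the number of links, it is natural to decompose the Poisson expectation according to $k=|\omega|=|\bar\omega|$ and analyse the contribution at fixed $k$. For each $k$, the distribution of $\omega$ under $\PP_1$ is $\tfrac{1}{k!}e^{-|\mu|}\dd\mu^{\otimes k}$, where $\mu=u\,\lambda\otimes\delta_\cross+(1-u)\,\lambda\otimes\delta_{\dbar}$ is the intensity measure and $\lambda$ is Lebesgue measure on edges.

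There are three quantitative ingredients. First, the pointwise loop bound $n^{\ell(\omega)}\le n^{\ell(\bar\omega)}\,n^{-|\bar\omega^{\ex}|/4}$ from observation \eqref{eq loop inc}. Second, the preimage count $|R^{-1}(\bar\omega,\bar\eta)|\le 4^{|\bar\omega^{\out}|}$ from \eqref{eq preim}. Third, and most importantly, the ratio of densities between $\omega$ and $\bar\omega$. The map $R$ is built from two kinds of operations: horizontal translations (shifting dual clusters and individual links in dual columns of $\c O$ by one unit), and type flips $\cross\mapsto\dbar$ on certain crosses in $\om^\out_\ast$. Horizontal shifts are measure-preserving for $\mu$, whereas each type flip changes the per-link density from $u$ to $1-u$. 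Writing $k_f$ for the number of flipped crosses, all of which lie in $\bar\omega^{\out}$, we have $\dd\mu^{\otimes k}(\omega)/\dd\mu^{\otimes k}(\bar\omega)=(u/(1-u))^{k_f}$; by the definition $\hat u=(u/(1-u))\vee 1\ge1$ this ratio is bounded by $\hat u^{k_f}\le \hat u^{|\bar\omega^{\out}|}$.

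Combining these ingredients: for each target $(\bar\omega,\bar\eta)\in R(A)$, its (at most) $4^{|\bar\omega^{\out}|}$ preimages each contribute at most $\dd\mu^{\otimes k}(\bar\omega)\,\hat u^{|\bar\omega^{\out}|}\,n^{\ell(\bar\omega)}\,n^{-|\bar\omega^{\ex}|/4}$ to the integrand, yielding a total contribution of $(4\hat u)^{|\bar\omega^{\out}|}\dd\mu^{\otimes k}(\bar\omega)\,n^{\ell(\bar\omega)}\,n^{-|\bar\omega^{\ex}|/4}$. Summing over pairs $(\bar\omega,\bar\eta)$ (i.e.\ integrating over $\bar\omega$ and summing over $\bar\eta\subseteq\bar\omega$ with $(\bar\omega,\bar\eta)\in R(A)$) and reassembling the Poisson factor $e^{-|\mu|}/k!$ into $\dd\PP_1(\bar\omega)$ gives exactly the right-hand side of \eqref{eq repair}; dividing by $Z_\Gamma(n)$ completes the argument.

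The main obstacle is a careful bookkeeping of the reparametrization, since $R$ is a many-to-one map between continuous point processes. The key observation which makes this clean is that, for a fixed continuous image $\bar\omega$, the preimages differ only by a discrete choice (shift-or-not and flip-or-not) for each link in $\bar\omega^{\out}$, so the reparametrization is a finite combinatorial sum superimposed on a measure-preserving change of variables on the position coordinates. Handling this link by link — verifying that the composition of shifts on positions and flips on types has precisely the density $(u/(1-u))^{k_f}$ with respect to $\dd\mu^{\otimes k}$ — is a straightforward but somewhat tedious verification, after which the rest of the proof is the combinatorial assembly described above.
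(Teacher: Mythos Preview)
Your proposal is correct and uses the same three ingredients as the paper: the loop-count inequality \eqref{eq loop inc}, the preimage bound \eqref{eq preim}, and the density ratio $(u/(1-u))^{k_f}\le\hat u^{|\bar\omega^{\out}|}$ coming from the fact that shifts preserve $\mu$ while each $\cross\mapsto\dbar$ flip contributes a factor $u/(1-u)$.

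The only organizational difference is in how the bookkeeping of the many-to-one change of variables is packaged. You condition on $k=|\omega|$, work with the explicit $k$-point density $\tfrac{1}{k!}e^{-|\mu|}\dd\mu^{\otimes k}$, and carry out the reparametrization $\omega\mapsto\bar\omega$ directly, with the sum over $\bar\eta\subseteq\bar\omega$ emerging from grouping preimages according to their cluster-boundary links. The paper instead invokes Mecke's formula (Lemma~\ref{lem mecke}) at the outset to strip off $\eta=\partial^{\r{link}}\f C(\omega)$ as a separate integration variable, performs the change of variables on the pair $(\omega,\eta)\mapsto(\bar\omega,\bar\eta)$, and then applies Mecke in reverse. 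The Mecke route has the advantage that, once $\eta$ is fixed, the cluster geometry is frozen and the per-link operations (shift/flip) become deterministic functions of position and type, so the density comparison $\dd\PP_1(\omega)\le (4\hat u)^{|\bar\omega^{\out}|}\dd\PP_1(\bar\omega)$ can be stated in one line. Your approach avoids Mecke entirely and is in that sense more elementary, at the cost of the ``tedious verification'' you mention: one must check that the map $R_1:\omega\mapsto\bar\omega$ decomposes the $k$-point configuration space into measurable pieces on each of which it is a shift-and-flip bijection with the stated Jacobian. Both routes are sound.
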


	\begin{proof}
		Writing $\PP_n$ for $\PP^1_{\Gamma,n,u}$ and using Mecke's formula,
		Lemma \ref{lem mecke}, 
		\be\begin{split}\label{eq repair mecke}
			\PP_n(A)&=\frac1{Z_\Gamma(n)}
			\EE_1[ n^{\ell(\omega)} \one_A(\omega)]\\
			&= \frac1{Z_\Gamma(n)}\sum_{r\geq0} \int\r d\mu^{\odot r}(\eta)
			\int \r d \PP_1(\omega)\; n^{\ell(\om\cup\eta)} \one_A(\om\cup\eta)
			\one\{\eta=\partial^{\r{link}}\f C(\om\cup\eta)\}.
		\end{split}\ee
		Here we wrote $\partial^{\r{link}}\f C$ for the set of links on the
		boundary of clusters.
		Now write $(\bar\om,\bar\eta)=R(\om\cup\eta)$ and note that we have 
		$\r d\mu^{\odot r}(\bar\eta)=\r d\mu^{\odot r}(\eta)$ by symmetry,
		$n^{\ell(\om\cup\eta)}\leq
		n^{\ell(\bar\om\cup\bar\eta)}n^{-\frac14|\bar\om^\ex|}$
		by \eqref{eq loop inc}
		$\one_A(\om\cup\eta)\leq \one_{R(A)}(\bar\om\cup\bar\eta,\bar\eta)$
		and
		$\one\{\eta=\partial^{\r{link}}\f C(\om\cup\eta)\}\leq
		\one\{\bar\eta=\partial^{\r{link}}\f C(\bar\om\cup\bar\eta,\bar\eta)\}$
		by definition, and
		\be
		\r d\PP_1(\om)\leq 
		4^{|\bar\om^\out|} \big(\tfrac u{1-u}\big)^{\#\cross\text{ in }\om^\out}
		\r d\PP_1(\bar\om) 
		\leq (4\hat u)^{|\bar\om^\out|} 
		\r d\PP_1(\bar\om) 
		\ee
		by \eqref{eq preim}.
		Putting all this into \eqref{eq repair mecke} and using Mecke's formula in
		reverse, we arrive at \eqref{eq repair}.
	\end{proof}
	
	In preparation for the proofs of Lemmas \ref{lem repair dense} 
	and \ref{lem repair sparse},
	we collect here some basic
	properties of our loop-model, starting with stochastic domination.
	We say that an event $A$ is
	\emph{increasing} if 
	$\om\in A$ and $\om'\supseteq\om$ imply that $\om'\in A$.  Write
	$\PP_{a,b}^{\r{Poi}}$ for the probability measure under which $\omega$ is a Poisson
	process of intensities $a$ and $b$ for $\cross$ and $\dbar$ respectively.
	\phantomsection\label{not Pn}%
	
	\begin{lemma}\label{lem:stoch dom}
		Let $\PP_n$ denote the distribution of the loop model with any
		boundary condition.  For any increasing event $A$
		we have
		$\PP_{n}(A)\leq \PP_{un,(1-u)n}^{\r{Poi}}(A)$.
	\end{lemma}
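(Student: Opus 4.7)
The plan is to compare $\PP_n$ and $\PP^{\r{Poi}}_{un,(1-u)n}$ through their Papangelou (conditional) intensities. Since $\PP_n$ has density $n^{\ell(\om)}/Z$ with respect to the Poisson reference measure $\PP_1$, the Georgii--Nguyen--Zessin formula (the Gibbsian extension of Mecke's formula, Lemma~\ref{lem mecke}) identifies its Papangelou intensity as
\be
\lambda^*_n(x,\om)\;=\;\lambda(x)\, n^{\ell(\om\cup\{x\})-\ell(\om)},
\ee
where $\lambda(x)$ equals $u$ at a $\cross$ and $1-u$ at a $\dbar$. By contrast, the Papangelou intensity of $\PP^{\r{Poi}}_{un,(1-u)n}$ is the deterministic function $n\lambda(x)$.

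The main step is to establish the local loop-count inequality
\be
\ell(\om\cup\{x\})-\ell(\om)\;\in\;\{-1,0,+1\},
\ee
valid for every $\om$, every $x\notin\om$ and every boundary condition considered in the statement. I would prove this by a local analysis at the position $(e,t)$ of the inserted link: the four strand-ends meeting there are paired in one of three ways by the local link-type ($\cross$, $\dbar$, or nothing), while the rest of $\om$ induces a second matching on those same four points; the concatenation of these two matchings always produces one or two cycles through $(e,t)$, so switching only the local matching alters the total loop count by at most $\pm 1$. Consequently
\be
\lambda^*_n(x,\om)\;\leq\; n\,\lambda(x)
\ee
pointwise in $(x,\om)$ and uniformly in the boundary condition.

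Given this pointwise domination of Papangelou intensities, the conclusion $\PP_n\preceq \PP^{\r{Poi}}_{un,(1-u)n}$ on increasing events follows from the standard stochastic-domination comparison theorem for point processes (see e.g.\ \cite[Ch.~20]{last-penrose}); alternatively it can be obtained by an explicit birth-and-death coupling using the ratio $\lambda^*_n(x,\om)/(n\lambda(x))\leq 1$ as acceptance probability. The only part requiring genuine care is the loop-count identity together with verifying that it holds uniformly across the primal, dual, and periodic boundary conditions; the remaining steps are routine.
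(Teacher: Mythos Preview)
Your proposal is correct and follows essentially the same route as the paper: the paper's proof is a one-line invocation of \cite[Theorem~1.1]{georgii-kuneth}, which is precisely the Papangelou-intensity comparison theorem you describe, after noting that adding or removing a link changes $\ell(\om)$ by $\pm1$. Your claim that the change lies in $\{-1,0,+1\}$ is in fact slightly more accurate than the paper's stated $\pm1$ (a change of $0$ can occur when $u>0$ and crosses are present), but either version yields the needed bound $\lambda^*_n(x,\om)\le n\lambda(x)$.
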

	
	\begin{proof}
		Since the number of loops changes by $\pm1$
		if a link is added or removed,  the result  follows from 
		\cite[Theorem~1.1]{georgii-kuneth}.
	\end{proof}
	
	This lemma will usually be applied to events that do not depend on the
	types $\cross$ and $\dbar$ of  the links, only on their coordinates,
	for which $\smash{\PP_{un,(1-u)n}^{\r{Poi}}}$ may be regarded as an unmarked 
	Poisson process $\PP_n^{\r{Poi}}$ of intensity $n$.
	
	We also have stochastic domination from below by a
	Poisson process of intensity $1/n$.  However, this lower bound
	will not be useful for us,
	in fact the stochastic upper bound in Lemma \ref{lem:stoch dom}
	is in some sense sharp for large $n$.  
	Intuitively, this is because we expect mostly
	small loops gathered on alternating columns,
	and then
	the number of loops $\ell(\omega)$ and the number of links $|\omega|$
	are roughly the same, meaning that the weight factor 
	$n^{\ell(\omega)}$ roughly equals 
	$n^{|\omega|}$, the latter being the weight factor for 
	an intensity $n$ Poisson process.
	We will use the following rigorous version
	of this intuition.  The proof is a simple application of Bayes'
	formula. 
	
	\begin{lemma}\label{lem cond poisson}
		Let $\Gamma$ be a primal circuit and let $T^1$
		\phantomsection\label{not Eo}%
		denote the event that $\om$ 
		consists of only double-bars located on primal columns.  
		Then 
		$\PP_{\Gamma,n,u}^1(\cdot\mid T^1)=
		\PP^{\r{Poi}}_{0,(1-u)n}(\cdot\mid T^1)$.
	\end{lemma}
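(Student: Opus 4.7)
The plan is to reduce the weight $n^{\ell(\omega)}$ to a deterministic constant times $n^{|\omega|}$ on the event $T^1$, after which the result follows from a short Poisson-tilt argument via Bayes' formula.

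First I would establish the combinatorial identity
\[ \ell(\omega) = |\omega| + N_\Gamma \qquad\text{for all } \omega\in T^1, \]
where $N_\Gamma$ denotes the number of primal columns inside $\c D_\Gamma$ (a constant depending only on $\Gamma$).  This follows from the definition of primal boundary condition, which effectively places a double-bar at the top and bottom of each primal column (via the identifications along the horizontal segments of $\Gamma$).  On $T^1$ every link is a double-bar on a primal column, so a primal column carrying $k$ internal double-bars is cut by those double-bars and the two boundary caps into exactly $k+1$ trivial primal loops.  Summing over the $N_\Gamma$ primal columns and using $\sum_e k_e = |\omega|$ yields the identity.  As a consequence,
\[ n^{\ell(\omega)}\,\one_{T^1}(\omega) = n^{N_\Gamma}\, n^{|\omega|}\,\one_{T^1}(\omega), \]
and the factor $n^{N_\Gamma}$ will cancel in any Bayes ratio conditioned on $T^1$.

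Second I would carry out a standard tilt of the underlying Poisson measure.  For any event $A$,
\[ \PP^1_{\Gamma,n,u}(A\mid T^1) = \frac{\int_{A\cap T^1} n^{\ell(\omega)}\,\dd \PP_1(\omega)}{\int_{T^1} n^{\ell(\omega)}\,\dd \PP_1(\omega)} = \frac{\int_{A\cap T^1} n^{|\omega|}\,\dd \PP_1(\omega)}{\int_{T^1} n^{|\omega|}\,\dd \PP_1(\omega)}. \]
Since tilting a Poisson point process of rate $\lambda$ by $n^{|\omega|}$ multiplies its rate by $n$, and $\PP_1 = \PP^{\r{Poi}}_{u,1-u}$, one has the identity $n^{|\omega|}\,\dd\PP_1(\omega) = e^{(n-1)V_\Gamma}\,\dd\PP^{\r{Poi}}_{un,(1-u)n}(\omega)$, where $V_\Gamma$ is the total area of $\c D_\Gamma$.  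The prefactor cancels in the ratio, which becomes $\PP^{\r{Poi}}_{un,(1-u)n}(A\mid T^1)$.  Finally, because crosses and double-bars are independent under the two-type Poisson law and $T^1$ forces the cross-component to be empty, conditioning on $T^1$ effectively zeroes out the cross-intensity, giving $\PP^{\r{Poi}}_{un,(1-u)n}(\cdot\mid T^1) = \PP^{\r{Poi}}_{0,(1-u)n}(\cdot\mid T^1)$, as required.

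There is no substantive obstacle in this proof; the only point deserving care is the deterministic loop count on $T^1$, which relies on a correct reading of the primal boundary condition (top-and-bottom caps on every primal column of $\c D_\Gamma$).  The remainder is elementary Poisson-process calculus.
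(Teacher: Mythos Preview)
Your proof is correct and follows exactly the approach the paper indicates (``a simple application of Bayes' formula''): you make explicit the key combinatorial identity $\ell(\omega)=|\omega|+N_\Gamma$ on $T^1$ and then carry out the Poisson tilt, which is precisely what is meant. One small terminological point: for a general (non-rectangular) primal circuit $\Gamma$, your constant $N_\Gamma$ should be read as the number of maximal primal-column \emph{segments} in $\c D_\Gamma$ (equivalently, $\ell(\varnothing)$ under primal boundary conditions), but since it is in any case a $\Gamma$-dependent constant that cancels in the Bayes ratio, this does not affect the argument.
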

	
	In words, conditional on $T^1$ the loop-configuration is
	defined by independent Poisson processes of double-bars of intensity
	$(1-u)n$ in the primal columns only.

	We will also use the following standard large-deviations estimates for
	binomial- and Poisson random variables.
	
	\begin{lemma}[Large deviation estimates] \phantom{mmm}
		\label{lem:ld}
		\begin{itemize}[leftmargin=*]
			\item  Let $X$ be Poisson distributed with mean $\rho$.  Then
			\be\label{eq ld poisson}
			\PP(X> K\rho)\leq
			\e{-\rho K\log(K/\r e)}
			\qquad\text{and}\qquad
			\PP(X< \eps\rho)\leq
			\e{-\rho[1-\eps-\eps\log(\frac1\eps)]}
			\ee 
			\item Let $Y$ have binomial distribution Bin($m,p$)
			and let $q\in(0,1)$.  Then
			\be\label{eq ld bin}
			\PP(Y> (1-q)m)\leq \exp\big(-m[
			q\log(\tfrac q{1-p})+(1-q)\log(\tfrac {1-q}{p})
			]\big).\\
			\ee
		\end{itemize}
	\end{lemma}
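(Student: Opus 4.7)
The plan is to prove all three bounds via Chernoff's exponential moment method: apply Markov's inequality to $e^{tX}$ (or $e^{-tX}$ for lower tails), use the known moment generating function, and optimize in $t>0$.

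For the Poisson upper tail, I would start from $\EE[e^{tX}] = \e{\rho(e^t - 1)}$, so that for $t>0$
\[
\PP(X > K\rho) \leq \e{-tK\rho}\,\EE[e^{tX}] = \e{\rho(e^t - 1) - tK\rho}.
\]
Differentiating the exponent in $t$ and setting to zero gives $e^t = K$, i.e.\ $t = \log K$. Substituting yields $\e{-\rho(K\log K - K + 1)} \leq \e{-\rho K \log(K/\r e)}$, as desired (we drop the $+1$ in the exponent for a clean bound). For the lower tail, the analogous computation with $e^{-tX}$ gives
\[
\PP(X < \eps\rho) \leq \e{t\eps\rho + \rho(e^{-t}-1)},
\]
optimized at $e^{-t} = \eps$, i.e.\ $t = \log(1/\eps)$; plugging in yields exactly $\e{-\rho[1 - \eps - \eps\log(1/\eps)]}$.

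For the binomial bound, I would use $\EE[e^{tY}] = (1 - p + pe^t)^m$ and write
\[
\PP(Y > (1-q)m) \leq \big[\e{-t(1-q)}(1 - p + pe^t)\big]^m.
\]
Optimizing gives $e^t = \frac{(1-q)(1-p)}{qp}$, which after a short calculation reduces the exponent (per factor) to $-(1-q)\log\!\frac{1-q}{p} - q\log\!\frac{q}{1-p}$. Equivalently, this is the standard Kullback--Leibler bound $\PP(Y > (1-q)m) \leq \e{-mD((1-q)\Vert p)}$ with $D(a\Vert b) = a\log(a/b) + (1-a)\log((1-a)/(1-b))$, which matches the stated formula upon expanding.

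There is no real obstacle; all three inequalities are standard Chernoff bounds and the only delicate step is carrying out the optimization correctly and simplifying the resulting exponent. The only bookkeeping nuisance is the slight looseness in the Poisson upper tail (the true optimized exponent is $K\log K - K + 1$, bounded below by $K\log(K/\r e)$), which is irrelevant for the statement of the lemma.
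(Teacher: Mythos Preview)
Your proof is correct; these are the standard Chernoff/Cram\'er bounds derived exactly as you describe. The paper itself does not give a proof of this lemma---it simply states the inequalities as ``standard large-deviations estimates''---so there is nothing to compare against, and your argument would serve perfectly well as the omitted proof.
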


	\subsubsection{Proof of Lemma \ref{lem repair dense}}
	
	We turn to the upper bound on $\PP_n(A_v^{\geq\delta nv})$.
	Here, as well as in later arguments, we will use
	a discretization of the outside $\c O_\k(\om)$ into what we call a
	\emph{block-outside}.  Given $h>0$, 
	define \emph{blocks}
	\be\label{eq blocks}
	b_{i,j}:=\{2i+1,2i+2\}\times[j\tfrac hn,(j+1)\tfrac hn],
	\qquad i\in \ZZ, j\in\ZZ.
	\ee
	Thus, blocks have height $h/n$ and they span two columns: 
	one primal (grey) 
	and one dual (white), the primal column being to the left.
	The total length, or volume, of a block is therefore $2h/n$.
	We divide $\c{D}_\Gamma$ into the blocks
	$b'_{i,j}:=b_{i,j} \cap \c{D}_\Gamma$  which are non-empty.
	We refer to the intervals 
	$\{2i+1\}\times[j\tfrac hn,(j+1)\tfrac hn]$
	and
	$\{2i+2\}\times[j\tfrac hn,(j+1)\tfrac hn]$ as the left and right
	columns of $b_{i,j}$ respectively, and use the same terminology for
	$b'_{i,j}$. 
	Define the \emph{block-outside}
	$\r b_h\c O_\k(\omega)$ as the union of those blocks 
	$b'_{i,j}$ which intersect $\c O(\om)=\c O_\k(\om)$  non-trivially.   
	
	Write $N_h(\om):=|\om\cap\r b_h\c O(\omega)|$ for the number of links
	in the block-outside.  
	We claim that
	if $(\bar\omega,\bar\eta)=R(\omega)$ then
	$N_h(\om)=|\bar\om\cap\r b_h\c O(\omega)|$
	i.e.\ the repair map does not alter the number of links in the
	block-outside.  
	In fact, for each block
	$b_{i,j}'\se\r b_h\c O$ we have  
	$|b'_{i,j}\cap\bar\om|=|b'_{i,j}\cap\om|$.
	Indeed, the only links which are moved by the repair map are those in
	dual columns of the outside $\c O$, and those in dual clusters.  Links in
	dual columns of $\c O$ are shifted one step left and thus remain  in
	the same block $b'_{i,j}$.  And for a link in a dual cluster to shift
	into or out of $b_{i,j}'\se\r b_h\c O$, 
	the link would have to lie on the left or right
	boundary of the cluster;  but the vertical boundary of a cluster
	(indeed, garden) does not contain any links, by definition.
	Since the block-outside $\r b_h\c O$ contains the outside 
	$\c O$, it follows that
	\be\label{eq:block boundary size}
	|\bar\om^\out|\leq N_h(\om), \qquad
	\text{ for } (\bar\om,\bar\eta)=R(\om)
	\text{ and any }h>0.
	\ee
	
	The number of possible block-outsides $\r b_h\c O$
	can be bounded using standard
	arguments from graph theory.  Indeed,  form a graph whose
	vertices are the blocks $b'_{i,j}$ with an edge between two blocks
	if they are adjacent horizontally or vertically.  This graph has
	maximum degree 4, and for each $\omega$, the block-outside 
	$\r b_h\c O(\om)$ corresponds to a \emph{connected} subgraph.
	For a graph of maximum degree $d$, the number of connected subgraphs
	of $m$ vertices containing a given vertex is at most $(d^2)^m$
	\cite[Ch.\ 45]{bollobas}.  Thus, for any  $x_0\in \c{D}_\Gamma$, writing
	$\#\r b_h\c O(\omega)$ for the number of blocks,
	\be\label{eq graph theory}
	\#\{\r b_h\c O(\om): 
	x_0 \in \r b_h\c O(\om),
	\#\r b_h\c O(\omega)=m\}\leq 16^m.
	\ee
	
	\begin{proof}[Proof of Lemma \ref{lem repair dense}]
		We introduce the following two `bad' events:
		\be\begin{split}
			B_1(v,\eps)&=\{\om:
			x_0\in\c O,\vol(\c O)\leq v,\#\r b_\eps\c O>\tfrac1\eps nv\}\\
			B_2(v,\eps,L)&=\{\om:
			x_0\in\c O,\vol(\c O)\leq v, N_\eps(\om)>L nv\}.
		\end{split}\ee
		Here we think of $\eps>0$ as small and $L>0$ as large,
		thus $B_1$ is the event that the block-outside has very many blocks,
		while $B_2$ is the event that it contains very many links.  
		We have that
		\be
		\PP_n(A_v^{\geq \delta nv})\leq \PP_n(B_1)+
		\PP_n(B_2\sm B_1)+
		\PP_n(A_v^{\geq \delta nv} \sm (B_1\cup B_2))
		\ee
		and we proceed by bounding each of these three terms.  The first two
		will be bounded using stochastic domination, while the last will be
		bounded using Lemma \ref{lem repair}.
		
		First consider $B_1$.  We claim that
		for any $\eps\in(0,2^{-10})$,
		\be\label{eq:Bv and B1}
		\PP_n(B_1(v,\eps))
		\leq \frac{\exp(-nv\cdot\tfrac1 {2\eps}\log(\tfrac1{2^{10}\eps}))}
		{1-32\sqrt{\eps}}.
		\ee
		To see this, first note that on $B_1(v,\eps)$, at
		least half of the blocks constituting $\r b_\eps\c O$ contain one or more link
		each.  Indeed, if not
		then at least half the blocks constituting $\r b_\eps\c O$ contain no link.
		Any such block is fully contained in $\c O$, which has volume at most
		$v$.  So,  writing $m$ for the number of blocks in $\r b_\eps\c O$,
		\be
		v\geq \tfrac12 m\cdot\tfrac{2\eps}{n}>
		\tfrac12 \big(\tfrac1\eps nv\big)\cdot\tfrac{2\eps}{n}=v,
		\ee
		a contradiction.  It follows that on $B_1(v,\eps)$,
		there is some connected component of blocks, containing $x_0$ and
		consisting  of at least  $m_1={\frac1\eps nv}$  
		blocks, such that at least half of its
		blocks contain one or more links each.   The number of choices
		of such a component with $m$ blocks is at most $16^m$, 
		by \eqref{eq graph theory},
		and for a given
		such component, the number of blocks containing a link is
		stochastically dominated by a Bin($m$,$1-\r e^{-\frac{2\eps}{n}\cdot n}$)
		random variable, by Lemma \ref{lem:stoch dom}.  Noting that 
		$1-\r e^{-\frac{2\eps}{n}\cdot n}\leq 2\eps$, and using large
		deviations estimates \eqref{eq ld bin} with $q=\frac12$ and $p=2\eps$,
		it follows that 
		\be
		\PP_n(B_1(v,\eps))
		\leq 
		\sum_{m\geq m_1} 16^m \exp\big(-\tfrac m2
		\log(\tfrac{1}{8\eps})\big)
		\leq \frac{\exp(-nv\cdot\tfrac1 {2\eps}\log(\tfrac1{2^{10}\eps}))}
		{1-32\sqrt{\eps}}.
		\ee
		
		Next consider $B_2\sm B_1$.  
		On this event there is some connected component of at most
		$m_1=\frac1\eps nv$
		blocks which contains $>Lnv$ links (and contains $x_0$).
		The number of choices of such a connected component of $m$ blocks is
		at most $16^m$, by \eqref{eq graph theory},
		and for a fixed such component, the event that it
		contains    $>Lnv$ links is increasing.  Hence, by Lemma
		\ref{lem:stoch dom}
		\be \label{eq B2 not B1 1}
		\PP_n( B_2\sm B_1)\leq
		\sum_{m=1}^{m_1} 16^m \PP(X>Lnv),
		\ee
		where $X$ is a Poisson distributed random variable with mean 
		$m_1\cdot 2\eps/n\cdot n=2nv$.  Using 
		large deviations  \eqref{eq ld poisson}, it follows
		that 
		\be\begin{split}\label{eq B2 not B1}
			\PP_n( B_2(v,\eps,L)\sm B_1(v,\eps))&\leq
			m_1 16^{m_1} \exp(-L\log(\tfrac L{\r e})nv)
			=\tfrac{nv}{\eps}  \exp(-nv[-\tfrac{\log 16}{\eps}+L\log(\tfrac L{\r e})]).
		\end{split}\ee
		
		Finally consider $A_v^{\geq \delta nv} \sm (B_1\cup B_2)$.
		We start by summing over the possibilities $\lambda$
		for the block-outside $\r b_\eps\c O$:
		\be
		\PP_n(A_v^{\delta nv} \sm (B_1\cup B_2))\leq
		\sum_{\lambda:\#\lambda\leq m_1}
		\PP_n(A_{v,\lambda}^{\geq\delta nv} \sm B_2)
		\ee
		where 
		$A_{v,\lambda}^{\geq\delta nv} =A_{v}^{\geq\delta nv}\cap\{ \r b_\eps\c O=\lambda\}$.
		Using Lemma  \ref{lem repair} we get
		\be\begin{split}\label{eq dense 2}
			\PP_n(A_{v,\lambda}^{\geq\delta nv} \sm B_2)&
			\leq \frac1{Z_\Gamma(n)}
			\int \r d \PP_1(\bar\omega)\; n^{\ell(\bar\omega)}
			\sum_{\substack{\bar\eta\se\bar\omega: \\
					(\bar\omega,\bar\eta)\in R(A_{v,\lambda}^{\geq\delta nv} \sm B_2) }}
			(4\hat u)^{|\bar\omega^{\out}|} n^{-\frac14|\bar\om^\ex|}\\
			&\leq
			(4\hat u)^{L nv}n^{-\frac14\delta nv}
			\frac1{Z_\Gamma(n)}
			\int \r d \PP_1(\bar\omega)\; n^{\ell(\bar\omega)}
			\#\{\bar\eta\se\bar\omega: 
			(\bar\omega,\bar\eta)\in R(A_{v,\lambda}^{\geq\delta nv} \sm B_2) \}\\
			&\leq (4\hat u)^{L nv}n^{-\frac14\delta nv}
			2^{Lnv}=
			(8\hat u)^{Lnv}\e{-(\frac\delta4\log n)nv}.
		\end{split}\ee
		We used \eqref{eq:block boundary size}
		to bound $|\bar\om^\out|\leq Lnv$ on $B_2^c$, and  
		in the last step we used that the number of choices of $\bar\eta$ is
		at most the number of subsets of $\bar\om$,
		which on $A_{v,\lambda}^{\geq\delta nv} \sm B_2$ is at most $2^{Lnv}$.
		Bounding the number of possibilities for $\lambda$ using 
		\eqref{eq graph theory}, we get
		\be
		\PP_n(A_v^{\delta nv} \sm (B_1\cup B_2))\leq
		(\tfrac1\eps nv) 16^{\frac1\eps nv} (8\hat u)^{Lnv}\e{-(\frac\delta4\log n)nv}
		\leq \e{-\frac\delta4(\log n-C)nv},
		\ee
		where $C$ depends on $u$, $\eps$ and $L$.
		Combining this with the other terms \eqref{eq:Bv and B1}
		and \eqref{eq B2 not B1}, we may first take $\eps>0$ sufficiently
		small, then $L$ sufficiently large, and finally $n$ sufficiently
		large, to obtain the claim of Lemma \ref{lem repair dense}.
	\end{proof}

	\subsubsection{Proof of Lemma \ref{lem repair sparse}}
	
	We now turn to the case of few outer links, i.e.\ the upper bound on
	$\PP_n(A_v^{<\delta nv})$.  The
	rough idea  is that, after repair, the configuration in
	$\c O$  behaves like a Poisson process of intensity $n$ which is
	unlikely to contain $<\delta nv$ links by large deviations estimates.
	(The probability that such a Poisson process contains \emph{no} links
	is $\e{-nv}$, which thus gives an upper bound on the rate of decay.)
	The entropy is controlled using that $\bar\eta$ contains 
	few links.
	
	\begin{proof}[Proof of Lemma \ref{lem repair sparse}]
		Similarly to the proof of Lemma
		\ref{lem repair dense} we use the block-outside,
		this time with (large)  $h=1/\delta$.
		Recall that a trivial loop is called tall if it has 
		height $>\frac1{\k n}$, and that we assume $\k\leq\delta$.
		
		First note that, on the event $A_v^{<\delta nv}$, there are at most
		$\k nv$ covered links in $\c O$.  Indeed,
		each covered link is adjacent to a tall loop,  and there are at most two
		covered links adjacent to the same tall loop.
		A tall loop contributes volume $>2\frac1{\k n}$ to $\c O$,
		which means that each
		covered link contributes at least $\frac1{\k n}$ to $\vol(\c O)$.
		Since $\vol(\c O)\leq v$, there are at most
		$\k nv$ covered links.
		It follows that on  $A_v^{<\delta nv}$,
		\be\label{eq outer sparse}
		|\bar\om^\out|=|\om^\out|
		\leq (\delta+\k)nv \leq 2\delta nv,
		\ee
		since by assumption the
		number of exposed links is $<\delta nv$.

		Next note that,
		on  $A_v^{<\delta nv}$, the number of blocks
		in $\r b_{1/\delta}\c O$ satisfies
		\be\label{eq blocks sparse}
		\#\r b_{1/\delta}\c O\leq (\tfrac32\delta+\kappa) nv\leq\tfrac52\delta nv=:m_1.
		\ee  
		Indeed, each block contains either no link, at least one exposed link,
		or at least one covered link.
		Each empty block contributes all of its 
		$2/\delta n$ volume to 
		$\c O$, and $\vol(\c O)\leq v$, so there can be at most $\delta vn/2$ of
		them.   There are at most $\delta nv$
		exposed links in $\c O$, so there are at most $\delta nv$ blocks with at least
		one exposed link.  And there are at most $\k nv$ covered links
		in $\c O$, so
		there are at most $\k nv$ blocks containing a covered link.
		Summing these up we get \eqref{eq blocks sparse}.
		The total volume of $\r b_{1/\delta}\c O$ is at most 
		$m_1\frac{2}{\delta n}=5v$.

		Let $C_0$ be a large constant and let
		$B^{\geq C_0nv}$ denote the event that $\r b_{1/\delta}\c O$ 
		contains at least
		$C_0nv$ links.  (Since $\r b\c O_{1/\delta}$
		can be strictly larger than $\c O$
		itself, it is possible that $A_v^{<\delta nv}$ and $B^{\geq C_0nv}$
		both occur.)  
		Write 
		\be
		A_{v,\lam}^{<\delta nv} = A_v^{<\delta nv} \cap 
		\{\om  : \r b_{1/\delta}\c O(\om)=\lam\}.
		\ee 
		We have
		\be\label{eq A <delta}
		\PP_n(A_v^{<\delta nv})\leq
		\sum_{\lambda:\#\lambda\leq m_1}\big[
		\PP_n(A_{v,\lambda}^{<\delta nv}\sm B^{\geq C_0nv})
		+\PP_n(|\omega\cap\lambda|\geq C_0nv)
		\big]
		\ee
		where $\#\lambda$ denotes the number of blocks.
		By Lemma \ref{lem:stoch dom}, $|\omega\cap\lambda|$
		is stochastically dominated by a Poisson random variable with mean
		$5nv$, so similarly to \eqref{eq B2 not B1 1} 
		and \eqref{eq B2 not B1} we get
		\be\begin{split}
			\sum_{\lambda:\#\lambda\leq m_1}
			\PP_n(|\omega\cap\lambda|\geq C_0nv)
			\leq
			\tfrac52\delta nv
			\exp\big(-[C_0\log(\tfrac{C_0}{5\r e})
			-\tfrac52\delta\log16]nv\big).
		\end{split}\ee
		Choosing $C_0$ large enough (depending on
		$\delta$) we get that for some $C_5>0$,
		\be\label{eq B lambda bound}
		\sum_{\lambda:\#\lambda\leq m_1}
		\PP_n(|\omega\cap\lambda|\geq C_0nv)\leq
		\e{-C_5nv}.
		\ee
		For the other terms in \eqref{eq A <delta} we use Lemma 
		\ref{lem repair}.  Write $A^{<\delta nv}_{v,\lam,r}$
		for $A^{<\delta nv}_{v,\lam}$ with the
		extra condition that $|\eta|=r$.  By Lemma  \ref{lem repair},
		where we bound the factor $(4\hat u)^{|\bar\omega^{\out}|}$ 
		above by $(4\hat u)^{2\delta nv}$ (using \eqref{eq outer sparse})
		and $n^{-\frac14|\bar\om^\ex|}$ by 1,  
		\be \label{eq before mecke}
		\PP_n(A_{v,\lambda}^{<\delta nv}\sm B^{\geq C_0nv})
		\leq \frac{(4\hat u)^{2\delta nv}}{Z_\Gamma(n)}
		\sum_{r=0}^{\floor{\delta nv}}
		\int \r d \PP_1(\bar\omega)\; n^{\ell(\bar\omega)}
		\sum_{\substack{\bar\eta\se\bar\omega\\ |\bar\eta|=r}} \one\{
		(\bar\omega,\bar\eta)\in R(A_{v,\lambda,r}^{<\delta nv}\sm B^{\geq C_0nv}) \}.
		\ee
		For the integral over $\bar\om$ we use Mecke's formula,
		Lemma \ref{lem mecke}, which allows us to treat $\bar\eta$ as fixed: 
		\be\begin{split}\label{eq after mecke}
			&\int \r d \PP_1(\bar\omega)\; n^{\ell(\bar\omega)}
			\sum_{\substack{\bar\eta\se\bar\omega\\ |\bar\eta|=r}} \one\{
			(\bar\omega,\bar\eta)\in 
			R(A_{v,\lambda,r}^{<\delta nv}\sm B^{\geq C_0nv}) \}\\
			&=\int \r d\mu^{\odot r}(\bar\eta)
			\int \r d \PP_1(\bar\omega)\; 
			n^{\ell(\bar\omega\cup\bar\eta)}
			\one\{ (\bar\omega\cup\bar\eta,\bar\eta)\in 
			R(A_{v,\lambda,r}^{<\delta nv}\sm B^{\geq C_0nv}) \}.
		\end{split}\ee
		
		Now we use that, given $\bar\eta$, the remaining configuration
		$\bar\omega\sm\bar\eta$ splits as 
		$\bar\omega\sm\bar\eta=\bar\omega^\out_*\cup\bar\omega_*^\ins$, where
		$\bar\omega^\out_*$ is the configuration strictly outside the domains
		enclosed by $\bar\eta$, and $\bar\omega_*^\ins$ is the configuration
		strictly inside.  
		More precisely, recall that $\f C=\f C(\bar\omega,\bar\eta)$
		denotes the images of the clusters
		under the repair-map, which is a union of
		sub-domains of $\c{D}_\Gamma$
		whose boundaries are defined by $\bar\eta$.
		Let $\Omega_\Gamma^{\bar\eta,\out}$ be the set of
		configurations in 
		$\c O(\bar\om,\bar\eta)=\c{D}_\Gamma\sm \f C(\bar\om,\bar\eta)$ 
		compatible with $\bar\eta$ and
		let $\Omega_\Gamma^{\bar\eta,\ins}$ be
		the set of configurations in the interior of  $\f C(\bar\om,\bar\eta)$ 
		compatible with $\bar\eta$.
		Then  $\bar\omega^\out_*\in \Omega_\Gamma^{\bar\eta,\out}$ 
		and $\bar\omega^\ins_*\in \Omega_\Gamma^{\bar\eta,\ins}$.
		Also note that these two configurations
		contain strictly disjoint sets of loops since loops cannot pass
		between $\f C$ and $\c O$.
		
		The indicator constraining 
		$(\bar\omega\cup\bar\eta,\bar\eta)$ can be factorized:
		\be\begin{split}
			\one\{ (\bar\omega\cup\bar\eta,\bar\eta)&\in 
			R(A_{v,\lambda,r}^{<\delta nv}\sm B^{\geq C_0nv}) \}\\
			&\quad\leq
			\one\{\bar\eta\in\c R_{\lambda,\delta,r}\}
			\one\{\bar\omega_*^\out\in \c R^\out_{\bar\eta,2\delta}\}
			\one\{\bar\omega_*^\ins\in \c R^\ins_{\bar\eta,\lambda,C_0}\}
		\end{split}\ee
		where we use the following events:
		\be\begin{split}
			\c R_{\lambda,\delta,r}&=\{\bar\eta\in\Omega_\Gamma:
			|\bar\eta|=r, \exists\, \bar\omega\text{ s.t. }
			(\bar\omega\cup\bar\eta,\bar\eta)\in 
			R(A_{v,\lambda,r}^{<\delta nv}) \}\\
			\c R^\out_{\bar\eta,C}&=
			\{\bar\omega^\out_*\in\Omega_\Gamma^{\bar\eta,\out}
			\cap T^1:
			|\bar\omega^\out_*|<
			C nv \} \\ 
			\c R^\ins_{\bar\eta,\lambda,C}&=
			\{\bar\omega^\ins_*\in\Omega_\Gamma^{\bar\eta,\ins}:
			|\bar\omega^\ins_*\cap\lambda|< C nv \}.
		\end{split}\ee
		Recall 
		here that $T^1$ is the set of configurations
		that consist only of double-bars located in primal columns.
		
		Let $\PP^{\out,\bar\eta}_1$ and $\PP^{\ins,\bar\eta}_1$ denote the
		restrictions of $\PP_1$ to $\c O$ and $\f C$ respectively.
		The right-hand-side in \eqref{eq after mecke} is bounded above by:
		\be\begin{split}\label{eq mecke out and in}
			\int \r d\mu^{\odot r}(\bar\eta)&
			\one\{\bar\eta\in\c R_{\lambda,\delta,r}\}
			\int \r d \PP^{\out,\bar\eta}_1(\bar\omega_*^\out)\; 
			n^{\ell(\bar\omega_*^\out)}
			\one\{\bar\omega_*^\out\in \c R^\out_{\bar\eta,2\delta}\}
			\int \r d \PP^{\ins,\bar\eta}_1(\bar\omega_*^\ins)\; 
			n^{\ell(\bar\omega_*^\ins)}
			\one\{\bar\omega_*^\ins\in \c R^\ins_{\bar\eta,\lambda,C_0}\}.
		\end{split}\ee
		We focus on the middle integral in \eqref{eq mecke out and in}, over
		$\bar\omega_*^\out$. This is the part which, since we have repaired the configuration, is essentially a Poisson process of intensity $(1-u)n$ on the primal columns of $\c O(\bar\om,\bar\eta)$. In particular we're working on the event that there's at most $2\delta nv$ links in $\c O(\bar\om,\bar\eta)$ by \eqref{eq outer sparse}, which we'll be able to bound by the large deviations estimates. Let's do this rigorously. Using Lemma \ref{lem cond poisson} 
		we have
		\be
		\frac{
			\int \r d \PP^{\out,\bar\eta}_1(\bar\omega_*^\out)\; 
			n^{\ell(\bar\omega_*^\out)}
			\one\{\bar\omega_*^\out\in \c R^\out_{\bar\eta,2\delta}\}
		}
		{\int \r d \PP^{\out,\bar\eta}_1(\bar\omega_*^\out)\; 
			n^{\ell(\bar\omega_*^\out)}
			\one\{\bar\omega_*^\out\in T^1\}
		}
		= \PP_{0,(1-u)n}^{\r{Poi}}(|\bar\omega^\out_*|<2\delta nv\mid T^1),
		\ee
		where on the right-hand-side $\bar\omega^\out_*$
		is a Poisson process of double-bars of intensity $(1-u)n$
		in the primal
		columns in $\c O(\bar\omega,\bar\eta)$.
		By \eqref{eq odd area}, the primal
		columns in $\c O(\bar\omega,\bar\eta)$ have total length
		at least $\tfrac12(v-1)$, meaning that $|\bar\omega^\out_*|$
		stochastically dominates a Poisson random
		variable with mean $\frac12(1-u)n(v-1)$.
		Writing
		\be
		\PP_{0,(1-u)n}^{\r{Poi}}(|\bar\omega^\out_*|<2\delta nv\mid T^1)=
		\PP_{0,(1-u)n}^{\r{Poi}}(|\bar\omega^\out_*|<C_0 nv\mid T^1)
		\frac{\PP_{0,(1-u)n}^{\r{Poi}}(|\bar\omega^\out_*|<2\delta nv\mid T^1)}
		{1-\PP_{0,(1-u)n}^{\r{Poi}}(|\bar\omega^\out_*|\geq C_0 nv\mid T^1)},
		\ee
		it follows, using the large deviation estimates \eqref{eq ld poisson}
		with 
		$\eps=\frac{4\delta}{1-u}\frac{v}{v-1}\leq \frac{5\delta}{1-u}$ 
		and $K=\frac{2 C_0}{1-u}\frac{v}{v-1}\geq \frac{2 C_0}{1-u}$,
		that 
		\be\begin{split}
			&\int \r d \PP^{\out,\bar\eta}_1(\bar\omega_*^\out)\; 
			n^{\ell(\bar\omega_*^\out)}
			\one\{\bar\omega_*^\out\in \c R^\out_{\bar\eta,2\delta}\}\\
			&\leq
			\frac{
				\e{-\frac12(1-u)n(v-1)
					[1-\frac{5\delta}{1-u}-
					\frac{5\delta}{1-u}\log \frac {1-u}{5\delta}]}
			}
			{
				1-
				\e{-\frac12(1-u)n(v-1)
					\frac{2C_0}{1-u}\log
					\frac{2C_0}{\r e(1-u)}}
			}
			\int \r d \PP^{\out,\bar\eta}_1(\bar\omega_*^\out)\; 
			n^{\ell(\bar\omega_*^\out)}
			\one\{\bar\omega_*^\out\in \c R^\out_{\bar\eta,C_0}\}.
		\end{split}\ee
		Putting this into \eqref{eq mecke out and in} and reversing
		\eqref{eq after mecke} we get from
		\eqref{eq before mecke}
		\be\begin{split}
			&\PP_n(A_{v,\lambda}^{<\delta nv}\sm B^{\geq C_0nv})\\
			&\leq
			\frac{
				\e{-\frac12(1-u)n(v-1)
					[1-\frac{5\delta}{1-u}-
					\frac{5\delta}{1-u}\log \frac {1-u}{5\delta}]}
			}
			{
				1-
				\e{-\frac12(1-u)n(v-1)
					\frac{2C_0}{1-u}\log
					\frac{2C_0}{\r e(1-u)}}
			}
			\frac{(4\hat u)^{2\delta nv}}{Z_\Gamma(n)}
			\int \r d \PP_1(\bar\omega)\; n^{\ell(\bar\omega)}
			\one\{|\bar\omega\cap\lambda|\leq 3C_0nv\}
			\#\{\bar\eta\se\bar\omega:|\bar\eta|< \delta nv\}.
		\end{split}\ee
		Here 
		we used  that
		$|\bar\eta|\leq 2\delta nv$, $|\bar\omega_*^\out|\leq C_0 nv$ 
		and $|\bar\omega_*^\ins\cap\lambda|\leq C_0 nv$ so that 
		$|\bar\omega\cap\lambda|\leq (2C_0+2\delta)nv\leq 3C_0nv$.
		Thus the last factor satisfies
		\be\label{q-repair:eq:entropybound1}
		\begin{split}
			\#\{\bar\eta\se\bar\omega:|\bar\eta|< 2\delta nv\}
			\leq
			{3C_0nv\choose 2\delta nv}
			\leq 
			\Big(\frac{3C_0\r e}{2\delta}\Big)^{2\delta nv}.
		\end{split}
		\ee
		Thus, 
		\be
		\PP_n(A_{v,\lambda}^{<\delta nv}\sm B^{\geq C_0nv})
		\leq
		\frac{
			\e{-\frac12(1-u)n(v-1)
				[1-\frac{5\delta}{1-u}-
				\frac{5\delta}{1-u}\log \frac {1-u}{5\delta}]}
		}
		{
			1-
			\e{-\frac12(1-u)n(v-1)
				\frac{2C_0}{1-u}\log
				\frac{2C_0}{\r e(1-u)}}
		}
		\Big(\frac{6C_0\r e\hat u}{\delta}\Big)^{2\delta nv}.
		\ee
		Note that $\big(\tfrac{6C_0\r e\hat u}{\delta}\big)^{2\delta}\to1$ as
		$\delta\to0$. 
		Since the number of terms in the sum over $\lambda$ in 
		\eqref{eq A <delta} is at most $m_116^{m_1}$,
		where we recall that  $m_1=\tfrac52\delta nv$, and using 
		\eqref{eq B lambda bound}, we get
		\be \label{q-repair:eq:repairlem-3}
		\PP_n(A^{<\delta nv}_{v})\leq
		\e{-C_5 nv}+m_116^{m_1}\e{-C_6 nv}
		\leq \e{-C_7 nv},
		\ee
		for some constants $C_6,C_7>0$, positive for
		$\delta$  small enough. This concludes the proof of Lemma \ref{lem repair sparse}, and therefore together with Lemma \ref{lem repair dense} completes the proof of Proposition \ref{prop:repair}.
	\end{proof}

	\subsection{Proof of Theorem \ref{thm hole perimeter}}
	
	Recall that we need to prove exponential decay of
	\be
	\PP_{\Gamma,n,u}^1[\per(\c C(x_0))>v]
	\ee 
	in $v$, where $\c C(x_0)=\c C_\k(x_0,\omega)$ is the 
	connected component of 
	$\c D_\Gamma\sm\c P_\k(\omega)$ which contains $x_0$
	and
	$\c P_\k(\omega)$ is the connected component of small primal loops
	adjacent to the boundary $\Gamma$ of $\c D_\Gamma$.
	See Figure \ref{fig component} for an illustration.
	
	Note that the inner boundary of
	$\c D_\gamma=\c C(\om)$ consists of only long loops.  Thus,
	one might think that the Theorem should follow by 
	applying Proposition \ref{prop:repair} for each possible realization of
	$\c C$.  However, the combinatorial factor arising from
	the possibilities for $\c C$
	cannot be offset by the exponential decay in Proposition \ref{prop:repair},
	because in the latter result the constant in the exponent is fixed
	and cannot be taken large enough.
	The restriction on the exponent comes in Lemma \ref{lem repair sparse},
	i.e.\ the case where $\c O$ is very sparsely populated with links,  in
	which case the entropy grows at a much smaller  rate than a-priori.
	The same logic applies to $\c C$,
	and our proof of
	Theorem \ref{thm hole perimeter} 
	will consist of pointing out the necessary
	modifications to the proof of Proposition  \ref{prop:repair}.

	\begin{proof}[Proof of Theorem \ref{thm hole perimeter}]
		In this proof we write 
		$\c O=\c O(\omega)$ for the outside of clusters in the (now random)
		domain $\c D_\gamma={\c C}$,
		and $\omega^\ex$ for the exposed links in $\c O$.
		We claim that we have the following two
		inequalities:   first, for any $\delta>0$ and any $C_1>0$,
		provided $n$ is large enough: 
		\be\label{eq cluster 1}
		\PP_n(\vol(\c O)\leq w,|\omega^\ex|\geq \delta nw)\leq \e{-C_1 nw}
		\qquad \text{ for all } w>1,
		\ee
		and second, for some $C_2>0$, provided $\k$ is small enough and
		$n$ is large enough: 
		\be\label{eq cluster 2}
		\PP_n(\vol(\c O)>w)\leq \e{-C_2 nw},
		\qquad \text{ for all } w>1.
		\ee
		These are analogous to Lemma \ref{lem repair dense}
		and Proposition \ref{prop:repair}, respectively.
		Compared to the proofs of those results,
		the only modification required
		is equation \eqref{eq graph theory}, which
		gives the bound $16^m$ for the number of possible block-outsides 
		$\r b\c O$ with $m$ blocks and containing a given point $x_0$.  
		In the present  setting, the block-outside does not contain $x_0$ but
		rather surrounds it.  By counting according to which is the rightmost
		block  along the `$x$-axis' that $\r b\c O$ contains, we get the bound 
		\be\label{eq graph theory 2}
		\#\{\r b_h\c O(\om): 
		\#\r b_h\c O(\omega)=m, \text{ surrounding } x_0\}
		\leq m16^m\leq 17^m,
		\ee
		where the last inequality holds for $m$ large enough.
		The analogs of Lemmas \ref{lem repair dense} and 
		\ref{lem repair sparse} are then proved exactly as
		before, using the bound \eqref{eq graph theory 2} for the number of
		block-outsides.  None of the arguments were sensitive to the exact
		constant in the exponential growth of block-outsides, 
		so \eqref{eq cluster 1} and
		\eqref{eq cluster 2} follow.
		
		Next, we make a change of variables in \eqref{eq cluster 1}
		and \eqref{eq cluster 2}.  To be definite, fix $C_1=1$ and
		$\delta=\frac12$, and fix $n$ large enough that
		both  \eqref{eq cluster 1}
		and \eqref{eq cluster 2} hold.   Writing $w=\frac vn$  it
		follows that 
		\be
		\PP_n(\vol(\c O)\leq \tfrac vn,|\omega^\ex|\geq \tfrac12 v)\leq 
		\e{-v},
		\qquad
		\PP_n(\vol(\c O)>\tfrac vn)\leq \e{-C_2 v},
		\qquad\text{for }v>n.
		\ee
		By adjusting the constants in the exponents, it follows that 
		\be \label{eq cluster 3}
		\PP_n(\vol(\c O)\leq \tfrac vn,|\omega^\ex|\geq \tfrac12 v)
		\leq \e{-C_3 v},\quad
		\PP_n(\vol(\c O)>\tfrac vn)\leq \e{-C_3 v},
		\qquad\text{for }v>1,
		\ee
		where now $C_3>0$ may depend on $n$.
		
		To complete the proof of the theorem, note that 
		$\per(\c C)$ equals the sum of the
		vertical and horizontal displacements of $\gamma$.
		The horizontal displacement in turn equals
		twice the number of crossings $\gamma$ makes 
		of primal columns (each such crossing has length 2).
		Moreover, $\c O$ is a connected set
		which follows  $\gamma$, so the vertical displacement of $\gamma$ is
		a lower bound on $\vol(\c O)$, while the number of primal crossings
		is a lower bound on $|\om^\out|$.
		It follows that if $\per(\c C)>v$ then either
		$\vol(\c O)>\frac vn$, or 
		$\vol(\c O)\leq \frac vn$ and
		the number of primal crossings (and hence $|\om^\out|$) is 
		$\geq \tfrac12(v-\frac vn)$.  The probability of the former event is
		bounded in \eqref{eq cluster 3}, while for the latter event we need to
		take into account that 
		some of the primal crossings may correspond to covered links.
		However,  each covered link contributes at least $\frac1{\k n}$
		to $\vol(\c O)$, so if $\vol(\c O)\leq \frac vn$
		then there can be at most $\k v$ covered links along
		$\gamma$.  Choosing $\k$ small enough, by \eqref{eq cluster 3},
		\be \label{eq perim bound}
		\PP_n(\per(\c C)>v)\leq
		\PP_n(\vol(\c O)>\tfrac vn)+
		\PP_n(\vol(\c O)\leq \tfrac vn,|\omega^\ex|\geq 
		\tfrac12 (v-\tfrac vn)-\k v)
		\leq 2 \e{-C_4 v},
		\ee
		which completes the proof of Theorem \ref{thm hole perimeter}. \qedhere
	\end{proof}
	
	We also get the following related bound on the size of union
	of components of $\c D\sm\c P_\k(\om)$ intersecting a given domain:
	
	\begin{corollary}\label{component of a domain}
		Let $\c A$ be a bounded domain and let $\c C_\k(\c A)$ be the connected
		component of $\c A\cup (\c D\sm\c P_\k(\om))$ which contains $\c A$.  
		For any $u\in[0,1)$ there is a constant 
		$\k_0=\k_0(u)>0$ such that for all 
		$\k\in[0,\k_0]$ there is 
		$n_0=n_0(u,\eps)<\oo$
		such that the following holds.  For any $n>n_0$, 
		there is a constant $C=C(u,n,\k)>0$ such that for
		all $v>1$,
		\be\begin{split}
			&\PP_{\Gamma,n,u}^1[\per(\c C_\k(\c A))>\per(\c A)+v] \le  
			\vol(\c A)\e{-Cv},\\
			&\PP^{\r{per}}_{\Lambda_L,\beta,n,u}[\per(\c C_\k(\c A))>\per(\c A)+v] 
			\le \vol(\c A) \e{-Cv},
		\end{split}\ee
		uniformly for all primal circuits $\Gamma$
		and all $L\in2\bb Z+1$ and $\beta>0$, such that $\c A$ is contained in
		the corresponding domain
		$\c D=\c D_\Gamma$ or $\c D=\Lambda_L\times[0,\beta]$.
	\end{corollary}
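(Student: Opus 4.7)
The plan is to adapt the proof of Theorem \ref{thm hole perimeter}, replacing the single point ``seed'' $x_0$ by the domain $\c A$. First I will define the analog of the outside: let $\c O^{\c A}(\om)$ denote the non-cluster part of $\c C_\k(\c A)\sm\c A$, i.e.\ the union over components $\c D_{\g_j}$ of $\c D\sm\c P_\k(\om)$ that intersect $\c A$ of the region $(\c D_{\g_j}\sm\c A)\sm\f C(\om)$. This plays the role that $\c O$ played in the proof of Theorem \ref{thm hole perimeter}.

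I will then re-run the proofs of Lemma \ref{lem repair dense} and Proposition \ref{prop:repair} with $\c O^{\c A}$ in place of $\c O$. The only modification required is in the combinatorial bound \eqref{eq graph theory 2}: for block-outsides $\r b_h\c O^{\c A}$ surrounding the domain $\c A$ rather than a single point, the count becomes
\[
\#\{\r b_h\c O^{\c A}(\om):\,\#\r b_h\c O^{\c A}(\om)=m\}\le \vol(\c A)\cdot 17^m,
\]
where the factor $\vol(\c A)$ arises from a union bound over the position of a distinguished block in or adjacent to $\c A$ (for instance the lexicographically first block of $\c A$ touched by $\r b_h\c O^{\c A}$, or an adjacent block if no such block lies in $\c A$). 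With this replacement, the proofs of the two key probabilistic bounds proceed essentially verbatim, yielding the analogs of \eqref{eq cluster 1} and \eqref{eq cluster 2}:
\[
\PP_n(\vol(\c O^{\c A})>v/n)\le \vol(\c A)\e{-Cv},\qquad \PP_n(\vol(\c O^{\c A})\le v/n,\,|\om^{\ex}|\ge v/2)\le \vol(\c A)\e{-Cv}.
\]

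Finally I will invoke the same geometric step as at the end of the proof of Theorem \ref{thm hole perimeter} leading to \eqref{eq perim bound}: the excess perimeter $\per(\c C_\k(\c A))-\per(\c A)$ is bounded above by the vertical displacement plus twice the number of primal-column crossings of the outer boundary beyond $\pd\c A$, which in turn are controlled by $\vol(\c O^{\c A})$ and $|\om^{\ex}|$ respectively, after discounting at most $\k v$ covered links (using the tall-loop height lower bound $1/(\k n)$ together with $\vol(\c O^{\c A})\le v/n$). Choosing $\k$ small enough and combining with the probabilistic bounds above yields the corollary.

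The main obstacle is the combinatorial bound for block-outsides surrounding $\c A$. The argument for \eqref{eq graph theory 2} used that the rightmost block of the block-outside lies within $m$ of $x_0$, giving at most $m$ candidate positions. For a domain $\c A$ I expect instead to pick a distinguished block in (or adjacent to) $\c A$ touched by the block-outside and to bound the number of such choices by $\vol(\c A)$; the remaining factor $17^m$ counting connected subgraphs of the block-graph of size $m$ anchored at this block is then as in \eqref{eq graph theory 2}, with any polynomial-in-$m$ corrections absorbed into the exponential.
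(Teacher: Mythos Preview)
Your proposal is correct and follows essentially the same approach as the paper. The paper likewise defines $\c O$ as the union of the outsides of all components $\c C(x_0)$ intersecting $\c A$, replaces the block-outside count \eqref{eq graph theory} by $\frac{n\,\vol(\c A)}{2h}\,16^m$ (summing first over the $\approx n\vol(\c A)/(2h)$ blocks meeting $\c A$, then over connected subgraphs rooted there), and concludes exactly as in \eqref{eq perim bound}.

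Two small points of difference. First, the paper does \emph{not} excise $\c A$ from the outside: it works with the full $\c O$, since the repair map acts on the outside of clusters and does not respect $\partial\c A$. This is harmless for you because $\per(\c C_\k(\c A))-\per(\c A)$ is bounded above by $\sum_j\per(\c C(x_j))$, which is controlled by $\vol(\c O)$ and $|\om^{\ex}|$ in the full $\c O$; there is no need to introduce $\c O^{\c A}=\c O\sm\c A$. Second, your block-count $\vol(\c A)\cdot17^m$ omits the factor $n/(2h)$ counting blocks per unit volume; the paper keeps it and then absorbs it into the exponential constant, exactly as you anticipate for polynomial corrections.
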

	
	\begin{proof}
		This follows from a very similar argument to 
		Theorem \ref{thm hole perimeter}.
		Indeed, $\c C_\k(\c A)$ is a union of $\c A$ with sets $\c C_\k(x_0)$
		(for various $x_0$) which intersect $\c A$.  In this case
		we define $\c O(\om)$ as the union of outsides of these 
		components $\c C(x_0)$.  
		When summing over
		possible block-outsides $\r b_h\c O$, we first sum over all blocks
		intersecting $\c A$, of which there are at most
		$\vol(\c A)/\frac {2h}n$, and then the possibilities for the outside
		of a component $\c C(x_0)$ containing that block.  Using 
		\eqref{eq graph theory}, this leads to a bound
		\be
		\frac{n\cdot \vol(\c A)}{2h}16^m
		\ee
		for the number of possible $\r b_h\c O$ with $m$ blocks, meaning that 
		\eqref{eq cluster 1} and \eqref{eq cluster 2} are replaced by
		\be
		\PP_n(\vol(\c O)\leq w,|\omega^\ex|\geq \delta nw)\leq 
		C_1 n \vol(\c A) \e{-C_2 nw}
		\qquad 
		\PP_n(\vol(\c O)>w)\leq 
		C_1 n \vol(\c A)\e{-C_2 nw},
		\ee
		or after adjusting the constant in the exponents,
		\be
		\PP_n(\vol(\c O)\leq w,|\omega^\ex|\geq \delta nw)\leq 
		\vol(\c A) \e{-C_3 nw}
		\qquad 
		\PP_n(\vol(\c O)>w)\leq 
		\vol(\c A)\e{-C_3 nw}.
		\ee
		If $\per(\c C(\c A))>\per(\c A)+v$ then the union of components
		$\c C (x_0)$ intersecting $\c A$ either have total vertical displacement
		$>\frac vn$ or total horizontal displacement
		at least $v-\frac vn$.  Covered links are accounted for as before,
		so the result follows as in \eqref{eq perim bound}.
	\end{proof}
	
	
	
	\subsection{Convergence}
	\label{ssec conv}
	
	We now turn to the question of convergence of the measures 
	$\PP^\alpha_{\c D^\alpha_k,n,u}$ and  $\PP_{\Lambda_L,\beta,n,u}$
	(with $L\in 2\bb Z+\alpha$),
	particularly proving Lemma \ref{lem TV} and
	Theorem \ref{qrepair:thm:convergence}.
	As previously, we give the details for the
	case $\alpha=1$.  To lighten notation, we omit the subindex $\k$ from
	most notation.
	
	Let $\c D_1,\c D_2\subset\ZZ\times\RR$ be either large primal
	domains, or of the form $\Lambda_L\times[-\beta/2,\beta/2]$
	with $L\in 2\bb Z+\alpha$.  Let $\c B\se \c D_1\cap\c D_2$
	be a primal domain and $\c A\se \c B$
	a bounded domain, where we think of $\c B$ as much larger than
	$\c A$. 
	Write $\PP^1_{\c D_k,n,u}$ for the loop measure
	in $\c D_k$ (with primal or periodic boundary condition),
	$\PP^\otimes_{n,u}=\PP^1_{\c D_1,n,u}\otimes\PP^1_{\c D_2,n,u}$, 
	and $\ul{\om}=(\om_1,\om_2)$ for a sample of $\PP^\otimes_{n,u}$,
	so that $\om_1$ and $\om_2$ are independent random variables 
	with respective  laws $\PP^1_{\c D_1,n,u}$ and $\PP^1_{\c D_2,n,u}$.

	Write $\c P_k=\c P_k(\om_k)$ for the connected 
	component of small primal
	loops in $\om_k$ adjacent to $\partial \c D_k$, and write
	$\c E_k=\c D_k\setminus \c P_k$.  Note that $\c E_k$ is a union of
	(disjoint) connected sets $\c C(x_0,\om_k)$ 
	(as in Theorem \ref{thm hole perimeter}) for 
	various $x_0\in\c D_k$.
	Let $\c K_{\c A}=\c K_{\c A}(\ul\om)$ be the 
	connected component of 
	$\c A\cup\c E_1\cup \c E_2$ which contains $\c A$.
	Our main tool for proving convergence is the following:
	
	\begin{proposition}\label{qrepair:prop:convergence-prop}
		For each $u\in[0,1)$, there exists $\k_0,C,n_0>0$ such that for
		$\k\in[0,\k_0]$ and $n>n_0$, 
		\be
		\PP^\otimes_{n,u}[\c K_{\c A}\cap \c B^c \neq \varnothing]
		\le
		\vol(\c A)\e{-Cd(\c A,\c B^c)},
		\ee
		where $d=d_\infty$ the metric on $\ZZ\times\RR$ inherited from
		$\RR^2$.  The constants $C,n_0$ are uniform in the domains 
		$\c D_1,\c D_2,\c A,\c B$.
	\end{proposition}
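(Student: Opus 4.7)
The plan is to adapt the block-outside argument from the proofs of Theorem~\ref{thm hole perimeter} and Corollary~\ref{component of a domain} to the product setting, exploiting the independence of $\om_1$ and $\om_2$.

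First I reduce to a combinatorial block statement. If $\c K_{\c A} \cap \c B^c \ne \varnothing$, then, after discretizing $\ZZ \times \RR$ into blocks of height $h/n$ as in~\eqref{eq blocks} for a suitable fixed $h$, there must exist a connected path $\Lambda$ in the block-adjacency graph (of maximum degree $4$) consisting of $m \ge M := \lceil n d(\c A,\c B^c)/(2h) \rceil$ blocks, starting at a block meeting $\c A$ and with the property that every block of $\Lambda$ lies in $\r b_h(\c E_1(\om_1)) \cup \r b_h(\c E_2(\om_2))$. By~\eqref{eq graph theory}, and since the number of blocks meeting $\c A$ is at most $n\vol(\c A)/h$ up to a constant, the number of such $\Lambda$ of size $m$ is at most $C_1 \, n\vol(\c A)\, h^{-1} \, 16^m$.

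Next I decouple by colour. For each $b \in \Lambda$ let $\sigma(b) \in \{1,2\}$ record a configuration whose block-outside contains $b$, chosen canonically if both do. Summing over the at most $2^m$ choices of $\sigma$ and using $\PP^\otimes_{n,u} = \PP^1_{\c D_1,n,u} \otimes \PP^1_{\c D_2,n,u}$,
\begin{equation*}
\PP^\otimes_{n,u}\!\left[\Lambda \subseteq \r b_h(\c E_1) \cup \r b_h(\c E_2)\right]
\;\le\; \sum_{\sigma : \Lambda \to \{1,2\}}
\prod_{k=1}^{2} \PP^1_{\c D_k,n,u}\!\left[\sigma^{-1}(k) \subseteq \r b_h(\c E_k)\right].
\end{equation*}
The missing ingredient is a single-configuration estimate
\begin{equation*}
\PP^1_{\c D_k,n,u}\!\left[T \subseteq \r b_h(\c E_k)\right] \;\le\; \e{-c\,|T|}, \qquad T \subseteq \Lambda,
\end{equation*}
with $c = c(u,n,h,\k)$ that can be made larger than $\log 32$ by taking $n$ (and $h$) large. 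Granting this, a geometric series in $m$ yields
\begin{equation*}
\PP^\otimes_{n,u}\!\left[\c K_{\c A} \cap \c B^c \ne \varnothing\right]
\;\le\; \frac{C_1 \, n\vol(\c A)}{h} \sum_{m \ge M} (32\,\e{-c})^m
\;\le\; \vol(\c A)\,\e{-C d(\c A,\c B^c)}.
\end{equation*}

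I expect the main obstacle to be exactly the single-configuration bound on $\PP^1_{\c D_k,n,u}[T \subseteq \r b_h(\c E_k)]$ for arbitrary $T$. Unlike Theorem~\ref{thm hole perimeter}, which controls the perimeter of the cluster of a single point, here $T$ may be disconnected or very sparse along $\Lambda$, and could in principle be covered cheaply by a few elongated $\c E_k$-components. I plan to handle it by re-running the dichotomy of Lemmas~\ref{lem repair dense} and~\ref{lem repair sparse} directly for the event $\{T \subseteq \r b_h(\c E_k)\}$: in the many-exposed-links regime, each block of $T$ forces enough exposed links to make the repair-map loop gain dominate the entropy; in the few-exposed-links regime, each block of $T$ already forces some defect (a long loop or a tall trivial loop) whose per-block probability is exponentially small in $h$, provided $h$ is taken large relative to the entropy of the inverse repair and $\k h$ is kept small so that tall loops remain negligible. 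The bookkeeping here is the main technical burden, but no new conceptual ingredient beyond Section~\ref{ssec clusters} seems needed.
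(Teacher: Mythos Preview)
Your decoupling-by-colouring step runs into a genuine obstruction. The bound you need, $\PP^1_{\c D_k,n,u}[T\subseteq\r b_h(\c E_k)]\le\e{-c|T|}$ with $c>\log 32$, is false in general. The point is that a block $b$ lying in $\r b_h(\c E_k)$ does \emph{not} force any defect in $b$: the component $\c C(x_0)\se\c E_k$ containing $b$ can be a large primal domain, and deep inside it there may well be small primal loops forming a garden (they are just not connected to $\c P_k$). So your sparse-case plan, that ``each block of $T$ forces some defect whose per-block probability is exponentially small in $h$'', breaks down. More globally, a single component of $\c E_k$ with perimeter $p$ can swallow a wiggly subpath of $\Lambda$ of length $\gg p$ (think of $\Lambda$ filling a $\sqrt m\times\sqrt m$ box of blocks), so the best you can extract from Theorem~\ref{thm hole perimeter} is $\e{-Cp}$, which cannot compete with the $32^m$ entropy you have already paid. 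The colouring step is therefore too lossy: it trades the geometric structure that makes the repair argument work for an entropy factor you cannot recover.

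The paper sidesteps this entirely by \emph{not} decoupling. It works on the doubled space $\ZZ\times\RR\times\{1,2\}$, declaring blocks with the same $(i,j)$ but different layer to be adjacent, so the block graph has degree $5$. The object it controls is not $\c E_1\cup\c E_2$ but the union $\c O=\c O_1\cupdot\c O_2$ of the \emph{outsides} (long-loop boundary regions) of those components of $\c E_k$ that participate in $\c K_{\c A}$. The crucial geometric fact is that this $\c O$ is connected in the doubled block graph, so its block-outside has entropy $\le 25^m$ up to a $\vol(\c A)$ prefactor. One then applies the repair map to $\om_1$ and $\om_2$ \emph{simultaneously} and reruns the dichotomy of Lemmas~\ref{lem repair dense}--\ref{lem repair sparse} verbatim, obtaining the analogues of \eqref{eq cluster 1}--\eqref{eq cluster 2} for the joint $\c O$; the perimeter bound on $\c K_{\c A}$ then follows as in Corollary~\ref{component of a domain}. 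The key conceptual difference from your approach is working with the perimeter-like object $\c O$ rather than the area-like object $\c E$, and preserving connectedness by doubling rather than destroying it by colouring.
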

	
	\begin{proof}
		The proof is a small modification of the proof of 
		Theorem \ref{thm hole perimeter} (which is in turn a small
		modification of the proof of Proposition \ref{prop:repair}).
		
		It is useful to think of the processes $\om_1$ and $\om_2$ on two
		separate copies of $\ZZ\times\RR$.  To that end, we work on
		$\ZZ\times\RR\times\{1,2\}$, and
		whenever we have sets $U_k\subset \c D_k$, perhaps dependent
		on $\om_k$, for $k=1,2$, we write $U_1\cupdot U_2 = (U_1\times\{1\})
		\cup (U_2 \times\{2\}) \subset \ZZ\times\RR\times\{1,2\}$. 
		
		Each of the  connected sets $\c C(x_0,\om_k)\se\c K_{\c A}$ has an
		\emph{outside} (as defined in the proof of  
		Theorem \ref{thm hole perimeter}) and 
		in this proof we write $\c O_k=\c O_k(\om_k)$ for the union of these
		outsides over all the  $\c C(x_0,\om_k)\se\c K_{\c A}$, 
		and we write $\c O=\c O(\ul\om)=\c O_1\cupdot\c O_2$.  
		We write $\ul{\om}^\ex=\om_1^{\ex}\cupdot\om_2^{\ex}$,
		where $\om_k^\ex$ are the exposed links of $\om_k$ lying in (or on
		the boundary of) $\c O_k$.
		We claim that it suffices to prove that there exist $C_2, C_3>0$
		such that for all $n>n_0$, 
		\be\label{q-repair-conv:eq:height-width-1}
		\begin{split}
			\PP^\otimes_{n,u}[\vol(\c O)>v] &\le  \vol(\c A)\e{-C_2nv},\\
			\PP^\otimes_{n,u}[|\ul{\om}^\ex|\ge vn, \vol(\c O)\leq v] 
			&\le \vol(\c A)\e{-C_3nv}.
		\end{split}
		\ee
		These two inequalities are analogous to \eqref{eq cluster 1} and 
		\eqref{eq cluster 2} in the proof of  Theorem 
		\ref{thm hole perimeter}, and as in the proof
		of Corollary \ref{component of a domain} they imply that 
		\be
		\PP^\otimes_{n,u}(\per(\c K_{\c A})>\per(\c A)+v)
		\leq \vol(\c A)\e{-C_4v}
		\ee
		for some $C_4>0$, from which the result follows.
		
		To prove \eqref{q-repair-conv:eq:height-width-1} we use an extension
		of the repair map $R$ to $\ul\om$.  We define this extension by
		applying the usual repair map in all of the 
		connected components $\c C(x_0,\om_k)\se\c K_{\c A}$.  As in the
		proofs of Proposition \ref{prop:repair} and 
		Theorem \ref{thm hole perimeter},
		to help count the number of
		preimages of the repair map we use a discretization into
		{blocks}.
		Formally, the blocks are the sets
		$(\{i,i+1\}\times[jh/n,(j+1)h/n]\times\{k\} )\cap \c D_k$, where 
		$i\in2\ZZ, j\in\ZZ, k\in\{1,2\}$, which are non-empty, and we 
		define the block-outside
		$\r b\c O$ as the union of those blocks which
		intersect $\c O$ non-trivially.   Blocks are now regarded as adjacent
		if they are either in the same copy $\ZZ\times\RR\times\{k\}$
		of $\ZZ\times\RR$ and are adjacent in the usual sense, or if they
		have the same $i$ and $j$ coordinates but differ in the $k$
		coordinate.  Thus each block is adjacent to (at most) 5 other blocks.  
		Similarly to the proof of Corollary \ref{component of a domain},
		we get a bound
		\be
		\frac{2\vol(\c A)}{2h/n} 25^m,
		\ee
		for the number of block-components $\r b_h\c O$ with $m$ blocks, 
		where the factor $2\vol(\c A)/\frac{2h}n$ accounts for the
		possible blocks intersecting $\c A$ (in either copy of
		$\ZZ\times\RR$).  Using this bound in place of 
		\eqref{eq graph theory 2}, the rest of the proof is the same as for
		Theorem \ref{thm hole perimeter}.
	\end{proof}
	
	Using Proposition \ref{qrepair:prop:convergence-prop} we can
	deduce Lemma \ref{lem TV}, the bound on the total
	variation distance between marginals:
	
	\begin{proof}[Proof of Lemma  \ref{lem TV}]
		Write $U=\{\c K_{\c A}\cap \c B^c=\varnothing\}$, so 
		$\PP^{\otimes}_{n,u}[U^c]$ is bounded by Proposition 
		\ref{qrepair:prop:convergence-prop}.  Let $X$ be a bounded
		random variable depending only on the link-configuration 
		in $\c A$, and write 
		$\ul X=\ul X(\ul \om)=X(\om_1)-X(\om_2)$.  We claim that 
		$\EE^\otimes_{n,u}[\ul X\one_U]=0$,
		so that the total variation distance between the marginals
		of 
		$\PP^1_{\c D_1,n,u}$ and $\PP^1_{\c D_2,n,u}$ in $\c A$ is at most
		$\PP^\otimes_{n,u}[U^c]$,
		which is exponentially small by Proposition
		\ref{qrepair:prop:convergence-prop}.  The reasoning is that on $U$
		there is a (random)
		primal circuit $\Gamma$ in $\c B$ surrounding $\c A$
		such that the marginal distributions of $\om_1$ and $\om_2$ in 
		$\c D_\Gamma$ are identical.
		See Figure \ref{fig marginal coupling} for an illustration of the argument that follows.  
		\begin{figure}[ht]
			\centering
			\includegraphics[scale=1]{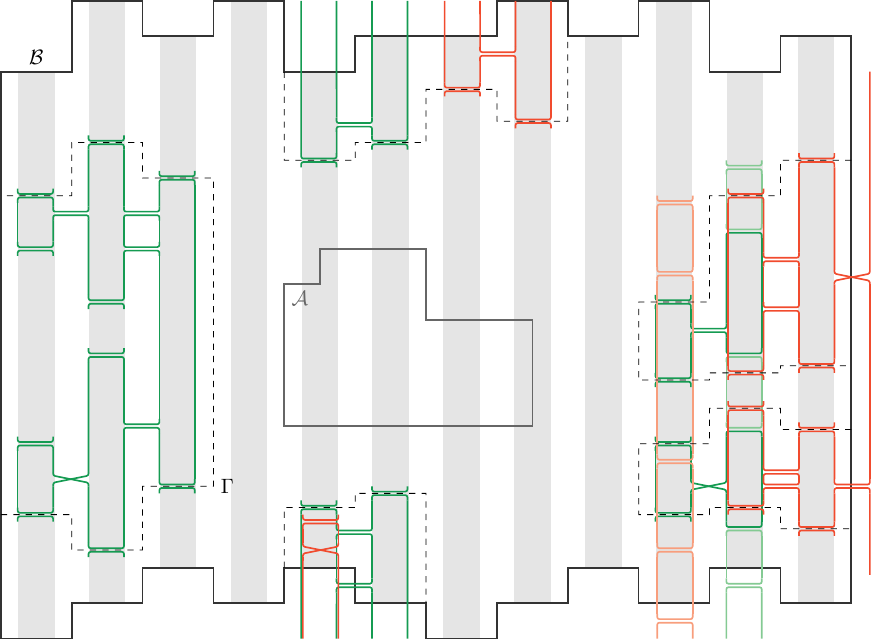}
			\caption{The primal domain 
				$\c B\se\c D_1\cap\c D_2$ (the domains $\c D_1\cap\c D_2$ not
				depicted) as well as the component $\c K_{\c B}'$, with $\om_1$ drawn
				green and $\om_2$ drawn orange.
				Long loops not part of $\c K_{\c B}'$, as well as most small loops,
				are not depicted.
				The exception is on the right part of the picture, 
				where some small loops are drawn in
				lighter colour.  The boundary curve $\Gamma$, where it deviates from
				$\partial \c B$, is drawn dashed.
			}\label{fig marginal coupling}
		\end{figure}
		
		To define $\Gamma$, let $\c K'_{\c B}$ be the connected component of 
		$\partial \c B\cup\c E_1\cup\c E_2$ which contains $\partial\c B$.
		We define $\c D_\Gamma:=\c B\setminus \c K'_{\c B}$.  Thus 
		each segment of $\Gamma$
		either belongs to $\partial \c B$, or to the boundary of a
		component $\c C(x_0,\om_k)$ for some $x_0\in\c B$.
		We notice the following properties of
		$\Gamma$.  First, each vertical segment of $\Gamma$ is in a dual
		column (white; odd left endpoint) so $\c D_\Gamma$ is indeed a primal
		domain.  Next, each vertical interval of $\Gamma$  
		traverses no links of $\om_1$  or of $\om_2$, since any such link
		would belong to a long loop or a dual loop which would then be part of
		$\c K'_{\c B}$.  Finally, each horizontal segment of $\Gamma$
		either belongs to  $\partial \c B$, or
		traverses a primal double-bar $\dbar$ of exactly one of 
		$\om_1$ and $\om_2$.  Moreover, if this double-bar belongs to $\om_1$
		then it lies in a small primal loop of $\om_2$, and vice versa.  
		
		We write $\c D_\Gamma'$ for $(\c D_1\cup\c D_2)\sm \c D_\Gamma$
		and $\eta_1,\eta_2$ for the restrictions of $\om_1,\om_2$
		to $\Gamma\cup\c D'_\Gamma$, and $\gamma_1,\gamma_2$ for the 
		restrictions of $\om_1,\om_2$ to $\Gamma$.
		We also write $\gamma=\gamma_1\cup\gamma_2$ for the set of links on
		$\Gamma$ in either configuration.   The key claim is that  if
		we were to modify $\om_1$ by including the links of $\gamma_2$, then
		the number of loops would change by a term which depends only on $\eta_1$
		and $\eta_2$, i.e.\ only on the configuration outside $\c D_\Gamma$ 
		(and similarly for including the links of $\gamma_1$ in
		$\om_2$).  Thus, if we condition on $\eta_1,\eta_2$, 
		thereby regarding
		them as fixed, then this change is deterministic, and up to this
		deterministic change the number of loops in $\c D_\Gamma$  is counted
		according to the primal boundary condition.  In particular, this gives
		the same boundary condition for both 
		$\om_1$ and $\om_2$ so they have the same conditional distribution. 
		
		To make a  precise formulation of the above claim, define
		$\overline\gamma_k$, $k\in\{1,2\}$,
		by adding to $\gamma_k$ a double-bar at each
		horizontal segment of $\Gamma$ where it
		coincides with $\partial \c B$ and traverses a primal column,
		and write $\ol\gamma=\ol\gamma_1\cup\ol\gamma_2$.
		Then we have
		\be\label{eq loop claim}
		\ell(\om_k)=\ell(\om_k\cup\ol\gamma)
		-|\ol\gamma_{3-k}|, 
		\qquad k\in\{1,2\}.
		\ee
		To justify \eqref{eq loop claim}, for simplicity take $k=1$, and note
		that (due to our observations about $\Gamma$ above)
		any loop of $\om_1$ that intersects both $\c D_\Gamma$ and
		$\c D_\Gamma'$ is necessarily a small primal loop which traverses 
		some number $m\geq1$ of double-bars of $\ol\gamma_2$.   
		When adding the
		links of $\ol\gamma_2$ to $\om_1$, this small primal loop is 
		replaced by $m$ small primal loops.   Thus, each link of $\ol\gamma_2$
		contributes exactly one extra loop under the modification
		$\om_1\mapsto\om_1\cup\ol\gamma$.
		
		To make the rest rigorous 
		we use Mecke's formula, Lemma \ref{lem mecke}.   We can write
		\be\begin{split}
			\EE^\otimes_{n,u}[\ul X\one_U]=
			\frac{1}{Z_{\c D_1}Z_{\c D_2}}\sum_{r_1,r_2\geq 0}
			\EE_1\otimes\EE_1\Big[&
			\sum_{\substack{\eta_1\se\om_1\\|\eta_1|=r_1}}
			\sum_{\substack{\eta_2\se\om_2\\|\eta_2|=r_2}}
			\one\{\om_k\cap(\Gamma\cup\c D'_\Gamma)=\eta_k,\;k=1,2\}\\
			&\quad\one_U(\om_1,\om_2)
			\ul X(\om_1,\om_2)
			n^{\ell(\om_1)+\ell(\om_2)}
			\Big],
		\end{split}\ee
		where the last expectation $\EE_1\otimes\EE_1[\dotsb]$ can be written 
		as
		\be\begin{split}\label{eq Gamma ce}
			\int\dd\mu^{\odot r_1}(\eta_1) \int\dd\mu^{\odot r_2}(\eta_2)
			\EE_1\otimes\EE_1[&
			\one\{(\om_k\cup\eta_k)\cap(\Gamma\cup\c D'_\Gamma)=\eta_k,\;k=1,2\}
			\one_U(\om_1\cup\eta_1,\om_2\cup\eta_2)\\
			&\quad
			\ul X(\om_1\cup\eta_1,\om_2\cup\eta_2)
			n^{\ell(\om_1\cup\eta_1)+\ell(\om_2\cup\eta_2)}].
		\end{split}\ee
		In this expression, note that $\om_1,\om_2$ are configurations in 
		$\c D_\Gamma$ constrained to belong to the event
		$V$ that in $\om_k\cup\ol\gamma$,
		only small primal loops are adjacent to $\Gamma$. 
		Since $U$ equals the event that $\c K'_{\c B}\cap\c A=\es$,
		it depends only on $\eta_1,\eta_2$, and similarly $\ul X$
		depends only on $\om_1,\om_2$.  The weights 
		$\w(\om_k\cup\eta_k)$ factorize over $\om_k$ and $\eta_k$, 
		and \eqref{eq loop claim} can be written as
		\be
		\ell(\om_k\cup\eta_k)=\ell_{\c D_\Gamma}^1(\om_k)
		+\ell_{\c D'_\Gamma}^1(\eta_k)
		-|\ol\gamma_{3-k}|, 
		\qquad k\in\{1,2\},
		\ee
		where $\ell_{\c D_\Gamma}^1(\om_k)$ 
		and $\ell_{\c D'_\Gamma}^1(\eta_k)$ count the number of 
		loops 
		with primal boundary condition.  Taken
		together, this means that the expectation in \eqref{eq Gamma ce}
		can be factorized as
		\be\label{eq Gamma ce 2}
		\EE_1\otimes\EE_1[\ul X(\om_1,\om_2)
		n^{\ell_{\c D_\Gamma}^1(\om_1)+\ell_{\c D_\Gamma}^1(\om_2)}
		\one_V(\om_1)\one_V(\om_2)]
		\cdot F(\eta_1,\eta_2)
		\ee
		for some function $F$.  Recalling that
		$\ul X(\om_1,\om_2)=X(\om_1)-X(\om_2)$, 
		it follows that the expectation in \eqref{eq Gamma ce 2} is in fact
		identically $=0$.   Thus
		$\EE^\otimes_{n,u}[\ul X\one_U]=0$ as claimed, and Lemma \ref{lem TV} is proved.
	\end{proof}
	
	Before we turn to the proof of Theorem 
	\ref{qrepair:thm:convergence} we introduce
	the precise notion of Gibbs measures for the loop model.  
	Recall that we identify link-configurations $\omega$ with
	counting-measures on 
	$(\bb Z+\tfrac12)\times\bb R\times\{\cross,\dbar\}$.
	This applies both to configurations in infinite volume and
	to configurations in a bounded
	domain $\c D$.  In the latter case $\omega$ simply has no links 
	outside $\c D$ or traversing the boundary $\partial\c D$
	(recall that we have defined domains $\c D$ as open sets, so links on
	$\partial \c D$ by definition do not lie in $\c D$).
	
	Let $\c D\se\bb Z\times\bb R$ be a bounded 
	domain, not necessarily primal or dual.  Any link-configuration 
	$\tau$ in $(\bb Z\times\bb R)\sm\c D$ imposes a boundary condition on
	the loops inside $\c D$ as follows (see Figure \ref{fig gibbs} for an
	illustration). First,
	define the `horizontal boundary'
	$\partial_{\r h}\c D$ of $\c D$ as the set of points on
	$\partial \c D$ which are of the form $(x,t)$ with $x\in\bb Z$
	(these are necessarily on the `top and bottom' of $\c D$)
	and the `vertical boundary' $\partial_{\r v}\c D$ as the union of
	vertical intervals forming $\partial \c D$.
	To any link of $\tau$ on (i.e.\ crossing) the vertical boundary 
	$\partial_{\r v} \c D$, say at height $t$, corresponds 
	two points inside $\c D$ at heights $t\pm 0$, i.e.\ the two endpoints
	of the link in the domain.
	We define $\partial_\tau\c D$ as the collection of such 
	points together with the horizontal boundary $\partial_{\r h}\c D$.
	Then, the configuration $\tau$ defines a 
	\emph{partial pairing} of $\partial_{\tau}\c D$, where two points are
	paired if they are connected by a loop-segment of $\tau$ lying
	entirely outside $\c D$.  The pairing is only partial due to the
	possible existence of multiple unbounded segments.  
	We define a loop-measure on link-configurations in $\c D$
	by
	\be
	\PP_{\c D,n,u}^\tau(A)=\frac1{Z^{\tau}_{\c D,n,u}}
	\int \dd \PP_1(\omega) \; n^{\ell(\omega;\tau)},
	\ee
	where $\ell(\omega;\tau)$ is the number of loops in $\c D$
	counted according to the boundary condition above.
	This definition includes the cases of primal, dual
	and periodic boundary conditions
	\eqref{eq measure periodic} 
	by appropritate choice of $\tau$.
	
	\begin{figure}[ht]
		\centering
		\includegraphics[scale=1]{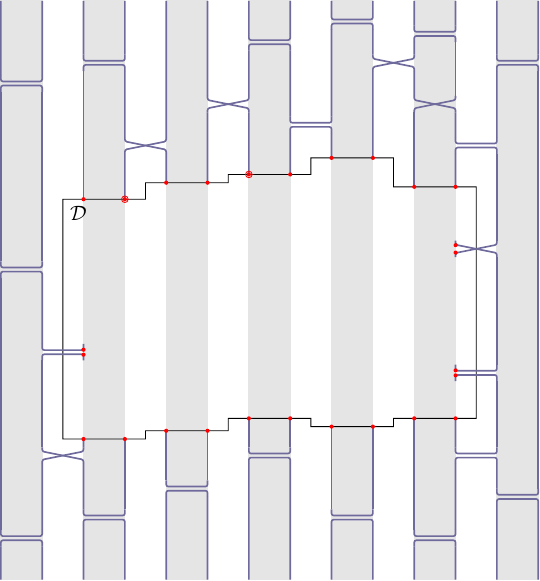}
		\caption{A domain $\c D$ and a link-configuration $\tau$ outside 
			$\c D$, including some links crossing $\partial_{\r v} \c D$.
			The configuration $\tau$ defines a
			partial pairing of $\partial_\tau\c D$, the latter illustrated using red
			dots.  Two points of $\partial_\tau\c D$, on the top boundary of 
			$\c D$ (highlighted), are unpaired.
		}\label{fig gibbs}
	\end{figure}

	Define $\c F_{\c D}$ and  $\c F_{\c D^c}$ as the $\sigma$-algebras of
	events depending on $\omega\cap\c D$ and on 
	$\omega\cap\c D^c$, respectively.  A probability
	measure $\PP$ on link-configurations in
	$\bb Z\times\bb R$ is called a \emph{Gibbs-measure}
	if for all bounded rectangular 
	domains $\c D\se\bb Z\times\bb R$,
	\be\label{eq gibbs}
	\PP(\cdot\mid \c F_{\c D^c})(\tau)=\PP_{\c D,n,u}^\tau(\cdot),
	\qquad\text{for $\PP$-a.e. }\tau.
	\ee
	We use a similar definition on partly infinite domains 
	$\bb Z\times[-\beta/2,\beta/2]$ (periodic in the second coordinate) 
	and $\{-K+1,\dots,L\}\times\bb R$.
	
	\begin{lemma}\label{lem Gibbs}
		Let $\c D_k$ be a sequence of domains
		with $\c D_k\nearrow \ZZ\times\RR$
		or $\bb Z\times[-\beta/2,\beta/2]$ or 
		$\{-K+1,\dots,L\}\times\bb R$.
		Let $\tau_k$ be a sequence of link-configurations,
		and let
		$\PP$ be  a subsequential limit of 
		$\PP^{\tau_k}_{\c D_k,n,u}$ as $k\to\oo$.
		If $\PP$ is supported on configurations with at most one infinite
		loop, then $\PP$ is a Gibbs measure. 
		
		In particular, if $\c D_k^\alpha$ is a sequence of primal (for $\alpha=1$)
		or dual (for $\alpha=2$) domains
		with $\c D_k^\alpha\nearrow \ZZ\times\RR$, and 
		$\Lambda_L=\{-L+1,\dots,L\}\subset\ZZ$ with 
		$L\in2\mathbb Z+\alpha$, then any subsequential limit $\PP$
		of $\PP^\alpha_{\c D^\alpha_k,n,u}$ or $\PP_{\Lambda_L,\beta,n,u}$
		is a Gibbs measure.
	\end{lemma}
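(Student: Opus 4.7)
The strategy is to verify the DLR equation \eqref{eq gibbs} for $\PP$ by passing the tautological finite-volume DLR identity for $\PP^{\tau_k}_{\c D_k, n, u}$ to the subsequential limit. For any bounded rectangular $\c E \se \c D_k$,
\begin{equation*}
\PP^{\tau_k}_{\c D_k, n, u}(\cdot \mid \c F_{\c E^c})(\om) = \PP^{\om'}_{\c E, n, u}(\cdot), \qquad \om' := (\om \cap (\c D_k \sm \c E)) \cup \tau_k,
\end{equation*}
because, given the partial pairing of $\partial_\tau\c E$ induced by $\om'$, the weight $n^{\ell(\om;\tau_k)}$ factorises into contributions from loops closed inside $\c E$ and loops not touching $\c E$.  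Equivalently, for bounded continuous local $f$ depending on $\om \cap \c E$ and bounded continuous $g$ depending on $\om \cap \c F^c$ (with $\c E \se \c F$ bounded),
\begin{equation*}
\EE^{\tau_k}_{\c D_k, n, u}[fg] = \EE^{\tau_k}_{\c D_k, n, u}[g \cdot \Psi^{(k)}(\om)], \qquad \Psi^{(k)}(\om) := \EE^{\om'}_{\c E, n, u}[f].
\end{equation*}
The task is to show this identity passes to the limit, yielding $\EE_\PP[fg] = \EE_\PP[g \cdot \EE^\om_{\c E, n, u}[f]]$, which by the usual density arguments characterises $\PP$ as a Gibbs measure.

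The key structural point (\emph{localisation of the specification}) is that $\Psi^{(k)}(\om)$ depends on $\om'$ only through the partial pairing it induces on the $\PP$-a.s.\ finite set $\partial_\tau\c E$.  For $M \ge 1$, let $\pi_M(\om)$ be the partial pairing obtained by tracing loop segments of $\om|_{\c E^c}$ only within the $M$-neighbourhood $\c F_M$ of $\c E$, treating unpaired endpoints by a fixed arbitrary convention (e.g.\ contributing no closed loop), and set $\tilde \Psi_M(\om) := \EE^{\pi_M(\om)}_{\c E, n, u}[f]$.  Then $\tilde \Psi_M$ is a bounded function of $\om|_{\c F_M \sm \c E}$, continuous in the $w^\#$-topology off the $\PP$-null set of configurations placing a Poisson point on $\partial\c F_M$.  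Under the at-most-one-infinite-loop hypothesis, $\pi_M(\om)$ stabilises $\PP$-a.s.\ as $M\to\infty$, and any two conventions for endpoints lying on the (unique) infinite loop alter the loop count by $O(1)$; hence $\tilde \Psi_M \to \EE^\om_{\c E, n, u}[f]$ $\PP$-a.s.

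For each fixed $M$, weak convergence along the subsequence gives $\EE^{\tau_k}_{\c D_k}[g \cdot \tilde \Psi_M] \to \EE_\PP[g \cdot \tilde \Psi_M]$.  Moreover $\Psi^{(k)} - \tilde \Psi_M$ is bounded and vanishes on the event $A_M$ that no loop of $\om'$ both meets $\partial \c E$ and escapes $\c F_M$.  The Portmanteau theorem applied to $A_M$ (which is essentially open in the $w^\#$-topology) gives $\liminf_k \PP^{\tau_k}_{\c D_k}(A_M) \ge \PP(A_M) \to 1$ as $M \to \infty$, where the last limit again uses that a.s.\ only finitely many loops touch $\partial\c E$ and only a possible infinite loop can escape every $\c F_M$.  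Sending $k\to\infty$ first and then $M\to\infty$ in the finite-volume DLR identity yields the claim.  The main obstacle throughout is precisely the handling of a potential infinite loop touching $\c E$: the hypothesis reduces the resulting boundary-condition ambiguity to a bounded convention-dependent perturbation that is absorbed into $\tilde \Psi_M$ for large $M$.

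For the ``in particular'' statement, it suffices to check that each limit $\PP^\alpha_{n,u}$ is supported on configurations with all loops finite, which vacuously satisfies the at-most-one-infinite-loop hypothesis.  This follows from Corollary \ref{component of a domain} applied to a bounded rectangle containing any given point: any loop of $\om$ is almost surely confined within a component of $\c D \sm \c P_\k(\om)$ of finite perimeter, and hence is itself finite.
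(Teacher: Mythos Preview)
Your approach parallels the paper's: pass the finite-volume DLR identity to the limit by localising the specification in a bounded window, using that under the at-most-one-infinite-loop hypothesis the outer pairing is eventually determined. The paper packages this via a random index $k_0(\tau)$ after which the partial pairing of $\partial_\tau\c D$ is fixed by $\tau|_{\c D_k}$, and then verifies $\EE[\PP^\tau_{\c D}(A)X(\tau)]=\EE[\one_A X(\tau)]$ for all local $X$; your truncation $\tilde\Psi_M$ plays the same role as the paper's conditioning on $\c F_{\c D_k\sm\c D}$.

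There is, however, a real gap in your Portmanteau step. The claim $\PP(A_M)\to1$ is false whenever the infinite loop meets $\partial\c E$ with positive $\PP$-probability: on that event a segment from $\partial\c E$ escapes every $\c F_M$, so $\PP(A_M)$ is bounded away from $1$ for all $M$. Your remark that ``any two conventions\dots alter the loop count by $O(1)$'' is in the right direction but not sharp enough to close this. What is actually needed (and what the paper uses, implicitly, when it says the partial pairing ``is determined within $\c D_k$'') is that when \emph{exactly two} boundary points carry escaping segments, those two must be paired to each other by $\om'$ (all other boundary points being already matched inside $\c F_M$), and pairing them versus leaving them unpaired changes $\ell(\cdot\,;\tau)$ by the \emph{same} additive constant $+1$ for every inside configuration, since the two points are always the ends of a single open strand. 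Hence the normalised specifications coincide exactly, not merely approximately. Replace $A_M$ by $B_M=\{\text{at most two boundary points have segments escaping }\c F_M\}$: on $B_M\cap\{\c F_M\subset\c D_k\}$ one then has $\Psi^{(k)}=\tilde\Psi_M$ exactly, and $\PP(B_M)\to1$ does follow from the hypothesis. With this modification the rest of your argument goes through.

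For the ``in particular'' clause the paper invokes Theorem~\ref{thm hole perimeter} directly rather than Corollary~\ref{component of a domain}, but either route shows there are $\PP$-a.s.\ no infinite loops.
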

	\begin{proof}
		The second claim follows from the first, since
		Theorem \ref{thm hole perimeter} implies that
		any subsequential limit of $\PP^\alpha_{\c D^\alpha_k,n,u}$ 
		or $\PP_{\Lambda_L,\beta,n,u}$
		has \emph{no} infinite loop, almost surely.
		Hence we focus on the first claim.
		
		Fix a bounded domain $\c D$,
		let $A\in\c F_{\c D}$ and 
		let $\c D_m$ and $\tau_m$ be such that 
		$\PP^{\tau_m}_{\c D_m}\Rightarrow\PP$.
		The key observation is that, for configurations 
		$\tau$ with at most one infinite loop, there is a number
		$k_0(\tau)<\oo$ such that
		\be\label{eq gibbs key claim}
		\PP^{\tau_m}_{\c D_m}(A\mid \c F_{\c D_k\sm\c D})(\tau)
		=\PP^\tau_{\c D}(A),
		\qquad\text{for }m>k>k_0(\tau).
		\ee
		Indeed, for $k$ large enough, any finite loop-segment
		connecting points of $\partial_{\bb Z}\c D$ is entirely contained in
		$\c D_k$, leaving at most two points which must then both lie on the unique
		infinite loop.  It follows that for such $k$,
		the partial
		pairing of $\partial_{\bb Z}\c D$ defined by $\tau$ is determined
		within $\c D_k$, which implies \eqref{eq gibbs key claim}.
		
		Using \eqref{eq gibbs key claim}, we will show that
		\be\label{eq ce conv}
		\PP(A\mid \c F_{\c D^c})(\tau)
		=\PP^\tau_{\c D}(A),
		\qquad \PP\text{-almost surely in }\tau,
		\ee
		which is indeed the condition \eqref{eq gibbs}
		for $\PP$ to be a Gibbs measure.
		To see  \eqref{eq ce conv}, fix $\ell>0$ and
		let $X=X(\tau)$ be 
		bounded and $\c F_{\c D_\ell\sm\c D}$-measurable.  Then
		for any $k>\ell$,
		\be
		\EE[\PP^\tau_{\c D}(A)X(\tau)]=
		\EE[\PP^\tau_{\c D}(A)X(\tau) \one\{k>k_0(\tau)\}]+
		\EE[\PP^\tau_{\c D}(A)X(\tau) \one\{k\leq k_0(\tau)\}].
		\ee
		(Here and in what follows the outermost
		expectation is over the configuration $\tau$.)
		The second term goes to 0 as $k\to\oo$, while
		by the assumption $\PP^{\tau_m}_{\c D_m}\Rightarrow\PP$, 
		the first term satisfies
		\[\begin{split}
			\EE[\PP^\tau_{\c D}(A)X(\tau) \one\{k>k_0(\tau)\}]
			&=
			\lim_{m\to\oo}
			\EE^{\tau_m}_{\c D_m}[\PP^\tau_{\c D}(A)X(\tau)\one\{k>k_0(\tau)\}]\\
			&=\lim_{m\to\oo}
			\EE^{\tau_m}_{\c D_m}[
			\PP^{\tau_m}_{\c D_m}(A\mid \c F_{\c D_k\sm\c D})(\tau)
			X(\tau)\one\{k>k_0(\tau)\}],
			\quad\text{ by \eqref{eq gibbs key claim}},\\
			&=\lim_{m\to\oo}
			\EE^{\tau_m}_{\c D_m}[
			\PP^{\tau_m}_{\c D_m}(A\mid \c F_{\c D_k\sm\c D})(\tau)
			X(\tau)]\\
			&\quad\qquad-\EE^{\tau_m}_{\c D_m}[
			\PP^{\tau_m}_{\c D_m}(A\mid \c F_{\c D_k\sm\c D})(\tau)
			X(\tau)\one\{k\leq k_0(\tau)\}]\\
			&=\lim_{m\to\oo}
			\EE^{\tau_m}_{\c D_m}[\one_A X(\tau)]+o(1)\\
			&=\EE[\one_A X(\tau)]+o(1).
		\end{split}\]
		Here the $o(1)$-term vanishes as $k\to\oo$ and
		we used the 
		$\c F_{\c D_k\sm\c D}$-measurability of $X(\tau)$.
		Thus $\EE[\PP^\tau_{\c D}(A)X(\tau)]=\EE[\one_A X(\tau)]$
		for all bounded and
		$\c F_{\c D_\ell\sm\c D}$-measurable $X$, for all $\ell>0$,
		hence the same is true for all $\c F_{\c D^c}$-measurable $X$
		(by the $\pi$-$\lambda$-theorem).
		Since $\PP^\tau_{\c D}(A)$ is $\c F_{\c D^c}$-measurable, 
		\eqref{eq ce conv} follows. \end{proof}
	

	\begin{proof}[Proof of Theorem \ref{qrepair:thm:convergence}]
		We focus on the case $\alpha=1$ as the case $\alpha=2$ is the same.  
		Recall that we work on the measurable space
		$(\c M^\#, \c B(\c M^\#))$ 
		where $\c M^\#$ is the set of boundedly finite measures on
		$(\ZZ+\tfrac12)\times\RR\times\{\cross,\dbar\}$
		and $\c B(\c M^\#)$ the natural  Borel $\sigma$-algebra.
		We first note that our collections of measures 
		$\PP^1_{\c D_k,n,u}$ or $\PP_{\Lambda_L,\beta,n,u}$
		are uniformly tight, indeed by 
		\cite[Proposition 11.1.VI]{daley-verejones-2} it suffices to check that
		given any compact set $\c K\se (\ZZ+\tfrac12)\times\RR$
		and any $\eps>0$, the probability that $\c K$ contains more than $M$
		links is uniformly $<\eps$ for $M$ large enough, which in our
		case is obvious since our measures are stochastically dominated by
		Poisson processes.  Thus, it suffices to establish uniqueness of
		subsequential limits, for which in turn it suffices
		by \cite[Corollary 9.2.IV]{daley-verejones-2}
		to establish that
		the finite-dimensional marginals are uniquely determined.
		In the  case of primal domains $\c D_k$
		and the measures $\PP^1_{\c D_k,n,u}$, this is an immediate
		consequence of Lemma \ref{lem TV}.  The same result
		moreover implies that the limit does not depend on the choice of
		sequence of primal domains.
		The limits are Gibbs measures by Lemma \ref{lem Gibbs}.

		For the case of $\PP^\p_{\Lambda_L,\beta,n,u}$
		when $\beta\to\oo$ followed by $L\to\oo$ 
		(with $L\in 2\bb Z+1$), Lemma \ref{lem TV} is not immediately
		applicable (Mecke's formula is not available after taking 
		$\beta\to\oo$) 
		so we argue slightly differently.  
		Let $\c A$ be an arbitrary bounded domain and
		let $\PP_{L,\oo}=\PP^\p_{\Lambda_L,\oo,n,u}$ be a
		subsequential limit when $\beta\to\oo$, where $L$ is large enough that
		$\c A\se \c R_L:=[-L/2,L/2]^2\cap (\ZZ\times\RR)$.
		Let $U_L$ be the event that 
		$\c R_L$ is surrounded by a circuit of small
		primal loops all of whose points are outside $\c R_L$ and at vertical
		height at most $\pm L$.
		By Corollary \ref{component of a domain} 
		$\PP_{L,\oo}(U_L^c)$ decays exponentially in $L$
		(with a polynomial prefactor). 
		On $U_L$ we let $\gamma$ be the outermost choice of
		such a circuit of loops (which can be found by exploring from 
		$[-L,L]^2$ inwards), we let $\xi$ be the links on $\gamma$, and we
		let $\c D_\xi$ be the primal domain containing $\c A$ which is
		delimited by $\gamma$.
		Since $\PP_{L,\oo}$  is a Gibbs-measure, by Lemma 
		\ref{lem Gibbs}, the conditional distribution in
		$\c D_{\xi}$ given $\xi$ and the configuration outside
		$\c D_{\xi}$ is  $\PP^1_{\c D_\xi,n,u}$.
		See Figure \ref{fig circuit beta=oo}.
		
		\begin{figure}[ht]
			\centering
			\includegraphics[scale=.9]{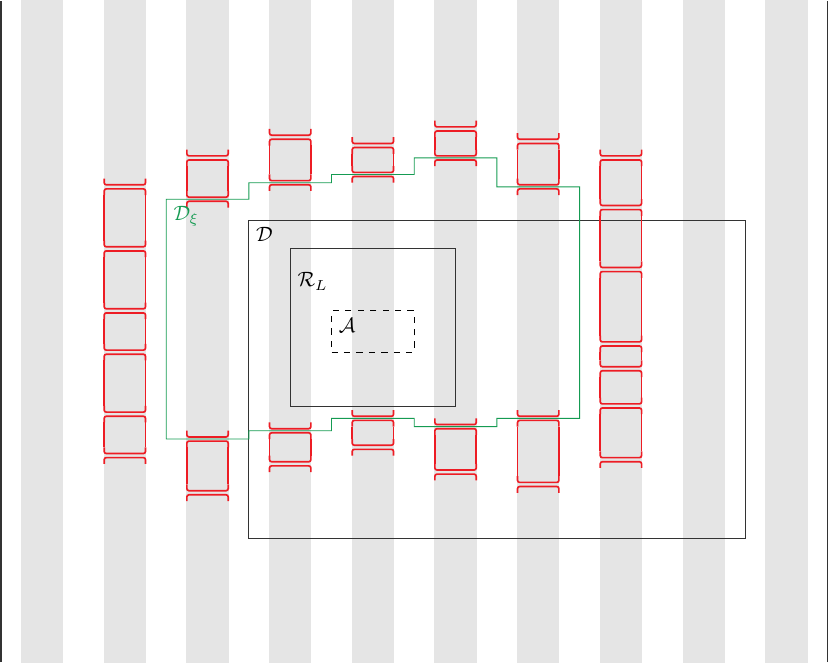}
			\caption{
				Illustration of part of a sample of $\PP_{L,\oo}$.
				The rectangle $\c R_L\supseteq \c A$ is exponentially 
				likely (in $L$)  to be surrounded by a circuit of small primal loops, and
				since $\PP_{L,\oo}$ is a Gibbs measure, the conditional distribution
				inside that circuit is $\PP^1_{\c D_\xi,n,u}$.  The marginal
				distribution inside $\c A$ is then exponentially close to that of 
				$\PP^1_{\c D,n,u}$ for any other primal domain $\c D$ 
				containing $\c R_L$.
			}\label{fig circuit beta=oo}
		\end{figure}
		
		Now let  $\c D$ be any   primal domain containing $\c R_L$. 
		For any event $A$  depending only on the 
		configuration of links in $\c A$, by Lemma \ref{lem TV}
		with $\c D_1=\c D_\xi$, $\c D_2=\c D$, and
		$\c B=\c D_1\cap\c D_2$ we have for some $C>0$ that
		\be
		|\PP_{\c D_\xi,n,u}^1(A)-\PP_{\c D,n,u}^1(A)|
		\leq \e{-CL}.
		\ee
		Then
		\be\begin{split}
			|\PP_{L,\oo}(A)-\PP_{\c D,n,u}^1(A)|&\leq
			\PP_{L,\oo}(U_L^c)+
			|\EE_{L,\oo}[\PP_{L,\oo}(A\mid\c F_{\c D_\xi^c})\one_U]-\PP_{\c D,n,u}^1(A)|=
			\\
			&\leq \PP_{L,\oo}(U_L^c)+
			\EE_{L,\oo}[|\PP_{\c D_\xi,n,n}^1(A)\one_{U_L}
			-\PP_{\c D,n,u}^1(A)|]\\
			&\leq \e{-C' L}
		\end{split}\ee
		for some $C'>0$.  
		Letting $L\to\oo$ 
		this gives that any subsequential
		limit as $L\to\oo$ coincides with the limit obtained above using
		primal domains $\c D_k\nearrow\bb Z\times\bb R$.
		A similar argument works for the case when $\beta,L\to\oo$
		simultaneously and for the case of the domains
		$\c D_{L,\beta}$ with any order
		of limits.

		The $2\ZZ\times\RR$-invariance follows from the independence of the
		choice of domains, and the fact that
		$\tau_{(1,0)}\PP^\alpha_{n,u}=\PP^{3-\alpha}_{n,u}$ is clear.  
		Theorem \ref{thm hole perimeter} extends to the infinite volume measure
		$\PP^1_{n,u}$ to show that it is supported on configurations with no
		infinite clusters of $\c E_1=\c P^c$ where $\c P$ is the union of
		unbounded components of small primal loops. 
		The corresponding statement follows for
		$\PP^2_{n,u}$, and it also follows that $\PP^1_{n,u}$ and
		$\PP^2_{n,u}$ are distinct.  
		
		The proof that $\PP^1_{n,u}$ and $\PP^2_{n,u}$ are ergodic is very
		similar to the proof of decay of correlations in 
		Theorem \ref{thm:dimersation} so we only give an outline,
		and we focus on the case $\PP^1_{n,u}$.    Let 
		$\c D_1$ and $\c D_2$ be disjoint domains, which we think of as far
		apart, let $A\in\c F_{\c D_1}$, $B\in\c F_{\c D_2}$,
		and let $\c D$ be a primal domain containing both 
		$\c D_1$ and $\c D_2$.
		The argument in Theorem \ref{thm:dimersation} shows that,  
		under $\PP^1_{\c D,n,u}$, the domains $\c D_1$ and $\c D_2$ 
		are very likely to be separated by a circuit of small primal loops.
		It follows that 
		$|\PP^1_{\c D,n,u}(A\cap B)-\PP^1_{\c D,n,u}(A) \PP^1_{\c D,n,u}(B)|$
		decays exponetially in the distance between 
		$\c D_1$ and $\c D_2$, uniformly in $\c D$.
		Thus $\PP^1_{n,u}$ is mixing under $2\ZZ\times\RR$-shifts,
		\be
		\lim_{|k|+|t|\to\oo} 
		\PP^1_{n,u}(A\cap \tau_{(2k,t)}^{-1} B)=
		\PP^1_{n,u}(A) \PP^1_{n,u}(B),
		\qquad \text{for all } A,B\in\c F,
		\ee
		and hence ergodic.
	\end{proof}

\end{document}